\def\Pr{\textsf{Pr}}
\providecommand{\ignore}[1]{}
\newcommand{\tail}{\mathsf{tail}}
\newcommand{\maxsgn}{\mathsf{Max\pm}}
\newcommand{\calA}{\mathcal{A}}
\newcommand{\calD}{\mathcal{D}}
\newcommand{\calE}{\mathcal{E}}
\newcommand{\eps}{\varepsilon}
\newcommand{\sgn}{\mathrm{sgn}}
\newcommand{\Ex}{\mathop{\mathbb{E}}}
\newcommand{\Otilde}{\widetilde{O}}
\newcommand{\Omegatilde}{\widetilde{\Omega}}
\newcommand{\Lap}{\mathrm{Lap}}
\newcommand{\calB}{\mathcal{B}}
\renewcommand{\v}{\boldsymbol{v}}
\newcommand{\bro}{\boldsymbol{\rho}}
\newcommand{\collected}{{\rm Collected}}
\newcommand{\1}{\mathbbm{1}}
\newcommand{\remove}[1]{}
\newcommand{\msnote}[1]{{\bf{\color{blue}[Moshe: #1]}}}
\newcommand{\ecnote}[1]{{\bf{\color{purple}[Edith: #1]}}}
\DeclareMathSymbol{\N}{\mathbin}{AMSb}{"4E}
\DeclareMathSymbol{\Z}{\mathbin}{AMSb}{"5A}
\DeclareMathSymbol{\R}{\mathbin}{AMSb}{"52}
\DeclareMathSymbol{\Q}{\mathbin}{AMSb}{"51}
\DeclareMathSymbol{\erert}{\mathbin}{AMSb}{"50}
\DeclareMathSymbol{\I}{\mathbin}{AMSb}{"49}
\DeclareMathSymbol{\C}{\mathbin}{AMSb}{"43}
\DeclareMathSymbol{\E}{\mathbin}{AMSb}{"45}
\def\countsketch{\text{\texttt{CountSketch}}}
\def\sketch{\text{\texttt{Sketch}}}
\def\Bcountsketch{\text{\texttt{BCountSketch}}}
\def\ThresholdMonitor{\text{\texttt{ThresholdMonitor}}}
\def\FastQueryVariant{\text{\texttt{FastQueryVariant}}}
\def\FastQueryHH{\text{\texttt{FastQueryHH}}}
\def\argmax{\mbox{\rm argmax}}
\newcommand{\sign}		{{\rm sign}}
\newcommand{\Bin}		{{\rm Bin}}
\newcommand{\Var}		{{\rm Var}}
\newcommand{\Supp}		{{\rm Supp}}
\newcommand{\RW}		{{\rm RW}}
\newcommand{\BLRW}		{{\rm BLRW}}
\newtheorem{thm}{Theorem}[section]
\newtheorem{theorem}{Theorem}[section]
\newtheorem{lemma}[thm]{Lemma}
\newtheorem{claim}[thm]{Claim}
\newtheorem{remark}[thm]{Remark}
\newtheorem{definition}[thm]{Definition}
\newtheorem{corollary}[thm]{ Corollary}
\newtheorem{question}[thm]{Question}
\begin{document}

\title{On the Robustness of CountSketch to Adaptive Inputs} 

\author{{\normalfont Edith Cohen}\thanks{Google Research and Tel Aviv University. \texttt{edith@cohenwang.com}.} \and Xin Lyu\thanks{UC Berkeley. \texttt{lyuxin1999@gmail.com}.} \and Jelani Nelson\thanks{UC Berkeley and Google Research. \texttt{minilek@alum.mit.edu}.} \and Tam\'{a}s Sarl\'{o}s\thanks{Google Research. \texttt{stamas@google.com}.} \and Moshe Shechner\thanks{Tel Aviv University. \texttt{moshe.shechner@gmail.com}. Partially supported by the Israel Science Foundation (grant 1871/19).} \and Uri Stemmer\thanks{Tel Aviv University and Google Research. \texttt{u@uri.co.il}. Partially supported by the Israel Science Foundation (grant 1871/19)
and by Len Blavatnik and the Blavatnik Family foundation.}}

\maketitle

 \begin{abstract}
\texttt{CountSketch} is a popular dimensionality reduction technique that maps vectors to a lower dimension using
randomized linear measurements.  The sketch supports recovering $\ell_2$-heavy hitters of a vector (entries with $v[i]^2 \geq \frac{1}{k}\|\boldsymbol{v}\|^2_2$). We study the robustness of the sketch in {\em adaptive} settings where input vectors may depend on the output from prior inputs.  Adaptive settings arise in processes with feedback or with adversarial attacks.  We show that the classic estimator is not robust, and can be attacked with a number of queries of the order of the sketch size.  We propose a robust estimator (for a slightly modified sketch) that allows for quadratic number of queries in the sketch size, which is an improvement factor of $\sqrt{k}$ (for $k$ heavy hitters) over prior work. 
 \end{abstract}

\section{Introduction}
\ignore{Prior version:
Sketching and streaming algorithms are often analyzed under the assumption that their internal randomness is independent of their inputs. This assumption, however, is not always reasonable. For example, consider a large system in which a sketching algorithm is used to analyze data coming from one part of the system while answering queries generated by another part of the system, but these (supposedly) different parts of the system are connected via a feedback loop.  In such a case, it is no longer true that the inputs are generated independently of algorithm's randomness (as the inputs depend on previous outputs which depend on the internal randomness) and hence classical algorithms might fail to provide meaningful utility guarantees. 

This motivated a growing interest in designing {\em robust algorithms} which maintain utility even when their inputs are chosen adaptively, possibly as a function of their previous outputs. Works in this vein include \cite{BenEliezerJWY21,HassidimKMMS20,WoodruffZ21,AttiasCSS21,BEO21}. However, the resulting (robust) algorithms are generally significantly less efficient then their classical counterparts. Furthermore, while the analyses of classical algorithms do not seem to carry over to the robust setting, so far we do not have any example showing that they are not robust (i.e,. for all we know it might only be the analysis that breaks).\footnote{It is worth mentioning that \cite{BenEliezerJWY21} showed an attack on a simplified version of the classical AMS sketch. However, their attack does not apply to the AMS sketch itself.}  This naturally raises the following question.

\begin{question}
Could it be that classical sketching and streaming algorithms are in fact robust (as is)?
\end{question}

We believe that understanding this question is crucial for justifying the work on robust algorithms.}

Algorithms are often analyzed and used under the assumption that their internal randomness is independent of their inputs. This assumption, however, is not always reasonable. For example, consider a large system where there is a feedback loop between inputs and outputs or an explicit adversarial attack aimed at constructing inputs on which the system fails.  In such a case, it is no longer true that the inputs are generated independently of the algorithm's randomness (as the inputs depend on previous outputs which depend on the internal randomness) and hence the algorithms might fail to provide utility.

This motivated a growing interest in understanding the performance of algorithms when the inputs are chosen {\em adaptively}, possibly as a function of their previous outputs, and in designing {\em robust algorithms} which provide utility guarantees even when their inputs are chosen adaptively. Works in this vein
span multiple areas, including machine learning~\cite{szegedy2013intriguing,goodfellow2014explaining,athalye2018synthesizing,papernot2017practical}, adaptive data analysis~\cite{Freedman:1983,Ioannidis:2005,FreedmanParadox:2009,DworkFHPRR15}, dynamic graph algorithms~\cite{ShiloachEven:JACM1981,gawrychowskiMW:ICALP2020,GutenbergPW:SODA2020,Wajc:STOC2020}, and sketching and streaming algorithms~\cite{BenEliezerJWY21,HassidimKMMS20,WoodruffZ21,AttiasCSS21,BEO21}.
However, the resulting (robust) algorithms tend to be significantly less efficient then their classical counterparts. Furthermore, while the analyses of classical algorithms do not seem to carry over to the robust setting, in many cases we do not have any example showing that they are not robust (i.e,. for all we know it might only be the analysis that breaks).\footnote{It is worth mentioning that \cite{BenEliezerJWY21} showed an attack on a simplified version of the classical AMS sketch (with a weaker estimator). However, their attack does not apply with the classic estimator.
\cite{HardtW:STOC2013} constructed an attack on linear sketches but the size of the attack is far from respective upper bounds.}
This naturally raises the question of 
quantifying more precisely the robustness of algorithms.

Driven by this question, in this work we set out to explore the robustness properties of the classical \texttt{CountSketch} algorithm~\cite{CharikarCFC:2002}, related to {\em feature hashing}~\cite{MoodyD:NIPS1989} in the machine learning literature.
\texttt{CountSketch} is a popular dimensionality reduction technique that maps vectors to a lower-dimension using randomized linear measurements. The method is textbook material and has found many applications in machine learning and data analysis~\cite{WeinbergerDLSA:ICML2009,pmlr-v5-shi09a,ChenWTWC:ICML2015,KDD-2016-ChenWTWC,AmiraliSLDSB:ICML2018,SpringKMA:ICML2019,AhleKKPVWZ:SODA2020,CPW:NEURIPS2020}. 
The sketch is often a component of large ML models or data analytics systems and its robustness may impact the overall robustness of the system.

Operationally, \texttt{CountSketch} is parametrized by $(n,d,b)$, where $n$ is the dimension of input vectors, $d$ is the size of the sketch, and $b$ is a parameter controlling the accuracy of the sketch (referred to as the ``width'' of the sketch). It is applied by initializing $d/b$ pairs of random hash functions $\boldsymbol{\rho}=\left((h_1,s_1),\dots,(h_{d/b},s_{d/b})\right)$ where  $h_j:[n]\rightarrow[b]$ and $s_j:[n]\rightarrow\{\pm1\}$. We think of $\boldsymbol{\rho}$ as defining $\frac{d}{b}\times b$ ``buckets'' ($b$ buckets for every pair of hash functions). To sketch a vector $\boldsymbol{v}\in\mathbb{R}^n$: For every $i\in[n]$ and for every $j\in[d/b]$, add $s_j(i)\cdot\boldsymbol{v}[i]$ to the bucket indexed by $\left( j , h_j(i)\right)$. The resulting collection of $d$ summations (the values of the buckets) is the sketch, which we denote as $\sketch_{\boldsymbol{\rho}}(\boldsymbol{v})$. That is, $\sketch_{\boldsymbol{\rho}}(\boldsymbol{v}):=
\left(c_{(j,w)}\right)_{j\in[d/b],w\in[b]}$, where
$$
c_{j,w}:=\sum_{i\in[n]: h_j(i)=w} s_j(i)\cdot\boldsymbol{v}[i].
$$

In applications, the desired task 
(e.g., recovering the set of heavy hitter entries of $\boldsymbol{v}$ and approximate values, reconstructing an approximation $\hat{\boldsymbol{v}}$ of the input vector, or approximating the inner product of two vectors) is obtained by applying an {\em estimator} $M$ to the sketch. Note that $M$ does not access the original vector. The most commonly used estimator in the context of \texttt{CountSketch} is the {\em median estimator}, with which the $\ell$th enrty of the original vector is estimated as
$
{\rm Median}_{j\in[d/b]}\left\{
s_j(\ell)\cdot c_{\left(j,h_j(\ell)\right)}
\right\}.
$ 
To intuit this estimator, observe that for every $j\in[d/b]$ we have that
\begin{align*}
s_j(\ell)\cdot c_{\left(j,h_j(\ell)\right)}=
s_j(\ell)\cdot \left(\sum_{i\in[n]:\; h_j(i)=h_j(\ell)} s_j(i)\cdot\boldsymbol{v}[i]\right) 
=
\boldsymbol{v}[\ell]+\left(
\sum_{i\neq\ell:\; h_j(i)=h_j(\ell)} s_j(\ell)\cdot s_j(i)\cdot\boldsymbol{v}[i]\right)
\end{align*}
is an unbiased estimator for $\boldsymbol{v}[\ell]$. Hence, intuitively, the median of these values is a good estimate.

Our focus will be on the task of recovering the $\ell_2$-heavy hitters. An $\ell_2$-heavy hitter with parameter $k$ of a vector $\boldsymbol{v}$ is defined to be an entry $i$ such that $\boldsymbol{v}[i]^2 > \frac{1}{k} \|\boldsymbol{v}_{\tail[k]} \|_2^2$, where $\boldsymbol{v}_{\tail[k]}$, 
the $k$-tail of $\boldsymbol{v}$, is a vector obtained from $\boldsymbol{v}$ by replacing its $k$ largest entries in magnitude with $0$. The {\em heavy-hitters problem} is to return a set of $O(k)$ keys that includes all heavy hitters.
The output is correct when all heavy hitters are reported. In the non-adaptive setting, this problem can be solved using \texttt{CountSketch} with $d=O(k\log n)$ and $b=O(k)$, by returning the $O(k)$ keys with the largest estimated magnitudes (via the median estimator).

As we mentioned, in this work we are interested in the {\em adaptive setting}, where the same initialization (specified by $\boldsymbol{\rho}$) is used to sketch different inputs or to maintain a sketch as the input is updated. In its most basic form, the setting is modeled using the following game between a sketching algorithm $\sketch$ with an estimator $M$ (not necessarily \texttt{CountSketch} and the median estimator) and an \texttt{Analyst}. At the beginning, we sample the initialization randomness of $\sketch$, denoted as $\boldsymbol{\rho}$. Then, the game proceeds in $m$ rounds, where in round $q\in[m]$:
\begin{itemize}
    \item The \texttt{Analyst} chooses a vector  $\v_q\in\R^n$, which can depend, in particular, on all previous outputs of $M$.
    \item $M(\sketch_{\bro}(\v_q))$ outputs a set $H_q$ of $O(k)$ keys, which is given to the \texttt{Analyst}.
\end{itemize}

We say that $(\sketch,M)$ is {\em robust for $m$ rounds} if for any \texttt{Analyst}, with high probability, for every $q\in[m]$ it holds that $H_q$ contains all the heavy hitters of $\v_q$. The focus is on designing robust sketches of size as small as possible (as a function of $n,k,m$). We remark that it is trivial to design robust sketches with size linear with $m$, by simply duplicating a classical sketch $m$ times and using every copy of the classical sketch to answer one query. Therefore, if a sketch is only robust to a number of rounds that scales linearly with its size, then we simply say that this sketch is {\em non-robust}.

\subsection{Our Contributions}

We first briefly state our main contributions. We elaborate on these afterwards.

\begin{enumerate}
    \item We show that \texttt{CountSketch} together with the standard median estimator are non-robust. We achieve this by designing an attack that poses $O(d/b)$ queries to \texttt{CountSketch} such that, w.h.p.,\ the answer given to the last query is wrong. This constitutes the first result showing that a popular sketching algorithm is non-robust. We complement our analysis with empirical evaluation, showing that our attack is practical.
    
    \item We introduce (see Section~\ref{sec:signalign}) a novel estimator (instead of the median estimator), which we refer to as the {\em sign-alignment estimator}, that reports keys as heavy hitters based on the {\em signs} of their corresponding buckets. This new estimator is natural, and as we show, has comparable performances to the median estimator even in the non-adaptive setting. We believe that this new estimator can be of independent interest.
    
    \item We design a noisy version of our sign-alignment estimator  (utilizing techniques from differential privacy), and design a variant of \texttt{CountSketch}, which we call 
    \texttt{BucketCountSketch}, or 
    \texttt{BCountSketch} in short. We show that \texttt{BCountSketch} together with our noisy estimator are robust for $m$ rounds using sketch size roughly $\approx \sqrt{m} \cdot k$. This improves over the previous state-of-the-art robust algorithm for the heavy hitters problem of \cite{HassidimKMMS20}, which has size $\approx \sqrt{m} \cdot k^{1.5}$.
    
     We show that, in a sense, the additional ingredients from differential privacy in the robust estimator are necessary:  $\Bcountsketch$ and $\countsketch$ with a basic version of the sign-alignment estimator are non-robust. Moreover, our analysis of the robust noisy version is tight in that it can be attacked using $\Otilde(m^2)$ rounds.
    
    \item Extensions: We show that our algorithms allow for robustly reporting estimates for the weight of the identified heavy-hitters. We also refine our robust algorithms so that robustness is guaranteed for longer input sequences in some situations. In particular, this is beneficial in a streaming setting, where the input vector changes gradually and we can only account for changes in output.
\end{enumerate}

\subsubsection{Our attack on \texttt{CountSketch} with the median estimator}

We next provide a simplified description of our attack on \texttt{CountSketch} with the median estimator. We emphasize that our attack is much more general, applicable to a broader family of sketches and estimators (see Appendix~\ref{sec:attackstrategy}). Recall that on every iteration, when given a vector $\v$, we have that $M(\sketch_{\bro}(\v))$ outputs a set of $k'=O(k)$ keys. In our attack, we pose a sequence of $m=O(d/b)$ vectors $\v_1,\dots,\v_m$, where all of these vectors contain keys 1 and 2 as ``largish heavy hitters'' (of equal value), as well as a fixed set of $k'-1$ ``super heavy hitters'', say keys  $3,4,\dots,k'+1$. In addition, each $\v_q$ contains a disjoint set of keys (say $T$ keys) with random $\{\pm1\}$ values, which we refer to as a {\em random tail}. So each of these vectors contains $k'-1$ ``super heavy hitters'', two ``largish heavy hitters'', and random noise. We feed each of these vectors to \texttt{CountSketch}. As \texttt{CountSketch} reports exactly $k'$ keys, in every iteration we expect all of the ``super heavy'' elements to be reported, together with {\em one} of the ``largish heavy hitters'' (as a function of the random tail). Let us denote by $\collected\subseteq[m]$ the subset of all iterations during which key $1$ was {\em not} reported. (We refer to a tail used in iteration $q\in\collected$ as a {\em collected tail}.) The fact that key 2 was reported over key 1 during these iterations means that our random tails introduced some bias to the weight estimates of key 2 over the weight estimates of key 1. We show that, conditioned on a tail being collected, in expectation, the random tail introduces {\em negative} bias to the weight estimate of key 1, and {\em positive} bias to the weight estimate of key 2. At the end of this process, we construct a final query vector $\v_{\rm final}$ containing key 1 as a ``super heavy hitter'' and keys $2,3,\dots,k'+1$ as ''largish heavy hitters'', together with the sum of all the collected tails. We show that, w.h.p.,\ the biases we generate ``add  up'', such that key 1 {\em will not get reported as heavy} when querying $\v_{\rm final}$, despite being the dominant coordinate in $\v_{\rm final}$ by a large margin.

\begin{remark}
An important feature of our attack is that it works even when $\countsketch$ is only used to report a set of heavy keys, {\em without their estimated values}. The attack can be somewhat simplified in case estimated values are reported.  
\end{remark}

\begin{remark}
A natural question to ask regards the implications of our attack on the robustness of routine applications of $\countsketch$. We argue that elements of the attack can occur in routine settings.
The attack is "blackbox:" Only uses information from the output. It collects and combines inputs with the same heavy key, which is a natural simple feedback process. In practice this can correspond to examples with the same label or to related traffic patterns that might load the same network component. Our attack combines components of the input that contribute to a  "misclassification."  Moreover, adaptivity of the input is very lightly used: The final query $\v_{\rm final}$ on which the sketch fails is the only one that depends on prior inputs.  This suggests that an adversarial input can be constructed after simply monitoring the "normal operation" of the algorithm on unintentional inputs.  On the other hand, our attack uses "borderline" inputs in order to construct an adversarial input.  This suggests that the sketch would be more robust in settings that somehow restrict all inputs (in a deterministic way) to be far from the decision boundary -- with all keys being either very heavy or very far from being heavy.  
\end{remark}

We complement our theoretical analysis of this attack with empirical results, showing that the attack is feasible with a small number of queries.
Figure~\ref{attack:plot} reports simulation results of the attack on the median estimator.  The left plot shows the bias-to-noise ratio 
(the median value of the tail contributions scaled by its standard deviation) as a function of the number of attack rounds for the two special keys (which accumulate positive and negative bias) and another key (which remains unbiased).  The left plot visualizes the square-root relation of the bias-to-noise ratio  with the number of rounds. 
A sketch provides good estimates for a key when its weight is larger than the "noise" on its buckets that is induced by the "tail" of the vector. In this case, the key is a heavy hitter.  The attack is thus effective when the bias exceeds the noise.  
The right plot shows the number of rounds needed to achieve a specified bias-to-noise ratio (showing $\{1,4\}$) as a function of the size (number of rows $\ell=d/b$) of the sketch.  The results indicate that  $r\approx 5\cdot \texttt{BNR}^2\cdot (d/b)$ rounds are needed to obtain a vector with bias-to-noise ratio of $\texttt{BNR}$ for a sketch with $(d/b)$ rows.\footnote{We performed additional simulations (not shown) where we swept $b$ from $30$ to $300$, keeping $d/b =100$ and keeping $k'/b=3$. We observed (as expected) the same dependence of $r$. Note that the dependence should hold as  long as the parameters are in a regime where most buckets of each of the $k'$ heaviest keys have no collisions with other heavy keys.  Also note that we used a large value of $k' = b/3$, but the same dependence holds for smaller values of $k'$.}

\begin{figure}[H]
\includegraphics[width=8cm]{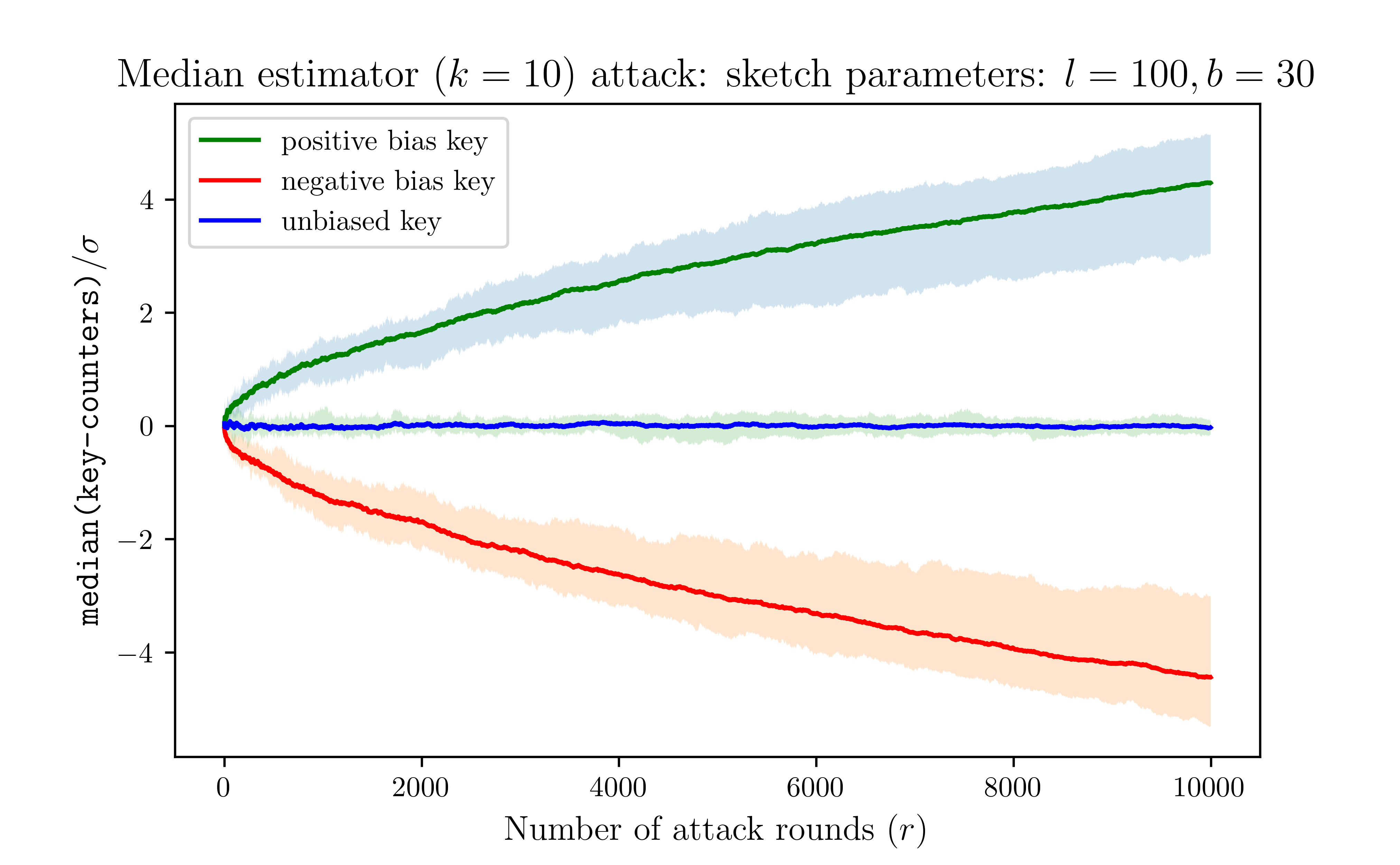}
\includegraphics[width=8cm]{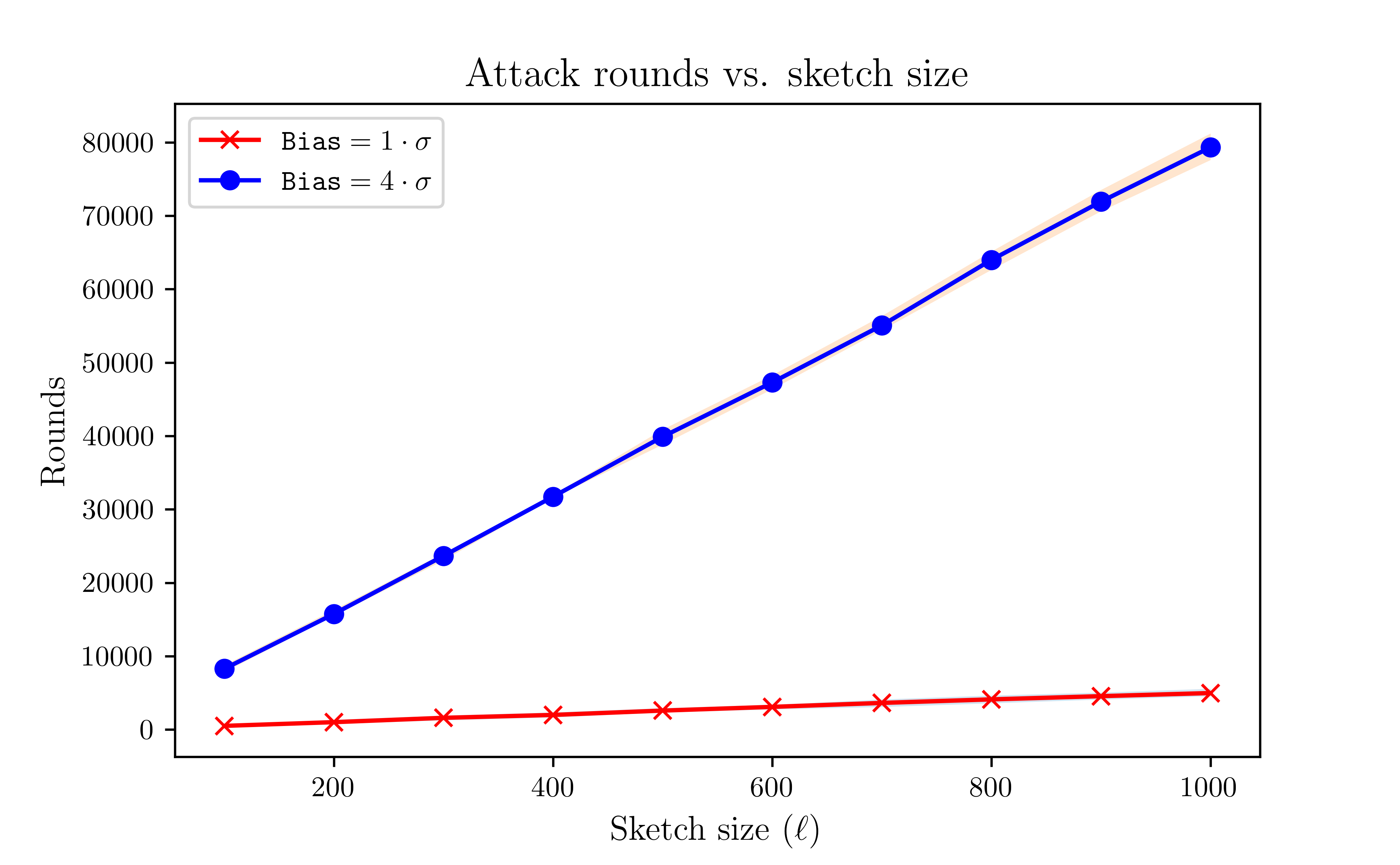}
\caption{Left: Bias-to-noise ratio for number of rounds, average of 10 simulations with different initializations (shaded region between the minimum and maximum).  Right: Attack rounds versus sketch size $\ell=d/b$ to obtain bias-to-noise ratio $\in \{1,4\}$,  averaged over $20$ and $5$ simulations respectively (shaded region between minimum and maximum).} \label{attack:plot}
\end{figure}
\subsubsection{Our new robust sketch using differential privacy}

Differential privacy \cite{dwork2006calibrating} is a mathematical definition for privacy that aims to enable statistical analyses of datasets while providing strong guarantees that individual level information does not leak. Specifically, an algorithm that analyzes data is said to be {\em differentially private} if it is insensitive 
to the addition/deletion of any single individual record in the data. Intuitively, this guarantees that whatever is learned about an individual record could also be learned without this record. Over the last few years, differential privacy has proven to be an important algorithmic notion, even when data privacy is not of concern. Particularly of interest to us is a recent line of work, starting from~\cite{DworkFHPRR15}, showing that differential privacy can be used as a tool for avoiding overfitting, which occurs when algorithms begin to ``memorize'' data rather than ``learning'' to generalize from a trend.

Recall that the difficulty in our adaptive setting arises from potential dependencies between the inputs of the algorithm and its internal randomness. Our construction builds on a technique, introduced by~\cite{HassidimKMMS20}, for using differential privacy to protect not the input data, but rather the internal randomness of algorithm. As \cite{HassidimKMMS20} showed, leveraging the ``anti-overfitting'' guarantees of differential privacy, this overcomes the difficulties that arise in the adaptive setting. Following \cite{HassidimKMMS20}, this technique was also used by \cite{AttiasCSS21,BEO21,gupta2021adaptive,BKMNSS22} for designing robust algorithms in various settings. At a high level, all of these constructions operate as follows. Let $\calA$ be a randomized algorithm that solves some task of interest {\em in the non-adaptive setting} (in our case, $\calA$ could be \texttt{CountSketch} combined with the median estimator). To design a robust version of $\calA$ do the following.

\begin{center}
\fbox{%
  \parbox{0.95\linewidth}{
\begin{center}
    {\bf \vspace{-4px}\texttt{GenericTemplate}}\\[-8px] \hrulefill 
\end{center}  
\begin{enumerate}[topsep=0pt,rightmargin=10pt]
    \item Instantiate $P$ copies of the non-adaptive algorithm $\calA$ (for some parameter $P$).
    \item For $m$ steps: 
    \begin{enumerate}[topsep=0pt,leftmargin=15pt]
        \item Feed the current input to all of the copies of $\calA$ to obtain $P$ intermediate outputs $y_1,\dots,y_P$.
        \item\label{step:GTAgg} Return a differentially private aggregation of $\{y_1,\dots,y_P\}$ as the current output.
    \end{enumerate}
\end{enumerate}
}%
}
\end{center}

That is, all the current applications of differential privacy for constructing robust algorithms operate in a black-box fashion, by aggregating the outcomes of non-robust algorithms. 
For the $\ell_2$-heavy hitters problem, as we next explain, we would need to set $P\approx\sqrt{m\cdot k}$, which introduces a blowup of $\approx\sqrt{m\cdot k}$ on top of the size of the non-adaptive algorithm. Applying this construction with \texttt{CountSketch}, which has size $\Otilde(k)$, results in a robust algorithm for $m$ rounds with size $\approx\sqrt{m}\cdot k^{1.5}$.

The reason for setting $P\approx\sqrt{m\cdot k}$ comes from known composition theorems for differential privacy, showing that, informally, we can release $T$ aggregated outputs in a differentially private manner given a dataset containing $\approx\sqrt{T}$ elements. In our context, we have $m$ rounds, during each of which we need to release $O(k)$ heavy elements. This amounts to a total of $O(m\cdot k)$ aggregations, for which we need to have $\approx\sqrt{m\cdot k}$ intermediate outputs. 

We design better algorithms by breaking free from the black-box approach outlined above. We still use differential privacy to protect the internal randomness of our algorithms, but we do so in a ``white-box'' manner, by integrating it into the sketch itself. As a warmup, consider the following noisy variant of the median estimator (to be applied on top of \texttt{CountSketch}). Given the sketch of a vector $\v$, denoted as $\sketch_{\boldsymbol{\rho}}(\boldsymbol{v}):=
\left(c_{(j,w)}\right)_{j\in[d/b],w\in[b]}$, and a coordinate $\ell$, instead of returning the actual median of $\left\{
s_j(\ell)\cdot c_{\left(j,h_j(\ell)\right)}
 \right\}_{j\in[d/b]}$, return a differentially private estimate for it. 
 
 As before, in order to release $O(m\cdot k)$ estimates throughout the execution, we would need to set $d/b\approx\sqrt{m\cdot k}$, which results in a sketch of size $\frac{d}{b}\times b\approx\sqrt{m}\cdot k^{1.5}$. So we did not gain much with this idea in terms of the sketch size compared to the \texttt{GenericTemplate}. Still, there is a conceptual improvement here. The improvement is that with the \texttt{GenericTemplate} we argued about differential privacy w.r.t.\ the intermediate outputs $y_1,\dots,y_P$, where every $y_p$ results from a different instantiation of \texttt{CountSketch}. This effectively means that in the \texttt{GenericTemplate} we needed to tune our privacy parameters so that the aggregation in Step~\ref{step:GTAgg} ``hides'' an entire copy of \texttt{CountSketch}, which includes $2d/b$ hash functions. With our warmup idea, on the other hand, when privately estimating the median of $\left\{
s_j(\ell)\cdot c_{\left(j,h_j(\ell)\right)}
 \right\}_{j\in[d/b]}$, we only need to hide every single one of these elements, which amounts to hiding only a single hash function pair $(h_j,s_j)$. (We refer to the type of object we need to ``hide'' with differential privacy as our {\em privacy unit}; so in the \texttt{GenericTemplate} the privacy unit was a complete copy of \texttt{CountSketch}, and now the privacy unit is reduced to a single hash function pair). 
 This is good because, generally speaking, protecting less with differential privacy is easier, and can potentially be done more efficiently.
 
Indeed, we obtain our positive results by ``lowering the privacy bar'' even further. Informally, we show that it is possible to work with the individual {\em buckets} as our privacy unit, rather than the individual hash functions, while still being able to leverage the generalization properties of differential privacy to argue utility in the adaptive setting. Intuitively, but inaccurately, by doing so we will have $\approx d$ elements to aggregate with differential privacy (the number of buckets), rather than only $\approx d/b\approx d/k$ elements (number of hash functions), which would allow us to answer a larger number of adaptive queries via composition arguments.
 
There are two major challenges with this approach, which we need to address.

\paragraph{First challenge.}
Recall that every key $i$ participates in $d/b$ buckets (one for every hash function). So, even if we work with the individual buckets as the privacy unit, still, when estimating the weight of key $i$ we have only $d/b$ elements to aggregate; not $d$ elements as we described it above. It is therefore not clear how to gain from working with the buckets as the privacy unit. We tackle this by conducting a more fine-tuned analysis, based on the following idea. Suppose that the current input vector is $\v$ and let $H$ denote the set of $O(k)$ keys identified to be heavy. While we indeed estimate the weight of every $i\in H$ by aggregating only $d/b$ buckets, what we show is that, on average, we need to aggregate {\em different} buckets to estimate the weight of every $i\in H$. This means that (on average) every bucket participates in very few estimations per query. Overall, every bucket participates in $O(m)$ aggregations throughout the execution, rather than $O(m \cdot k)$ as before. Using composition arguments, we now need to aggregate only $\sqrt{m}$ elements (buckets) to produce our estimates, rather than $\sqrt{m \cdot k}$ as before. So it suffices to set $d/b\gtrsim\sqrt{m}$, i.e., suffices to set $d\gtrsim\sqrt{m}\cdot b\approx\sqrt{m}\cdot k$. The analysis of this idea is delicate, and we actually are not aware of a variant of the median estimator that would make this idea go through. To overcome this issue, we propose a novel estimator, which we refer to as the {\em sign-alignment estimator}, that reports keys as heavy hitters based on the signs of their corresponding buckets. This estimator has several appealing properties that, we believe, make it of independent interest.

\paragraph{Second challenge.}
The standard generalization properties of differential privacy, which we leverage to avoid the  difficulties that arise from adaptivity, only hold for product distributions.\footnote{While there are works that studied the generalization properties of (variants of) differential privacy under non-product distributions, these works are not applicable in our setting.~\cite{bassily2016typical,kontorovich2022adaptive}} This is fine when working with the individual hash functions as the privacy unit, because the different hash functions are sampled independently. However, this is no longer true when working with the individual buckets as the privacy unit, as clearly, buckets pertaining to the same hash function are dependent. To overcome this difficulty, we propose a variant of $\countsketch$ which we call $\Bcountsketch$, that has the property that all $d$ of its buckets are independent. This variant retains the marginal distribution of the buckets in $\countsketch$, but removes dependencies.  

\subsubsection{Feasibility of alternatives} 
Deterministic algorithms are inherently robust and therefore one
approach to achieve robustness is to redesign the algorithm  to be deterministic or have deterministic components~\cite{ShiloachEven:JACM1981,GutenbergW:SODA2020}.  We note that the related $\ell_1$-heavy hitters problem on data streams has known  deterministic sketches~\cite{MisraGries:1982,NelsonNW13}, and therefore a sketch that is fully robust in adaptive settings.
For $\ell_2$-heavy hitters, however, all known designs are based on randomized linear measurements and there are known lower bounds of $\Omega(\sqrt{n})$ on the size of any deterministic sketch, even one that only supports positive updates \cite{KamathPW:CCC2021}.  In particular this means that for $\ell_2$-heavy hitters we can not hope for better robustness via a deterministic sketch.

\section{Preliminaries}\label{sec:preliminaries}

For vectors $\boldsymbol{v},\boldsymbol{u}\in\mathbb{R}^n$ we use the notation $v[i]$ for the value of the $i$th entry of the vector (which we also refer to as the $i$th key), 
$\langle \boldsymbol{v},\boldsymbol{u}\rangle = \sum_{i=1}^n v[i] u[i]$ for the inner product, and $\|\boldsymbol{v}\|_2 := (\sum_{i=1}^n v[i]^2)^{1/2}$ for the $\ell_2$ norm.

\begin{definition} (heavy hitter)
Given a vector $\boldsymbol{v}\in \mathbb{R}^n$, 
an entry $i\in[n]$ is an $\ell_2$-$k$-heavy hitter if $v[i]^2 > \frac{1}{k} \|\boldsymbol{v}_{\tail[k]} \|_2^2$.
\end{definition}

\begin{definition} (Heavy hitters problem, with and without values)
A set of entries $K\subset [n]$ is a correct solution for 
the heavy hitters problem if $|K|= O(k)$ and $K$ includes all the heavy hitters.  The solution is $\alpha$-correct for the problem with values if it includes approximate values $\hat{v}[i]$ for all $i\in K$ so that
$|\hat{v}[i] - v[i]| \leq (\sqrt{\alpha/k})\|\boldsymbol{v}_{\tail[k]}\|_2$.
\end{definition}

\subsection{$\countsketch$ and $\Bcountsketch$}

$\countsketch$ and our proposed variant $\Bcountsketch$ are 
specified by the parameters $(n,d,b)$, 
where $n$ is the dimension of input vectors, $b$ is its {\em width}, and $d$ is the size of the sketch (number of linear measurements of the input vector).

The internal randomness $\rho$ of the sketch specifies a set of $d$ 
measurement vectors $(\boldsymbol{\mu}_t)_{t\in [d]}$ where 
$\boldsymbol{\mu}_t\in \{-1,0,1\}^n$ for $t\in[d]$.
The sketch of a vector $\boldsymbol{v}\in \mathbb{R}^n$ is the set of $d$ linear measurements (which we also refer to as {\em buckets})
\[ \sketch_{\boldsymbol{\rho}}(\boldsymbol{v}) := \left(c_{t}(\boldsymbol{v}) := \langle \boldsymbol{\mu}_{t},\boldsymbol{v}  \rangle\right)_{t\in[d]}  .\]

\paragraph{\texttt{CountSketch}.}
The internal randomness specifies a set of random hash functions 
$h_r:[n]\rightarrow [b]$ ($r\in [d/b]$) with the marginals that $\forall j\in [b]$, $i\in [n]$, $\Pr[h_r(i) = j]=1/b$, and 
$s_{r}:[n]\rightarrow \{-1,1\}$ ($ r\in [d/b]$) so that $\Pr[s_{r}(i)=1]=1/2$. 
The $d$ measurement vectors are organized as
$d/b$ sets of $b$ measurements each.
$\boldsymbol{\mu}_{(r-1)\cdot b + j}$ ($r\in [d/b]$, $j\in [b]$):
\[ \mu_{(r-1)\cdot b+j}[i] := \mathbbm{1}_{h_r(i) = j} s_{r}(i) .\]
Interestingly, limited (pairwise) independence 
of the hash functions $h_r$ and $s_{r}$ suffices for the utility guarantees (stated below). Note that with $\countsketch$ the measurement vectors within each set are dependent.

\paragraph{\texttt{BCountSketch}.}
The measurement vectors are drawn i.i.d.\ from a distribution $\boldsymbol{\mu}\sim \mathcal{B}$. 
The distribution $\mathcal{B}$ is the same as that of the measurement vectors of $\countsketch$ except that dependencies are removed. The i.i.d.\ property will facilitate our analysis of robust estimation. 

Each $\boldsymbol{\mu}_t$ ($t\in [d]$) is specified by two objects: A selection hash function $h_t$ and a sign hash function $s_t$ with the following marginals:
\begin{itemize}
	\item $h_t:[n] \rightarrow \{0,1\}$ s.t. $\forall i\in [n] \Pr[h_t(i)=1] = 1/b$.
	\item $s_t:[n] \rightarrow \{+1,-1\}$ s.t. $\forall i\in [n] \Pr[s_t(i)=1] = \Pr[s_t(i)=-1] = 1/2$.
\end{itemize}
The measurement vector entries are $\mu_t[i] :=  h_t(i) s_t(i)$ ($i\in [n]$).
Our upper bound only requires limited independence (3-wise for $h_t$ and 5-wise for $s_t$, respectively). Our lower bounds hold in the stronger model of full independence.

\subsection{The median estimator}
We say that a key $i \in [n]$ {\em participates} in bucket $t\in [d]$ when 
$\mu_t[i]\not=0$. We denote by $T_i := \{t\mid \mu_t[i]\not= 0\}$ the set of buckets that $i$ participates in. Note that
with $\countsketch$ we have that $|T_i| = d/b$ since $i$ participates in exactly one bucket in each set of $b$ buckets and with $\Bcountsketch$ we have
$\E[ |T_i|]=d/b$ since $i$ participates in each bucket with probability $1/b$.  Also note that with both methods, $\boldsymbol{\mu}_t$ for $t\in T_i$ are i.i.d.

Note that for all $i\in[n]$ it holds that
$\E_{\boldsymbol{\rho}}[\mu_t[i] \cdot c_t \mid t\in T_i ] = v[i]$.
For each key
$i\in [n]$ we get a multiset of unbiased independent weak estimates of the value $v[i]$ (one for each $t\in T_i$):
\[
V(i) := \{\mu_t[i] \cdot c_t \mid \mu_t[i]\not=0\} .
\]
We use these estimates to determine if $i$ should be reported as a heavy hitter and if so, its reported estimated value. 
The classic $\countsketch$ estimator \cite{CharikarCFC:2002} uses the median of these values: $\hat{v}[i] := \textrm{median} V(i)$.  
\subsection{Utility of \texttt{CountSketch} and \texttt{BCountSketch}} \label{utilitymedian:sec}
The median estimator guarantees that for $\delta\in (0,1)$ and $d = O(b\cdot \log(1/\delta)$,
$\Pr_{\boldsymbol{\rho}\sim\mathcal{D}}\left[(v[i]-\hat{v}[i] )^2 > \frac{1}{b} \|\boldsymbol{v}_{\tail[b]} \|_2^2 \right]\leq \delta$, where 
$\boldsymbol{v}_{\tail[k]}$, 
the $k$-tail of $\boldsymbol{v}$, is a vector obtained from $\boldsymbol{v}$ by replacing its $k$ largest entries in magnitude with $0$.
The analysis extends to $\Bcountsketch$ (that has the same distribution of the independent bucket estimates except that their number is $d/b$ in expectation and not exact).  The median estimator is unbiased whereas other quantiles of $V(i)$ may not be, but importantly for our robust weight estimation,  the stated error bound holds for any {\em quantile} $\hat{v}[i]$ of $V(i)$ in a range $(1/2-\phi,1/2+\phi)$ of $V(i)$, where the constant $\phi< 1/2$ can be tuned by the constant factors of setting the sketch parameters \cite{MintonPrice:SODA2014}.

The following $\ell_\infty/\ell_2$ guarantee is obtained using a union bound ver keys \cite{CharikarCFC:2002} (for $\alpha,\delta \in (0,1/4)$ and sketch parameters $b= O(k/\alpha)$ and $d=O((k/\alpha) \log(n/\delta))$):
\begin{equation} \label{CSketchAcc:eq}
\Pr_{\boldsymbol{\rho}\sim\mathcal{D}}\left[\| \hat{\boldsymbol{v}}^{(\boldsymbol{\rho})} -\boldsymbol{v} \|_\infty^2 > \frac{\alpha}{k} \|\boldsymbol{v}_{\tail[k/\alpha]} \|_2^2\right] \leq \delta ,
\end{equation}
where 
$\hat{\boldsymbol{v}}^{(\boldsymbol{\rho})}$ is an approximation of 
$\boldsymbol{v}$ that is computed from $\sketch_{\boldsymbol{\rho}}(\boldsymbol{v})$.
For the 
heavy hitters problem, we return the set of $k(1+1/(1-2\alpha))$ keys with largest estimates, along with their estimated values.

When we have $m$ different {\em non-adaptive} inputs $(\boldsymbol{v}_q)_{q\in [m]}$, a simple union bound argument with
\eqref{CSketchAcc:eq} provides that with a sketch parameters $b=O(k/\alpha)$ and $d=O((k/\alpha) \log(nm/\delta))$: 
\begin{equation} \label{CSketchAcc2:eq}
\Pr_{\boldsymbol{\rho}\sim\mathcal{D}}\left[ \forall q\in [m],\ 
\|\hat{\boldsymbol{v}}_q^{({\boldsymbol{\rho}})} -\boldsymbol{v}_q \|_\infty^2 \leq  \frac{\alpha}{k}  \|(\boldsymbol{v}_{q})_{\tail[k]} \|_2^2\right] \geq 1-\delta\ .
\end{equation}
That is, the number of inputs for which we can guarantee utility with high probability grows {\em exponentially} with the size of the sketch. As mentioned in the introduction, we shall see that the median estimator is not robust in adaptive settings, where we can only guarantee utility for number of inputs that grows {\em linearly} with the sketch size, matching a trivial upper bound.

\section{Sign-Alignment Estimators}\label{sec:signalign}

We propose sign-alignment estimators
(with $\countsketch$ and $\Bcountsketch$)
that determine whether a key $i$ is reported as a potential heavy hitter based on the number of buckets for which the signs of $\mu_t[i] \cdot c_t$ align. 

For $\boldsymbol{\mu}\in \Supp(\mathcal{B})$, $\boldsymbol{v}\in \mathbb{R}^n$, and $i\in [n]$, we 
define the predicates
\begin{align*}
    h^+_{\boldsymbol{v},i}(\boldsymbol{\mu}) &:= \1\left\{\langle \boldsymbol{\mu},\boldsymbol{v}\rangle \mu[i]>0\right\}\\
    h^-_{\boldsymbol{v},i}(\boldsymbol{\mu}) &:= \1\left\{\langle \boldsymbol{\mu},\boldsymbol{v}\rangle \mu[i]<0\right\}\ .
\end{align*}

We show that for a key $i$ that participates in the bucket $t$, if $i$ is heavy then the sign of $v[i]$ is very likely to agree with the sign of the bucket estimate $\mu[i] c_t$ but when there are many keys that are heavier than $i$ ($i$ lies in the "tail") then such agreement is less likely.


For $\boldsymbol{v}$ and $i\in [n]$ we accordingly define the probabilities that these predicates are satisfied by $\boldsymbol{\mu}$, conditioned on $i$ participating, as
\begin{align*}
    p^+(\boldsymbol{v},i) &:= 
\Pr_{\boldsymbol{\mu}}[h^+_{\boldsymbol{v},i}(\boldsymbol{\mu}) \mid \mu[i]\not=0] = b \cdot \Pr_{\boldsymbol{\mu}}[h^+_{\boldsymbol{v},i}(\boldsymbol{\mu}) ]\\
   p^-(\boldsymbol{v},i) &:= \Pr_{\boldsymbol{\mu}}[h^+_{\boldsymbol{v},i}(\boldsymbol{\mu}) \mid \mu[i]\not=0] = b \cdot \Pr_{\boldsymbol{\mu}}[h^-_{\boldsymbol{v},i}(\boldsymbol{\mu})]\\
   p(\boldsymbol{v},i) &:=  \max\{ p^+(\boldsymbol{v},i), p^+(\boldsymbol{v},i)\}
\end{align*}

The intuition is that when $v[i]^2 \ll \frac{\alpha}{k} \|\boldsymbol{v}_{\tail[k]}\|_2^2$, 
we expect $|v(i)| \ll |c_t|$ and therefore $p^+(\boldsymbol{v},i) \approx
p^-(\boldsymbol{v},i) \approx 1/2$ and thus $p(\boldsymbol{v},i) \approx 1/2$.
When $v[i]^2 > \frac{1}{k} \|\boldsymbol{v}_{\tail[k]} \|_2^2$ we expect $v[i]\mu_t[i] \approx c_t$ and $\Pr[c_t \cdot \mu_t[i] \cdot \mathrm{sgn}(v[i])>0 \mid t\in T_i] \approx 1$ and hence $p(\boldsymbol{v},i) \approx 1$.

\begin{lemma} \label{lemma:heavyornot}
There are constants $C_a$ and $C_b$ and $1/2 < \tau_a < \tau_b < 1$ such that for all
$\boldsymbol{v}\in \mathbb{R}^n$, for all $i\in [n]$
\begin{itemize}
    \item If $v[i]^2 > \frac{C_b^2}{b}\|\boldsymbol{v}_{\tail[b/C_b^2]}\|^2_2$ then 
    $p^{\mathrm{sgn}(v[i])}(\boldsymbol{v},i)\geq \tau_b$ (and therefore
    $p(\boldsymbol{v},i)\geq \tau_b$)
    \item If $v[i]^2 \le \frac{1}{b} \| \boldsymbol{v}_{\tail[C_a\cdot b]} \|_2^2$ then $p(\boldsymbol{v}, i) \le \tau_a$.
\end{itemize}

\end{lemma}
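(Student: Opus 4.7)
Fix $\boldsymbol{v}$ and $i$, and WLOG $v[i]\ge 0$. A single measurement $\boldsymbol{\mu}$ drawn from either $\countsketch$ or $\Bcountsketch$ has the same bucket-level marginal, so I treat them uniformly. Conditional on $\mu[i]\ne 0$, write
\[
\mu[i]\cdot\langle\boldsymbol{\mu},\boldsymbol{v}\rangle \;=\; v[i] + Y, \qquad Y := s(i)\sum_{j\ne i}h(j)\,s(j)\,v[j],
\]
where $(h,s)$ are the hashes defining $\boldsymbol{\mu}$. Since $s(i)$ is an independent Rademacher, $Y$ is symmetric about $0$. Both items reduce to estimating $p^+(\boldsymbol{v},i)=\Pr[Y>-v[i]\mid \mu[i]\ne 0]$: by symmetry this equals $1-\Pr[Y\ge v[i]]$ up to an atom at $-v[i]$ that only helps item~(a). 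Thus item~(a) wants an upper bound on $\Pr[Y\ge v[i]]$ and item~(b) wants a lower bound.

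\emph{Heavy case.} Set $k_b := b/C_b^2$ and let $H$ be the set of the $k_b$ largest-magnitude indices of $\boldsymbol{v}$, minus $i$; $|H|\le k_b$. A union bound gives $\Pr[\mathcal{E}^c\mid\mu[i]\ne 0]\le k_b/b = 1/C_b^2$ where $\mathcal{E}:=\{h(j)=0\ \forall j\in H\}$. On $\mathcal{E}$, $Y$ is a symmetric sum over the $k_b$-tail, so by the heavy hypothesis
\[
\Ex[Y^2\mid\mathcal{E}] \;\le\; \tfrac{1}{b}\|\boldsymbol{v}_{\tail[k_b]}\|_2^2 \;<\; v[i]^2/C_b^2,
\]
and Chebyshev yields $\Pr[|Y|>v[i]/2\mid\mathcal{E}]\le 4/C_b^2$. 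Combining,
\[
p^{\sgn(v[i])}(\boldsymbol{v},i)\;\ge\;\Pr[\mathcal{E}]\,(1-4/C_b^2)\;\ge\;1-5/C_b^2\;=:\;\tau_b,
\]
which exceeds $1/2$ once $C_b$ is a large enough absolute constant.

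\emph{Tail case.} Here $v[i]^2\le\|\boldsymbol{v}_{\tail[C_a b]}\|_2^2/b$. By the symmetry of $Y$, $p(\boldsymbol{v},i)\le 1-\Pr[Y\ge v[i]]$, so the task reduces to proving the anti-concentration bound $\Pr[Y\ge v[i]]\ge c$ for an absolute constant $c>0$. Condition on the selection hash values $(h(j))_{j\ne i}$: then $Y$ is a Rademacher sum with variance $\sigma^2(h)=\sum_{j:\,h(j)=1,\,j\ne i}v[j]^2$. Using 4-wise independence of the signs, $\Ex[Y^4\mid h]\le 3\sigma^4(h)$, and Paley--Zygmund (with $\theta=1/2$) yields $\Pr[Y\ge \sigma(h)/\sqrt 2\mid h]\ge 1/24$. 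It therefore suffices to show $\Pr_h[\sigma^2(h)\ge 2v[i]^2]=\Omega(1)$. I compute $\Ex_h[\sigma^2(h)] = \|\boldsymbol{v}\|_2^2/b - v[i]^2/b \ge v[i]^2$ from the tail hypothesis, along with the second moment $\Ex_h[(\sigma^2(h))^2]\le \tfrac{1}{b}\sum_{j\ne i}v[j]^4 + \tfrac{3}{b^2}\|\boldsymbol{v}\|_2^4$. I then case-split on whether some $v[j^*]^2$ (with $j^*\ne i$) is a constant fraction of $\|\boldsymbol{v}\|_2^2$: in the non-dominant regime the fourth-moment bound is $O(\Ex_h[\sigma^2(h)]^2)$ and a second Paley--Zygmund on $\sigma^2(h)$ closes the argument (choosing $C_a$ large enough that the contribution from the top $C_a b$ entries other than $i$ comfortably pushes $\Ex_h[\sigma^2(h)]$ above $2v[i]^2$); in the dominant regime, the tail hypothesis forces $v[j^*]^2\gtrsim b\cdot v[i]^2$, and conditioning on $h(j^*)=1$ makes the Rademacher contribution of $j^*$ alone exceed $v[i]$ with constant conditional probability, which combined with the (possibly iterated) non-dominant analysis on the event $h(j^*)=0$ delivers the desired $\Pr[Y\ge v[i]]\ge c$. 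Setting $\tau_a := 1-c$ completes the argument, with the constants tuned so that $1/2<\tau_a<\tau_b<1$.

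\emph{Main obstacle.} The tail case is the crux. Passing from the expectation bound $\Ex_h[\sigma^2(h)]\ge v[i]^2$ to the probability statement $\Pr[\sigma^2(h)\ge 2v[i]^2]=\Omega(1)$ requires handling two qualitatively different regimes: spread vectors (where moment methods apply cleanly, and simply choosing $C_a$ large buys enough slack) and vectors dominated by one or a few heavy keys $j^*\ne i$ (where a single hash hit is needed but occurs only with probability $1/b$). Balancing the two via a case split, and possibly a short iteration that peels off dominant keys until the residual is spread, is the main technical care the argument demands; the constants $C_a, C_b, \tau_a, \tau_b$ fall out.
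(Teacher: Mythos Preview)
Your heavy case is correct and essentially identical to the paper's Lemma~\ref{lemma:heavy}: both isolate the top $\Theta(b)$ keys, union-bound the probability any of them collides with $i$, and Chebyshev the remaining tail contribution.

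The tail case is where your sketch has a real gap. Both you and the paper reduce to showing $\Pr_h[\sigma^2(h)\ge C\,v[i]^2]=\Omega(1)$ for some constant $C>1$, after which a fourth-moment anti-concentration on the Rademacher sum $Y$ (your Paley--Zygmund step, the paper's Corollary~\ref{coro:main-tech}) finishes. The divergence is in how $\sigma^2(h)$ is shown to be large. Your second-moment computation gives
\[
\frac{\Ex_h\!\big[(\sigma^2(h))^2\big]}{\big(\Ex_h[\sigma^2(h)]\big)^2}\;\gtrsim\;\frac{b\sum_{j\ne i}v[j]^4}{\|\boldsymbol v\|_2^4}\;\ge\;\frac{b\,\max_{j\ne i}v[j]^2}{\|\boldsymbol v\|_2^2}\cdot\Omega(1),
\]
so Paley--Zygmund on $\sigma^2(h)$ yields a constant only when $\max_{j\ne i}v[j]^2=O(\|\boldsymbol v\|_2^2/b)$ --- a much stronger ``spread'' condition than ``no $v[j^*]^2$ is a constant fraction of $\|\boldsymbol v\|_2^2$''. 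With your case split at a constant fraction $c_0$, the non-dominant branch still has ratio $\Theta(c_0 b)$ and does not close. In the dominant branch, each peel harvests only $O(1/b)$ probability from the event $h(j^*)=1$; the iteration helps only if it terminates in a residual where the spread analysis applies \emph{and} that residual still has $\Ex_h[\sigma^2]$ strictly above $v[i]^2$ by a constant factor. The hypothesis $v[i]^2\le \tfrac{1}{b}\|\boldsymbol v_{\tail[C_a b]}\|_2^2$ does not deliver that slack (enlarging $C_a$ moves the truncation level but not this ratio), so the two Paley--Zygmund steps cannot be chained at the borderline.

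The paper avoids this entirely by applying Chebyshev to a \emph{count} rather than to $\sigma^2(h)$. Under the working hypothesis that at least $50b$ keys $j$ satisfy $|v[j]|\ge |v[i]|$, let $X$ be the number of such keys hashed into the bucket; $X$ is a sum of $\{0,1\}$ indicators with $\Ex[X]=50$ and $\Var(X)\le 50$ (pairwise independence suffices), so no term can dominate. Chebyshev then gives $X\ge 25$ with probability at least $9/10$, and on that event $\sigma^2(h)\ge 25\,v[i]^2$ immediately---no control of the fourth moment of $\sigma^2(h)$ is needed. This counting trick (bound $\sigma^2(h)$ from below by the number of comparable-magnitude keys present, not by moments of $\sigma^2(h)$ itself) is precisely the idea your sketch is missing.
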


\begin{corollary} \label{pspecs:coro}
Consider sketches with width $b = C_b^2 \cdot k$ and define
\begin{align}
    \textrm{heavy}(\boldsymbol{v}) &:=
     \{i\in[n] \mid p(\boldsymbol{v},i) \geq \tau_b\}\\
    \textrm{suspect}(\boldsymbol{v}) &:= \{i\in[n] \mid p(\boldsymbol{v},i) \geq \tau_a\}\ .
\end{align}
Then the set 
$\textrm{heavy}(\boldsymbol{v})$ contains all 
heavy hitter keys of $\boldsymbol{v}$
and
$|\textrm{suspect}(\boldsymbol{v})| \leq  (C_a+1) \cdot b = (C_a+1) \cdot C_b^2 \cdot k$.
\end{corollary}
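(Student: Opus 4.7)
The plan is to derive both claims essentially as algebraic unpacking of Lemma~\ref{lemma:heavyornot} after substituting $b = C_b^2 \cdot k$.

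First, for the claim that $\textrm{heavy}(\boldsymbol{v})$ contains all heavy hitters, I will just match up the two thresholds. With $b = C_b^2 \cdot k$ we have $b/C_b^2 = k$ and $C_b^2/b = 1/k$, so the hypothesis $v[i]^2 > \frac{C_b^2}{b}\|\boldsymbol{v}_{\tail[b/C_b^2]}\|_2^2$ of the first bullet of Lemma~\ref{lemma:heavyornot} becomes exactly the heavy hitter condition $v[i]^2 > \frac{1}{k}\|\boldsymbol{v}_{\tail[k]}\|_2^2$. Thus for every heavy hitter $i$, Lemma~\ref{lemma:heavyornot} yields $p^{\sgn(v[i])}(\boldsymbol{v},i) \geq \tau_b$, and since $p(\boldsymbol{v},i) = \max\{p^+(\boldsymbol{v},i), p^-(\boldsymbol{v},i)\} \geq p^{\sgn(v[i])}(\boldsymbol{v},i)$, we conclude $p(\boldsymbol{v},i) \geq \tau_b$, i.e.\ $i \in \textrm{heavy}(\boldsymbol{v})$.

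Second, I will bound $|\textrm{suspect}(\boldsymbol{v})|$ via the contrapositive of the second bullet: every $i$ with $p(\boldsymbol{v},i) > \tau_a$ must satisfy $v[i]^2 > \frac{1}{b}\|\boldsymbol{v}_{\tail[C_a b]}\|_2^2$ (the boundary $p = \tau_a$ case is handled the same way after a harmless strengthening of the threshold by an infinitesimal, or by appealing to the strict inequality built into the lemma's statement). Let $S = \textrm{suspect}(\boldsymbol{v})$ and let $H \subseteq [n]$ be any set of the $C_a b$ largest-magnitude coordinates of $\boldsymbol{v}$. For every $i \in S \setminus H$, the coordinate $v[i]$ appears among the entries preserved in $\boldsymbol{v}_{\tail[C_a b]}$, so
\begin{equation*}
\|\boldsymbol{v}_{\tail[C_a b]}\|_2^2 \;\geq\; \sum_{i \in S \setminus H} v[i]^2 \;>\; |S \setminus H| \cdot \tfrac{1}{b}\,\|\boldsymbol{v}_{\tail[C_a b]}\|_2^2.
\end{equation*}
Assuming $\|\boldsymbol{v}_{\tail[C_a b]}\|_2^2 > 0$, dividing gives $|S \setminus H| < b$, hence $|S| \leq |H| + |S \setminus H| \leq C_a b + b = (C_a+1)b$. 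If instead the tail vanishes, then $\boldsymbol{v}$ has at most $C_a b$ nonzero entries, and any $i \in S$ must in particular have $v[i] \neq 0$, so trivially $|S| \leq C_a b \leq (C_a+1)b$.

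This is a short deduction from Lemma~\ref{lemma:heavyornot}, so there is no serious obstacle; the one point requiring care is the strict-vs.-non-strict threshold at $\tau_a$ in the definition of $\textrm{suspect}$ versus the lemma's hypothesis, and the only nontrivial step is the pigeonhole argument on tail mass, which only requires that coordinates outside the top $C_a b$ get counted in the tail.
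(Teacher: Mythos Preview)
Your proof is correct and follows exactly the approach the paper intends; the paper states the corollary without an explicit proof, treating both claims as direct consequences of Lemma~\ref{lemma:heavyornot}, and your write-up is simply the spelled-out version of that derivation (substituting $b=C_b^2 k$ for the first claim, and the contrapositive plus the pigeonhole count against the tail mass for the second). The boundary issue at $p(\boldsymbol{v},i)=\tau_a$ that you flag is real but harmless in context, since the constants $\tau_a,\tau_b$ in Lemma~\ref{lemma:heavyornot} are not tight and can be perturbed.
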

It follows from Corollary~\ref{pspecs:coro} that a set $\texttt{K}(\boldsymbol{v})$ that includes all
$\textrm{heavy}(\boldsymbol{v})$ keys and only  $\textrm{suspect}(\boldsymbol{v})$ keys is a correct solution of the
heavy hitters problem.
Our sign-alignment estimators are specified by two components. The first component is obtaining estimates 
$\hat{p}^\sigma(\boldsymbol{v},i)$ of $p^\sigma(\boldsymbol{v},i)$ given a query $\sketch_\rho(\boldsymbol{v})$ with $b=C_b^2 k$, a key $i\in[n]$, and
$\sigma\in\{-1,+1\}$.  We shall see (Section~\ref{basicest:sec}) that simple averaging suffices for the oblivious setting but more nuanced methods (Section~\ref{robustest:sec}) are needed for robustness.
The second component is the estimator that uses these estimates to 
compute an output set $\texttt{K}(\boldsymbol{v})$. We present two methods, {\em threshold} (Section~\ref{sec:basic-sa-estimator}) for arbitrary queries and {\em stable} (Section~\ref{sec:basic-stable-sa-estimator}) for continuous reporting.

\begin{remark}
Sign-alignment estimators have the desirable property that only keys $i$ that "dominate" most of their buckets can have high alignment and thus get reported. 
This because with probability $2p(\boldsymbol{v},i)-1$, the magnitude of contribution of keys $[n]\setminus\{i\}$ to a bucket is smaller than $|v[i]|$. In particular, vectors $\boldsymbol{v}$ with no heavy keys  (empty $\textrm{suspect}(\boldsymbol{v})$) will have no reported keys. This can be a distinct advantage over estimators that simply report $O(k)$ keys with highest estimates.  
\end{remark}

\subsection{The Threshold Estimator}  \label{sec:basic-sa-estimator}
A threshold sign-alignment estimator output the set of keys:
\begin{equation} \label{basicest:eq}
    \texttt{K}(\boldsymbol{v})  = \{i\in[n] \mid \max\{\hat{p}^+(\boldsymbol{v},i), \hat{p}^-(\boldsymbol{v},i) \} \geq \tau_m \},
\end{equation}
where
$\tau_m := (\tau_a + \tau_b)/2$.  

\begin{lemma} (correctness of threshold estimators) \label{smallerrthencorrect:lemma}
If for each query vector $\boldsymbol{v}$, $i\in [n]$, and 
$\sigma\in\{+,-\}$ the estimates
$\hat{p}^\sigma(\boldsymbol{v},i)$ satisfy
\begin{equation} \label{accest:eq}
|\hat{p}^\sigma(\boldsymbol{v},i)-p^\sigma(\boldsymbol{v},i)| \leq \frac{\tau_b-\tau_a}{2} ,
\end{equation}
then the output $\texttt{K}(\boldsymbol{v})$ is correct.
\end{lemma}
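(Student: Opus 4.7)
The plan is to establish the two requirements that make $\texttt{K}(\boldsymbol{v})$ a correct solution: (a) every heavy hitter of $\boldsymbol{v}$ lies in $\texttt{K}(\boldsymbol{v})$, and (b) $|\texttt{K}(\boldsymbol{v})| = O(k)$. Both reduce to simple triangle-inequality arguments that exploit the specific choice $\tau_m = (\tau_a + \tau_b)/2$, which sits exactly halfway between the two thresholds guaranteed by Lemma~\ref{lemma:heavyornot}/Corollary~\ref{pspecs:coro}; the hypothesis $|\hat{p}^\sigma - p^\sigma| \le (\tau_b - \tau_a)/2$ is then precisely the slack needed to avoid any classification error relative to this midpoint.

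For (a), I would take any heavy hitter $i$. Since $b = C_b^2 k$, the sketch-width assumption of Corollary~\ref{pspecs:coro} applies and Lemma~\ref{lemma:heavyornot} gives $p^{\sgn(v[i])}(\boldsymbol{v},i) \ge \tau_b$. Then by the estimation guarantee~\eqref{accest:eq},
\[
\hat p^{\sgn(v[i])}(\boldsymbol{v},i) \ge p^{\sgn(v[i])}(\boldsymbol{v},i) - \tfrac{\tau_b-\tau_a}{2} \ge \tau_b - \tfrac{\tau_b-\tau_a}{2} = \tau_m,
\]
so $\max\{\hat p^+(\boldsymbol{v},i), \hat p^-(\boldsymbol{v},i)\} \ge \tau_m$ and hence $i \in \texttt{K}(\boldsymbol{v})$ by~\eqref{basicest:eq}.

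For (b), I would show the stronger inclusion $\texttt{K}(\boldsymbol{v}) \subseteq \textrm{suspect}(\boldsymbol{v})$, from which $|\texttt{K}(\boldsymbol{v})| \le (C_a+1) C_b^2 k = O(k)$ follows by Corollary~\ref{pspecs:coro}. Contrapositively, if $i \notin \textrm{suspect}(\boldsymbol{v})$ then $p(\boldsymbol{v},i) < \tau_a$, which means $p^+(\boldsymbol{v},i) < \tau_a$ and $p^-(\boldsymbol{v},i) < \tau_a$. Applying~\eqref{accest:eq} to both signs,
\[
\hat p^\sigma(\boldsymbol{v},i) \le p^\sigma(\boldsymbol{v},i) + \tfrac{\tau_b-\tau_a}{2} < \tau_a + \tfrac{\tau_b-\tau_a}{2} = \tau_m \quad \text{for } \sigma \in \{+,-\},
\]
so $i \notin \texttt{K}(\boldsymbol{v})$.

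There is no real obstacle here: the statement is essentially a calibration remark showing that the constant $\tau_m$ was chosen so that errors up to $(\tau_b-\tau_a)/2$ cannot push a heavy hitter below the threshold or a non-suspect above it. The only thing to be slightly careful about is verifying that the width parameter $b = C_b^2 k$ of Corollary~\ref{pspecs:coro} is in force (which is implicit in the setup of the threshold estimator) so that both the lower bound $\tau_b$ on heavy hitters and the upper bound $(C_a+1)b$ on the size of $\textrm{suspect}(\boldsymbol{v})$ are simultaneously available.
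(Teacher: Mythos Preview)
Your proposal is correct and follows essentially the same approach as the paper: both arguments use the triangle inequality together with the choice $\tau_m=(\tau_a+\tau_b)/2$ to show $\mathrm{heavy}(\boldsymbol{v})\subseteq\texttt{K}(\boldsymbol{v})\subseteq\mathrm{suspect}(\boldsymbol{v})$. The only cosmetic difference is that the paper phrases each inclusion via its contrapositive, whereas you argue one direction directly and the other contrapositively.
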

\begin{proof}
\begin{align*}
i\not\in \texttt{K}(\boldsymbol{v}) \implies
\hat{p}^\sigma(\boldsymbol{v},i) &<  \tau_m \implies \\
p^\sigma(\boldsymbol{v},i) &\leq \hat{p}^\sigma(\boldsymbol{v},i) + |\hat{p}^\sigma(\boldsymbol{v},i)-p^\sigma(\boldsymbol{v},i)| < \tau_m + \frac{1}{2}(\tau_b-\tau_a) = \tau_b \implies i\not\in \texttt{heavy}(\boldsymbol{v})
\end{align*}

\begin{align*}
i\in \texttt{K}(\boldsymbol{v}) \implies
\hat{p}^\sigma(\boldsymbol{v},i) &\geq  \tau_m \implies \\
p^\sigma(\boldsymbol{v},i) &\geq \hat{p}^\sigma(\boldsymbol{v},i) - |\hat{p}^\sigma(\boldsymbol{v},i)-p^\sigma(\boldsymbol{v},i)| \geq  \tau_m - \frac{1}{2}(\tau_b-\tau_a) = \tau_a  \implies i\in \texttt{suspect}(\boldsymbol{v})
\end{align*}
\end{proof}

\subsection{The Stable Estimator} \label{sec:basic-stable-sa-estimator}

This estimator is designed for a {\em continuous reporting} version of the heavy hitters problem and is beneficial
when the input sequence is of related vectors (as in streaming).  In this case we report $\texttt{K}$ continuously and modify it as needed due to input changes. In these applications we desire {\em stability} of $\texttt{K}$, in the sense of avoiding thrashing, where a borderline key exits and re-enters $\texttt{K}$ following minor updates. We shall see that stability can significantly improve robustness guarantees, as we only need to account for {\em changes} in the reported set instead of the total size of each reported set. 
 
 Our stable estimator uses two threshold values:
 \begin{align} \label{stablethresholds:eq}
 \tau_{m_1} := \tau_a + \frac{1}{5} (\tau_b-\tau_a) \\
 \tau_{m_2} := \tau_b - \frac{1}{5} (\tau_b-\tau_a)\ .
\end{align} 
 
 A key $i\not\in \texttt{K}$ enters the reported set when
 $\max\{\hat{p}^+(\boldsymbol{v},i), \hat{p}^-(\boldsymbol{v},i) \} \geq \tau_{m_2}$.
 A key $i\in \texttt{K}$ exits the reported set when
 $\max\{\hat{p}^+(\boldsymbol{v},i), \hat{p}^-(\boldsymbol{v},i) \} < \tau_{m_1}$.

\begin{lemma} (correctness of stable estimators) \label{stablesmallerrthencorrect:lemma} 
If for all queries 
$\boldsymbol{v}$, keys $i\in [n]$, and $\sigma\in\{-1,+1\}$, our estimates satisfy
\begin{equation} 
|\hat{p}^\sigma(\boldsymbol{v},i)-p^\sigma(\boldsymbol{v},i)| \leq \frac{1}{5}(\tau_b-\tau_a)
\end{equation}
then the output of the stable estimator is correct ($\texttt{K}$ includes all $\textrm{heavy}(\boldsymbol{v})$ keys and only $\textrm{suspect}(\boldsymbol{v})$ keys). Moreover, the reporting status of a key can change only when $p(\boldsymbol{v},i)$ changes by at least $\frac{1}{5}(\tau_b-\tau_a)$.
\end{lemma}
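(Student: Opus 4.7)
The plan is to prove both assertions by a direct case analysis on the value of $\hat p(\boldsymbol{v},i) := \max\{\hat p^+(\boldsymbol{v},i),\hat p^-(\boldsymbol{v},i)\}$, exploiting the fact that the estimate error bound transfers from each $\hat p^\sigma$ to $\hat p$ since $|\max_\sigma \hat p^\sigma - \max_\sigma p^\sigma| \le \max_\sigma |\hat p^\sigma - p^\sigma| \le \tfrac15(\tau_b-\tau_a)$. I will use this inequality in both directions throughout.

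For the correctness claim, I will argue that after processing any query $\boldsymbol{v}$ the reported set $\texttt{K}$ contains every heavy key and no non-suspect key, regardless of what $\texttt{K}$ looked like before the query. If $i$ is heavy then $p(\boldsymbol{v},i)\ge \tau_b$, so $\hat p(\boldsymbol{v},i)\ge \tau_b-\tfrac15(\tau_b-\tau_a)=\tau_{m_2}$; by the entry rule $i$ enters $\texttt{K}$ if it was out, and by the exit rule it cannot leave since $\hat p \ge \tau_{m_2}>\tau_{m_1}$. Symmetrically, if $i$ is not suspect then $p(\boldsymbol{v},i)<\tau_a$, so $\hat p(\boldsymbol{v},i)<\tau_a+\tfrac15(\tau_b-\tau_a)=\tau_{m_1}$; hence $i$ cannot enter $\texttt{K}$ (entry requires $\hat p\ge \tau_{m_2}>\tau_{m_1}$), and it is forced out if it was previously in. Keys that are suspect but not heavy may be in or out, which is consistent with correctness.

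For the stability claim, I will look at an arbitrary key whose reporting status changes between two queries $\boldsymbol{v}_1$ and $\boldsymbol{v}_2$. If it changes from ``out'' to ``in'' then $\hat p(\boldsymbol{v}_1,i)<\tau_{m_2}$ and $\hat p(\boldsymbol{v}_2,i)\ge \tau_{m_2}$; the converse case is analogous with $\tau_{m_1}$. A status change therefore requires $\hat p$ to cross the gap between the two thresholds, which has width $\tau_{m_2}-\tau_{m_1}=\tfrac35(\tau_b-\tau_a)$. Applying the estimate guarantee once at each endpoint to translate back to $p$, I obtain $|p(\boldsymbol{v}_2,i)-p(\boldsymbol{v}_1,i)| \ge \tfrac35(\tau_b-\tau_a)-2\cdot\tfrac15(\tau_b-\tau_a)=\tfrac15(\tau_b-\tau_a)$, which is exactly the claimed sensitivity lower bound.

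Essentially no obstacle is expected: the entire argument is a choice-of-constants verification, and the thresholds in \eqref{stablethresholds:eq} were picked precisely so that both the ``error band'' $\tfrac15(\tau_b-\tau_a)$ around $\tau_{m_2}$ (resp.\ $\tau_{m_1}$) lies strictly above $\tau_b$ (resp.\ below $\tau_a$), and so that the surviving interior gap is itself $\tfrac15(\tau_b-\tau_a)$. The only mild care point is to state the argument for both signs $\sigma\in\{+1,-1\}$ simultaneously via the $\max$; I will handle this with the one-line inequality above rather than splitting into two symmetric cases.
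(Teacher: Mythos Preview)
Your correctness argument is correct and is exactly what the paper intends (its own proof is just ``similar to Lemma~\ref{smallerrthencorrect:lemma}'').

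Your stability argument has a gap. You take $\boldsymbol{v}_1,\boldsymbol{v}_2$ to be (apparently consecutive) queries at which the key is out and then in, and you correctly derive $\hat p(\boldsymbol{v}_1,i)<\tau_{m_2}$ and $\hat p(\boldsymbol{v}_2,i)\ge\tau_{m_2}$. But that only says $\hat p$ crossed the \emph{single} threshold $\tau_{m_2}$; it does not force $\hat p$ to cross the whole gap $\tau_{m_2}-\tau_{m_1}$. In fact the claim is simply false for a single status change between consecutive queries: take $p(\boldsymbol{v}_1,i)$ just below $\tau_{m_2}$ and $p(\boldsymbol{v}_2,i)$ just above it with perfect estimates; the key enters while $p$ moves by an arbitrarily small amount. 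So your ``therefore $\hat p$ crosses the gap'' step does not follow from what you set up.

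The intended reading of the stability claim (consistent with Definition~\ref{flipnumber:def} and Remark~\ref{flipbound:rem}) is a comparison at two \emph{consecutive status changes}, not at two consecutive queries. If the key enters at some $\boldsymbol{v}_a$ and next exits at some later $\boldsymbol{v}_b$, then the entry rule gives $\hat p(\boldsymbol{v}_a,i)\ge\tau_{m_2}$ and the exit rule gives $\hat p(\boldsymbol{v}_b,i)<\tau_{m_1}$, so $\hat p$ drops by at least $\tau_{m_2}-\tau_{m_1}=\tfrac35(\tau_b-\tau_a)$. Now your final subtraction of the two error bands is legitimate and yields $p(\boldsymbol{v}_a,i)-p(\boldsymbol{v}_b,i)>\tfrac15(\tau_b-\tau_a)$. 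The exit-then-enter case is symmetric. Your $\tfrac35-\tfrac25=\tfrac15$ arithmetic is right; it just needs to be placed between two opposite-type status changes rather than across a single one.
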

\begin{proof}
Similar to that of Lemma~\ref{smallerrthencorrect:lemma}.
\end{proof}

\subsection{Basic estimates}  \label{basicest:sec}
The {\em basic} estimates are simple averages over buckets:
\begin{equation} \label{countagree:eq}
\hat{p}^\sigma(\boldsymbol{v},i) := \frac{b}{d} \sum_{t\in [d]} \mathbbm{1}\{\mu_t[i] \cdot c_t \cdot \sigma > 0\}\ .
\end{equation}

\begin{lemma}
In an oblivious setting
(when the buckets are an independent sample from $\mathcal{B}$ that does not depend on $\boldsymbol{v}$) we have that for any constant $\tau_\Delta$ and $\beta>0$, using
$d=O(\frac{1}{\tau_\Delta ^2} b \log(mn/\beta)$, 
\[\Pr[\max_{q\in[m],i\in[n],\sigma\in\{-1,+1\}} |p^\sigma(\boldsymbol{v},i)-\hat{p}^\sigma(\boldsymbol{v},i)|> \tau_\Delta] \leq \beta \] 
\end{lemma}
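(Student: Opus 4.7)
The plan is to fix each triple $(q, i, \sigma) \in [m]\times[n]\times\{-1,+1\}$, show that $\hat{p}^\sigma(\boldsymbol{v}_q, i)$ concentrates around $p^\sigma(\boldsymbol{v}_q, i)$ with probability at least $1-\beta/(4mn)$, and then union bound over the $2mn$ triples. Because the setting is oblivious, each $\boldsymbol{v}_q$ may be treated as deterministic when bounding the deviation for the $q$-th query, so the $m$-fold union bound only costs a $\log m$ factor in $d$.

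For a fixed triple, by \eqref{countagree:eq} we have $\hat{p}^\sigma(\boldsymbol{v}_q,i) = \tfrac{b}{d}\sum_{t=1}^d X_t$ where $X_t := \mathbbm{1}\{\mu_t[i]\cdot c_t(\boldsymbol{v}_q)\cdot\sigma > 0\}$. The event $\{\mu_t[i]\neq 0\}$ has probability $1/b$, and conditional on it the sign-alignment event has probability exactly $p^\sigma(\boldsymbol{v}_q,i)$ by definition, so $\E[X_t]=p^\sigma(\boldsymbol{v}_q,i)/b$ and hence $\E[\hat{p}^\sigma(\boldsymbol{v}_q,i)] = p^\sigma(\boldsymbol{v}_q,i)$.

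Next, I would invoke a Chernoff-type tail bound. For $\Bcountsketch$ the $X_t$ are i.i.d.\ Bernoullis with $\mathrm{Var}(X_t)\leq 1/b$, and Bernstein's inequality applied to the sum $\sum_t X_t$ (deviation threshold $d\tau_\Delta/b$, variance $d/b$) gives a bound of the form
\[ \Pr\!\left[\bigl|\hat{p}^\sigma(\boldsymbol{v}_q,i) - p^\sigma(\boldsymbol{v}_q,i)\bigr| > \tau_\Delta\right] \leq 2\exp\!\left(-c\cdot \tfrac{d}{b}\,\tau_\Delta^2\right) \]
for an absolute constant $c>0$. For $\countsketch$ the same bound is obtained by grouping the $d$ indicators into $d/b$ row-independent $\{0,1\}$-valued variables of mean $p^\sigma(\boldsymbol{v}_q,i)$ and applying Hoeffding directly. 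Choosing $d = C\cdot b\log(mn/\beta)/\tau_\Delta^2$ for a sufficiently large $C$ makes the per-triple failure probability at most $\beta/(4mn)$, and a union bound over $(q,i,\sigma)$ finishes the proof.

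The main subtlety is picking the right concentration inequality in the $\Bcountsketch$ case: a direct Hoeffding bound on the $d$ sparse Bernoullis only produces $\exp(-\Omega(d\tau_\Delta^2/b^2))$, which would force $d=\Omega(b^2\log(mn/\beta)/\tau_\Delta^2)$. The sparsity $\E[X_t]\leq 1/b$ must be exploited, either through Bernstein's inequality (as above) or equivalently a multiplicative Chernoff bound on the sum, to save the factor of $b$ and land on the stated $d=O(b\log(mn/\beta)/\tau_\Delta^2)$.
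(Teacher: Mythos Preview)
Your proposal is correct and follows essentially the same approach as the paper: a multiplicative Chernoff (equivalently, Bernstein) bound on the sparse Bernoulli indicators to get a per-triple failure probability of $2e^{-\Omega(\tau_\Delta^2 d/b)}$, followed by a union bound over the $2mn$ triples. Your explicit discussion of why naive Hoeffding loses a factor of $b$ is a useful addition that the paper omits.
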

\begin{proof}
From multiplicative Chernoff bound, we obtain that 
\[\Pr[|p^\sigma(\boldsymbol{v},i)-\hat{p}^\sigma(\boldsymbol{v},i)|> \tau_\Delta] \leq 2 e^{-\frac{1}{3} \tau_\Delta^2 \frac{d}{b}} \ .\]
We obtain the claim by applying a union bound over the $2mn$ queries.
\end{proof}
It follows that in the oblivious setting the sign-alignment estimators are correct with $d=O(b \log (m n/\beta)$.  This matches the utility guarantees provided with the median estimator (Section~\ref{utilitymedian:sec}).
We shall see however that as is the case for the median estimator, our sign-alignment estimators with the basic estimates are non-robust in adaptive settings.  The robust estimators we introduce in Section~\ref{robustest:sec} use estimates $\hat{p}^\sigma(\boldsymbol{v},i)$ that are more nuanced.

\section{Robust Estimators} \label{robustest:sec}

We provide two sign-alignment estimators for 
$\Bcountsketch$ that are robust against adaptive adversaries. 
A robust version of the threshold estimator of Section~\ref{sec:basic-sa-estimator}, that is described as
Algorithm~\ref{algo:robust-count-sketch}
in  Section~\ref{robustthresholddescription:sec} (correctness proof provided in Section~\ref{robustestproofs:sec}), and a 
robust version of the stable estimator of Section~\ref{sec:basic-stable-sa-estimator} that is described as
Algorithm~\ref{algo:streaming-robust} in Appendix~\ref{streaming:sec}.

In the introduction we stated the robustness guarantees in terms of the size of the query sequence, that is, a sketch with parameters $(d,b)$ provides guarantees for all query sequences $Q=(\boldsymbol{v}_q)_{q=1}^m$ where $|Q| = \tilde{O}((d/b)^2)$.  The number of inputs is a coarse parameter that uses up the same "robustness budget" for each query $\boldsymbol{v}_q$ even when very few keys are actually reported or when there is little or no change between reported sets on consecutive inputs.
We introduce a refined property of sequences, its $\lambda$-number ($\lambda_Q$),  that accounts for smaller output sizes with the threshold estimator and only for changes in the output with the stable estimators.  We then establish robustness guarantees in terms of $\lambda_Q = \tilde{O}((d/b)^2)$.  This allows the robust handling of much longer sequences in some conditions.

The $\lambda$-number we use to analyze our
robust threshold estimator (Algorithm~\ref{algo:robust-count-sketch}) 
accounts only for potential reporting of each key, namely, the number of times it occurs in $\mathrm{suspect}(\boldsymbol{v}_q)$. This saves ``robustness budget'' on inputs  $\boldsymbol{v}_q$ with a small number of heavy keys or with no heavy keys.
\begin{definition} ($\lambda$-number of an input sequence)
For an input sequence $Q=(\boldsymbol{v}_q)_{q=1}^m$ and a key $i$, define
\begin{equation} \label{lambdai:eq}
    \lambda_{Q,i} := \sum_{\boldsymbol{v}\in Q} \mathbbm{1}\{i\in \mathrm{suspect}(\boldsymbol{v})\}
\end{equation} to be
the number of vectors $\boldsymbol{v}\in Q$ for which $i$ is in $\mathrm{suspect}(\boldsymbol{v})$.
For an input sequence $Q$, define
\begin{equation} \label{sequencelambda:eq}
\lambda_{Q} := \min\big\{ \max_i \lambda_{Q,i} ,  \frac{1}{C_a \cdot b} \sum_i  \lambda_{Q,i}  \big\}\ .
\end{equation}
\end{definition}

The $\lambda$-number we use to analyze our
stable robust estimator (Algorithm~\ref{algo:streaming-robust}) accounts 
only for {\em changes} in the output between consecutive inputs. This is particularly beneficial for streaming applications, where updates to the input are incremental and hence consecutive outputs tend to be similar. For this purpose we redefine $\lambda_{Q,i}$ to bound the number of times that the key $i$ may enter or exit the reported set when a stable estimator is used (note that the redefined value is at most twice \eqref{lambdai:eq} but can be much smaller).  We then redefine $\lambda_Q$ accordingly as in \eqref{sequencelambda:eq}. Note that the redefined values always satisfy $\lambda_Q \leq 2 |Q|$ but it is possible to have $\lambda_Q \ll |Q|$, allowing for robustness on longer streams with the same budget. The approach of accounting for changes in the output in the context of robust streaming was first proposed in~\cite{BenEliezerJWY21} and we extend their use of the term {\em flip number}.
   \begin{definition} (flip number of a key) \label{flipnumber:def}
   Consider an input sequence $Q$. We say that a key $i$ is high at step $q$ 
   if $p(\boldsymbol{v}_q,i) \geq \tau_b-\frac{2}{5}(\tau_b-\tau_a)$ and is low at step $q$ if $p(\boldsymbol{v}_q,i) \leq \tau_a + \frac{2}{5}(\tau_b-\tau_a)$.
   The {\em flip number} of a key $\lambda_{Q,i}$ is defined as the number of transitions from low to high or vice versa (skipping steps where it is neither). \end{definition}
   \begin{remark} \label{flipbound:rem}
 Consider the stable estimator when $|\hat{p}^\sigma_{\boldsymbol{v}_q,i} - p^\sigma_{\boldsymbol{v}_q,i}| \leq \frac{1}{5}(\tau_b-\tau_a)$ for all $\sigma,q,i$.  The number of times key $i$ enters or exits the reported set is at most $\lambda_{Q,i}$.
\end{remark}

Our robust estimators provide the following guarantees:
\begin{theorem} \label{Adaptive_simplified:thm}
Our robust threshold (Algorithm~\ref{algo:robust-count-sketch}) and stable (Algorithm~\ref{algo:streaming-robust}) estimators, with appropriate setting of the constants, provide the following guarantees (each for its respective definition of  $\lambda_Q$):
Let $C_3, c_1$ be appropriate constants. Let $\beta > 0$.
Consider an execution of our robust estimator with adaptive inputs $Q = (\boldsymbol{v}_1,\boldsymbol{v}_2,\dots, \boldsymbol{v}_{|Q|})$, access limit $L$,  and i.i.d\ initialization of $(\boldsymbol{\mu}_t)_{t\in [d]}$ and
\begin{equation} \label{dset:eq}
d \geq C_3  \cdot b\cdot \sqrt{L} \log(L \log(m\cdot n\cdot b\cdot L)) \log (m\cdot n\cdot L) \log(\frac{m\cdot n}{\beta})
\end{equation}
Then if $\lambda_Q \leq c_1 L$ and $|Q|\leq m$, with 
probability $1-\beta$ all outputs are correct.
\end{theorem}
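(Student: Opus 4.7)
The plan is to reduce correctness of the reported set to the accuracy of the per-query sign-alignment estimates $\hat{p}^\sigma(\boldsymbol{v}_q,i)$, and then to establish such accuracy uniformly over the adaptively chosen queries via a differential-privacy generalization argument. Concretely, by Lemma~\ref{smallerrthencorrect:lemma} for the threshold estimator and Lemma~\ref{stablesmallerrthencorrect:lemma} for the stable one, it suffices to prove that with probability at least $1-\beta$ over the initialization, we have $|\hat{p}^\sigma(\boldsymbol{v}_q,i) - p^\sigma(\boldsymbol{v}_q,i)| \leq \tfrac{1}{5}(\tau_b-\tau_a)$ simultaneously for every triple $(q,i,\sigma)$ ever consulted by the algorithm. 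Call this event $\mathcal{E}$; everything after follows from Lemmas~\ref{smallerrthencorrect:lemma} and \ref{stablesmallerrthencorrect:lemma} together with the suspect-set cap $|\mathrm{suspect}(\boldsymbol{v}_q)| \leq (C_a+1)b$ from Corollary~\ref{pspecs:coro}.

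First I would establish a non-adaptive single-query version of $\mathcal{E}$. Since the $d$ buckets of $\Bcountsketch$ are i.i.d.\ draws from $\mathcal{B}$, the $|T_i|=d/b$ indicators $\mathbbm{1}\{\mu_t[i]\, c_t\, \sigma > 0\}$ used by the robust estimator to approximate $p^\sigma(\boldsymbol{v},i)$ are, conditionally on $i$ participating, i.i.d.\ Bernoulli with mean $p^\sigma(\boldsymbol{v},i)$. The robust estimator computes $\hat{p}^\sigma$ as a differentially private average of these indicators, taking the \emph{individual bucket} as the privacy unit; the average has $\ell_\infty$ sensitivity $O(b/d)$ with respect to a single bucket, so Laplace noise calibrated accordingly yields an $\varepsilon_0$-DP release whose additive noise is below $\tfrac{1}{5}(\tau_b-\tau_a)$ while keeping $\varepsilon_0$ a small constant. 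A Chernoff bound on the i.i.d.\ indicators controls the sampling error by $O(\sqrt{\log(mn/\beta)/(d/b)})$, which is absorbed by the polylogarithmic factors in \eqref{dset:eq}.

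The core of the argument lifts this to adaptively chosen $(q,i,\sigma)$. By the algorithmic access limit $L$ and the assumption $\lambda_Q \leq c_1 L$, each individual bucket is consulted in at most $L$ estimates across the execution. Here both clauses of the definition \eqref{sequencelambda:eq} enter: the max-over-$i$ clause directly bounds per-key usage, while the averaged clause, combined with the uniformly random participation pattern of each bucket and a standard balls-into-bins Chernoff argument, ensures no bucket is overloaded even though the sequence of queried keys is adaptive. Advanced composition of $L$ applications of an $\varepsilon_0$-DP elementary release then gives an end-to-end $(\varepsilon,\delta)$-DP guarantee with $\varepsilon = O(\varepsilon_0 \sqrt{L \log(1/\delta)})$ per bucket. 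Invoking the transfer theorem of DP-generalization (Dwork--Feldman--Hardt--Pitassi--Reingold--Roth), which applies precisely because the buckets are i.i.d.\ samples from $\mathcal{B}$ -- this is the whole reason $\Bcountsketch$ was introduced in place of $\countsketch$ -- one concludes that for every adaptively produced query the empirical $\hat{p}^\sigma$ is within $O(\varepsilon)$ of the population $p^\sigma(\boldsymbol{v}_q,i)$ with the stated probability. Tuning so that both $\varepsilon$ and the Laplace noise stay below $\tfrac{1}{5}(\tau_b-\tau_a)$ forces $d/b \gtrsim \sqrt{L}\cdot \mathrm{polylog}(mnL/\beta)$, matching \eqref{dset:eq}.

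The hardest step, I expect, will be certifying end-to-end differential privacy in the presence of the \emph{access-limit early stopping}: the set of buckets the estimator actually queries in round $q$ depends on the outputs of rounds $1,\dots,q-1$, and the stopping rule is itself a function of the random access history. A clean way to handle this is to view the estimator's interaction with the buckets as a long transcript of $(\varepsilon_0,0)$-DP elementary releases whose domain (the set of live buckets) is chosen adaptively but whose release mechanism is oblivious per step, so that sequential-DP composition applies; one then appeals to post-processing to lift accuracy from the DP-protected view to the actual reported sets. Making this precise -- in particular verifying that the bookkeeping needed to enforce the $L$-cap is itself computable from past (privately released) outputs, and that the transfer theorem is correctly invoked against a joint distribution in which both the queries and the schedule of bucket accesses are adaptive -- is the delicate part; the remainder of the proof is routine given the correct DP abstraction.
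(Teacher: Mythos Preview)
Your high-level architecture---reduce correctness to uniform accuracy of $\hat p^\sigma$, then lift non-adaptive accuracy to adaptive via DP generalization over the i.i.d.\ buckets---matches the paper. The genuine gap is in the privacy accounting. You describe the mechanism as ``Laplace noise on each average'' with per-release budget $\varepsilon_0$, and then invoke advanced composition over $L$ releases per bucket. But the algorithm issues $2n$ threshold queries to the monitor for every input vector, i.e.\ $2mn$ queries total, and under plain output perturbation every one of these would cost privacy for every bucket; composition would then force $d/b\gtrsim\sqrt{mn}$ rather than the claimed $\sqrt{L}$. What the paper actually uses is the \emph{fine-grained sparse vector} $\ThresholdMonitor$ of Kaplan--Mansour--Stemmer: a bucket's access counter is incremented only when the query returns $\top$ \emph{and} the predicate is $1$ on that bucket. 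Thus the total charge to bucket $t$ equals the number of reported keys, over all rounds, that participate in $t$---precisely the quantity bounded by $\lambda_Q$ via the ``useful bucket'' argument. Without this pay-only-on-$\top$ structure the $\sqrt{L}$ dependence in \eqref{dset:eq} cannot be obtained.

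Relatedly, you are missing the paper's treatment of bucket inactivation. The access limit $L$ does guarantee that no bucket is charged more than $L$ times, but trivially so---over-limit buckets are simply deleted---so the content is not that charges are bounded but that deletions do not hurt accuracy. The paper handles this by defining a bucket as \emph{useful} for the full sequence $Q$ (a deterministic property independent of the execution), proving via Markov that $\lambda_Q\le c_1 L$ implies each key has a $1-\tau_\Delta$ fraction of its buckets useful, and then showing by induction over rounds that, as long as all prior outputs were correct (only suspect keys reported), every useful bucket remains active. This breaks the circularity you flag in your last paragraph and is the actual delicate step; the DP guarantee of $\ThresholdMonitor$ with adaptive deletion is provided as a black-box theorem and is not where the difficulty lies. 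Your balls-into-bins sketch is aimed in the right direction but does not supply this induction, and the claim ``each individual bucket is consulted in at most $L$ estimates'' conflates the (enforced) charge cap with the (to-be-proved) statement that enough buckets survive.
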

Restated,  we obtain that a sketch with parameters $(n,d,b)$, with our robust estimators, provides robustness to adaptive inputs $Q$ with $\lambda_q = \Otilde((d/b)^2)$.\footnote{Importantly, to obtain this robustness guarantee we do not have to actually track $\Lambda_Q$ for our input sequence so that we can stop when once a limit is reached.  The design of our algorithm allows us to determine when the guarantees "fail" (our algorithm associates an "access count" for each sketch buckets and inactivates buckets that reach an "access limit."  The accuracy guarantee fails when too many buckets of the same key turn inactive, which is something we can track. }

\subsection{The Robust Threshold Estimator} \label{robustthresholddescription:sec}
\newcommand{\TM}{\mathrm{TM}}

 Our robust threshold estimator is provided as Algorithm~\ref{algo:robust-count-sketch}
(The constants $C_1,C_2$ will be chosen sufficiently large).
The algorithm initializes a $\ThresholdMonitor$ structure $\TM$ \cite{KaplanMS21-Sparse-Vector}
(see Algorithm~\ref{algo:threshold}) over the dataset of the $d$  measurement vectors (buckets). 
A \texttt{ThresholdMonitor} inputs a predicate that is defined over $\Supp(\mathcal{B})$ and a threshold value and tests whether a noisy count of the predicate over 
$(\boldsymbol{\mu}_t)_{t\in[d]}$ exceeds the threshold.  It has the property that the privacy budget is only charged on queries where the noisy count exceeds the threshold and only buckets on which the predicate evaluates as correct are charged. The access limit $L$ specifies how many times we can charge a bucket before it gets inactivated.
The values $\lambda_{Q,i}$ constitute upper bounds on the number of times the buckets of key $i$ get charged during the execution.  
More details on $\TM$ and the correctness proof of our algorithm are 
provided in Appendix~\ref{robustestproofs:sec}.

For each query vector, the estimator loops over all keys $i\in [n]$ and tests whether the count of the predicates $h^\sigma_{\boldsymbol{v},i}$ over active buckets, with noise added, exceeds a  threshold. If so, the key is reported and the access count for the buckets that contributed to the count is incremented.  Otherwise, the key is not reported. In Appendix~\ref{faster:sec} we show that we can make this more efficient by using a non-robust heavy-hitters sketch to exclude testing of keys that are highly unlikely to be reported. 

The robust estimator as presented only reports a set of keys $\texttt{K}$.
In Appendix~\ref{estimates:sec} we describe how weight estimates can be reported as well for $i\in\texttt{K}$, by only doubling the "robustness budget" (number of accesses to buckets of $\TM$).

\begin{algorithm2e}\caption{$\ThresholdMonitor$  \cite{KaplanMS21-Sparse-Vector}}\label{algo:threshold}
\DontPrintSemicolon
\KwIn{Database $S\in X^*$, privacy parameters $\eps, \delta$, access limit $L$, adaptive sequence of queries $(f_i,s_i,\tau_i)_i$ for predicates $f_i:X\to \{0,1\}$, $s_i\in \{+1,-1\}$, and $\tau_i \in \mathbb{N}$}
\lForEach(\tcp*[h]{Initialize counters}){$x\in S$}{$c(x) \gets 0$}
$\Delta \gets  \frac{1}{\eps} \log\left( \frac{1}{\delta} \right) \log\left( \frac{1}{\eps} \log\frac{1}{\delta} \right)$\;
\ForEach(\tcp*[h]{round $i$}){query $(f_i,\sigma_i,\tau_i)$}{
Choose $a \sim \Lap(10\Delta)$ and $b \sim \Lap(\frac{1}{\eps}\log\frac{1}{\delta})$\;
$\hat{f}_i \gets f_i(S) + a + \mathbbm{1}_{\{s_i=1\}}\cdot\min\{\Delta,b\} + \mathbbm{1}_{\{s_i=-1\}}\cdot\max\{-\Delta,b\}$ \;
\If{$\hat{f}_i \cdot s_i < \tau_i \cdot s_i$}{
    \textbf{Output} $\perp$.
    }
\Else {
    \lForEach{$x\in S$}{$c(x) \gets c(x) + f_i(x)$}
    Delete from $S$ every element $x$ such that $c(x)\ge L$\;
    \textbf{Output} $\top$.\\
    }
}
\end{algorithm2e}

\begin{algorithm2e}
    \caption{Robust Threshold $\Bcountsketch$ Estimator}
    \label{algo:robust-count-sketch}
    \DontPrintSemicolon
    \KwIn{Sketch parameters $(n,d,b)$, Access limit $L$, upper bound $m \geq  L$ on the number of queries.}
    $(\boldsymbol{\mu}_t)_{t\in [d]} \gets$ initialized $\Bcountsketch$ with parameters $(n,d,b)$\;
    $\tau_m = (\tau_a+\tau_b)/2$ \tcp{threshold }
    \SetKwProg{Init}{initialize}{:}{}
    \Init{{\rm $\ThresholdMonitor$} $\TM$}{
    $\TM.S \gets (\boldsymbol{\mu}_t)_{t\in [d]}$ \tcp{Measurement vectors} 
   $\TM.(\eps,\delta) \gets  \left(\frac{C_1}{\sqrt{L}}, \frac{C_2}{n\cdot m \cdot b \cdot L} \right)$\tcp{privacy parameters}
   $\TM.L \gets L$ \tcp{access limit}
    }
     \ForEach{query sketch $(c_t = \langle \boldsymbol{\mu}_t,\boldsymbol{v} \rangle)_{t\in [d]}$ of a vector $\boldsymbol{v}\in \mathbb{R}^n$}{
     $\texttt{K} \gets \emptyset$ \tcp*{Initialize list of output keys}
    \ForEach(\tcp*[h]{loop over keys}){ $i\in [n]$}
    {
    Define $f^+:[d] \to\{0, 1\}$, where $f^+(t) = \mathbbm{1}_{\mu_t[i] \cdot c_t > 0}$\;
    Define $f^-:[d] \to\{0, 1\}$, where $f^-(t)=\mathbbm{1}_{\mu_t[i]\cdot c_t < 0}$\;
    \If{$\TM.query(f^+,+1, \frac{d}{b}\tau_m) = \top$ {\bf or} $\TM.query(f^-,+1, \frac{d}{b}\tau_m) = \top$}{$\texttt{K}\gets \texttt{K}\cup \{i\}$} } 
    \Return{$\texttt{K}$}
    } 
\end{algorithm2e}

\section{Proofs for the robust threshold estimator} \label{robustestproofs:sec}

In this Section we provide a proof of correctness of our robust threshold estimator. 
We use $\ell = d/b$ and the constants $C_a, C_b, \tau_a, \tau_b$ as in Lemma~\ref{lemma:heavyornot}.  Let $\tau_m := (\tau_a+\tau_b)/2$ and
$\tau_\Delta := (\tau_b-\tau_a)/10$.  We establish the following.

\begin{theorem} \label{Adaptive:thm}
Let $C_1, C_2, C_3, c_1$ be appropriate constants. Let $\beta > 0$.
Consider an execution of Algorithm~\ref{algo:robust-count-sketch} with adaptive inputs $Q = (\boldsymbol{v}_1,\boldsymbol{v}_2,\dots, \boldsymbol{v}_{|Q|})$ and i.i.d\ initialization of $(\boldsymbol{\mu}_t)_{t\in [d]}$. Set
\begin{equation}
d \geq C_3  \cdot b\cdot \sqrt{L} \log(L \log(m\cdot n\cdot b\cdot L)) \log (m\cdot n\cdot L) \log(\frac{m\cdot n}{\beta})
\end{equation}
Then if $\lambda_Q \leq c_1 L$, with 
probability $1-\beta$ all outputs are correct.
\end{theorem}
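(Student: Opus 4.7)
The plan is to reduce Theorem~\ref{Adaptive:thm} to three sub-claims: a \emph{correctness} condition on the noisy counts (via Lemma~\ref{smallerrthencorrect:lemma}), a \emph{DP-generalization} bound ensuring these noisy counts remain accurate under adaptive queries, and a \emph{load bound} ensuring the access limit $L$ is never exhausted prematurely. By Lemma~\ref{smallerrthencorrect:lemma}, it suffices to show that on every call $\TM.query(f^\sigma_{\boldsymbol{v}_q,i},+1,\tfrac{d}{b}\tau_m)$ the noisy count $\hat{f}$ satisfies $\bigl|\hat{f} - \tfrac{d}{b}\cdot p^\sigma(\boldsymbol{v}_q,i)\bigr| \leq \tfrac{d}{b}\cdot \tau_\Delta$ for $\tau_\Delta=(\tau_b-\tau_a)/10$: then the threshold $\tfrac{d}{b}\tau_m$ cleanly separates the regimes $p^\sigma\geq\tau_b$ and $p^\sigma\leq\tau_a$, and Corollary~\ref{pspecs:coro} yields a correct output set.

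For the accuracy of $\hat{f}$ I would control two error sources: the Laplace/truncated-Laplace perturbation inside $\TM$, of magnitude $O(\Delta)=\Otilde(\sqrt{L})$ with high probability, and the deviation of $f^\sigma(S)$ from $\tfrac{d}{b}\cdot p^\sigma(\boldsymbol{v}_q,i)$. The latter would be immediate in the oblivious setting (Chernoff, using that $\Bcountsketch$ draws $\boldsymbol{\mu}_1,\ldots,\boldsymbol{\mu}_d$ i.i.d.\ from $\mathcal{B}$), but since $\boldsymbol{v}_q$ is adaptive, I would invoke the generalization theorem for differentially private adaptive analysis on an i.i.d.\ product sample, exploiting that $\TM$ with $\eps = C_1/\sqrt{L}$ and $\delta=C_2/(nmbL)$ is $(\eps,\delta)$-differentially private in its bucket dataset. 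This gives a uniform-in-$q,i,\sigma$ accuracy bound of $\tau_\Delta\cdot\tfrac{d}{b}$ with probability $1-\beta/4$, provided $d/b\gtrsim \sqrt{L}\log(L\log(nmbL))\log(nmL)\log(nm/\beta)$, which is exactly~\eqref{dset:eq}.

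The third ingredient verifies that the access limit $L$ is never breached, so the $\TM$ guarantees apply throughout. I would argue by induction on $q$: assuming correctness of all outputs through round $q-1$, every $\top$ answer so far corresponds to a key $i\in\mathrm{suspect}(\boldsymbol{v}_{q'})$, so the total charge accumulated by bucket $t$ is at most $2\sum_i \lambda_{Q,i}\cdot\mathbbm{1}[t\in T_i]$. By the independence of measurement vectors in $\Bcountsketch$, the indicators $\mathbbm{1}[t\in T_i]$ are $\mathrm{Bernoulli}(1/b)$ independently across $i$; under the hypothesis $\lambda_Q\leq c_1 L$, either $\max_i\lambda_{Q,i}\leq c_1 L$ or $\sum_i\lambda_{Q,i}\leq c_1 C_a L b$, and in either sub-case a Bernstein-type tail bound gives $\sum_i \lambda_{Q,i}\mathbbm{1}[t\in T_i]\leq L/2$ simultaneously for all $t$, with probability $\geq 1-\beta/4$ provided $c_1$ is a sufficiently small constant.

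The main obstacle is the apparent circularity between correctness and load: the DP-generalization step requires that no bucket has been deleted (otherwise the effective sample shrinks and $f^\sigma(S)$ no longer tracks $\tfrac{d}{b}\cdot p^\sigma$), while the load bound requires correctness (so that $\top$ answers imply suspect membership and the $\lambda_{Q,i}$ bookkeeping applies). I would resolve this by defining a single high-probability ``good event'' $\calE$ comprising (i) the uniform DP-generalization accuracy, (ii) the Laplace tail bounds, and (iii) the bucket-load concentration, and then proving by induction on $q$ that $\calE$ simultaneously implies correctness of all outputs through round $q$ \emph{and} no bucket deletion by round $q$. The inductive step upgrades the numeric guarantees in $\calE$ into the correctness conclusion via Lemma~\ref{smallerrthencorrect:lemma}, and a final union bound over the three components of $\calE$ gives the claimed failure probability $\leq \beta$.
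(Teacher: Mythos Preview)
Your decomposition into (a) accuracy of noisy counts via Lemma~\ref{smallerrthencorrect:lemma}, (b) DP generalization for the sign-alignment predicates, and (c) a load bound controlling bucket deletion is the skeleton of the paper's proof, and the induction-on-$q$ maintaining both correctness and the active-bucket invariant is also what the paper does.

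The gap is in your item~(iii). You propose a Bernstein bound on $\sum_i \lambda_{Q,i}\mathbbm{1}[t\in T_i]$ using independence of the indicators across $i$. But the weights $\lambda_{Q,i}$ are functions of the adaptive sequence $Q$, which depends on \emph{all} of the bucket randomness $(\boldsymbol{\mu}_t)_{t\in[d]}$---including bucket $t$ itself. So the coefficients $\lambda_{Q,i}$ and the Bernoullis $\mathbbm{1}[\mu_t[i]\neq 0]$ are correlated through the adaptive interaction, and Bernstein does not apply. This is a different circularity from the correctness/load one you identified, and your ``good event plus induction'' maneuver does not resolve it: item~(iii) of $\calE$ is not a well-defined event on the product space $\mathcal{B}^d$ prior to fixing $Q$. (Even setting adaptivity aside, getting $\Pr[\text{some bucket overloaded}]\leq \beta$ by Bernstein plus a union bound over $d$ buckets would force $c_1 = O(1/\log(d/\beta))$, not a universal constant.)

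The paper's fix is to drop the goal of showing \emph{no} deletion and instead show that for each key $i$, only a $\tau_\Delta$-fraction of the buckets $i$ participates in can ever become inactive. It defines a bucket $\boldsymbol{\mu}$ to be \emph{useful} (with respect to $Q$) if $\sum_i \lambda_{Q,i}\mathbbm{1}[\mu[i]\neq 0]\leq L$, and then (1) by a plain Markov argument---no concentration---$\Pr_{\boldsymbol{\mu}\sim\mathcal{B}}[\neg\textrm{useful}_Q(\boldsymbol{\mu})\wedge \mu[i]\neq 0]\leq \tau_\Delta/b$ whenever $\lambda_Q\leq c_1 L$; and (2) the predicates $\boldsymbol{\mu}\mapsto \mathbbm{1}[\neg\textrm{useful}_Q(\boldsymbol{\mu})\wedge \mu[i]\neq 0]$ are added to the family $H$ on which DP generalization (Theorem~\ref{theo:DP-generalization}) is invoked, so their empirical frequencies on $(\boldsymbol{\mu}_t)_{t\in[d]}$ track their $\mathcal{B}$-expectations even though $Q$ is adaptive. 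Since useful buckets provably stay active under the inductive hypothesis, the live-sample count $f^\sigma(S)$ differs from the full-sample count by at most $2\tau_\Delta d/b$, and the remaining error terms are handled as you describe. The point is that the adaptivity of the usefulness predicate is absorbed into the same DP-generalization step as the sign-alignment predicates, not by a separate tail bound.
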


\subsection{Tools from Differential Privacy} \label{DPtools:sec}
First, we need to introduce necessary tools from Differential Privacy.

\begin{theorem} (Generalization property of DP \cite{DworkFHPRR15,BassilyNSSSU:sicomp2021,FeldmanS17}) \label{theo:DP-generalization}
Let $\calA:X^d \to 2^{X}$ be an $(\xi, \eta)$-differentially private algorithm that operates on a database of size $d$ and outputs a predicate $h:X\to \{0,1\}$. Let $\mathcal{D}$ be a distribution over $X$, let $S$ be a database containing $d$ i.i.d. elements from $\mathcal{D}$, and let $h\gets \calA(S)$. Then for any $T\ge 1$ we have that
{\small
\[
\Pr_{S\sim \mathcal{D},\atop h\gets \calA(S)}\left[ e^{-2\xi} h(\calD) - \frac{1}{d} \sum_{X\in \mathcal{S}} h(X)> \frac{4}{\xi d}\log(T+1) + 2T\eta \right] < \frac{1}{T}.
\]}
\end{theorem}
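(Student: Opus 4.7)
My plan follows the two-stage ``DP implies generalization'' recipe: (i) prove an expected-value lemma via a one-element swap argument, and (ii) amplify it to a $1/T$ tail bound via a monitor over $T$ independent copies, combining a multiplicative Chernoff bound (which contributes the $\tfrac{4}{\xi d}\log(T+1)$ term) with the DP slack (which contributes the $2T\eta$ term).

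Stage (i): I first show that for any $(\xi,\eta)$-DP algorithm $\calA'$ outputting a predicate $h:X\to\{0,1\}$ on $S\sim\calD^n$,
\begin{equation*}
\mathbb{E}_{S,\,h\gets\calA'(S)}\!\left[h(\calD)\right] \;\leq\; e^\xi\,\mathbb{E}_{S,\,h}\!\left[\tfrac{1}{n}\sum_{X\in S} h(X)\right] + \eta.
\end{equation*}
The argument is the standard one-element swap: for each $i\in[n]$, replace $X_i\in S$ by a fresh independent $X_i'\sim\calD$ to obtain a neighboring dataset $S^{(i)}$; $(\xi,\eta)$-DP relates the distribution of $\calA'(S)$ to that of $\calA'(S^{(i)})$; on the swapped side, $h$ is independent of $X_i$, so its expected value at $X_i$ equals $h(\calD)$; averaging over $i$ gives the claim.

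Stage (ii): Run $T$ independent copies of $\calA$ on i.i.d.\ datasets $S_1,\ldots,S_T\sim\calD^d$, yielding $h_j\gets\calA(S_j)$ and excesses $\phi_j := e^{-2\xi}h_j(\calD) - \tfrac{1}{d}\sum_{X\in S_j} h_j(X)$. The monitor $\mathcal{M}(\mathbf{S})$ outputs $(j^*,h^*)=(j^*,h_{j^*})$ with $j^* := \arg\max_j \phi_j$; by parallel composition, $\mathcal{M}$ is $(\xi,\eta)$-DP on the combined size-$Td$ dataset. Applying the Stage (i) swap lemma to $\mathcal{M}$ coordinate-by-coordinate, while tracking the indicator $\mathbbm{1}\{j=j^*\}$ so that only swaps inside the random sub-dataset $S_{j^*}$ contribute to the empirical mean of interest, yields $\mathbb{E}[h^*(\calD)] \leq e^\xi \mathbb{E}[\tfrac{1}{d}\sum_{X\in S_{j^*}} h^*(X)] + T\eta$, which rearranges to $\mathbb{E}[\phi_{j^*}] \leq T\eta$. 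Separately, for any fixed predicate $h$ independent of $S\sim\calD^d$, a multiplicative Chernoff/Bernstein bound gives $\Pr[\tfrac{1}{d}\sum_{X\in S} h(X) \leq e^{-2\xi} h(\calD) - \beta] \leq \exp(-\tfrac{1}{4}\xi \beta d)$: the multiplicative gap $1-e^{-2\xi}\approx 2\xi$ combined with additive slack $\beta$ yields a cross-term $\xi\beta d$ in the Bernstein exponent, uniformly in $h(\calD)$, so at $\beta := 4\log(T+1)/(\xi d)$ the tail is at most $1/(T+1)$. Since $\{\phi_{j^*}>\alpha\} = \{\exists j:\phi_j>\alpha\}$ and by symmetry $p := \Pr[\phi_1>\alpha]$ satisfies $\Pr[\exists j:\phi_j>\alpha] \geq 1-(1-p)^T$, combining the Chernoff tail with the monitor expectation bound (the $T\eta$ on the monitor side amplifies into a $2T\eta$ shift in the threshold after absorbing the $e^{\pm\xi}$ factors) yields $p < 1/T$ at $\alpha = 4\log(T+1)/(\xi d) + 2T\eta$.

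The main obstacle is the partial-swap analysis of the monitor. We need the DP slack to be $T\eta$, not $Td\eta$ (from a naive element-by-element application summing over all $Td$ coordinates) or $de^{d\xi}\eta$ (from group privacy over an entire $S_{j^*}$). The trick is that only swaps within the random $S_{j^*}$ actually contribute to the empirical-mean target: the indicator $\mathbbm{1}\{j=j^*\}$ converts a sum of $Td$ potential swap-slacks into $d$ ``active'' slacks in expectation, and the remaining factor of $T$ comes from summing across the $T$ copies. Coupling the random argmax $j^*$ correctly with the swap location, rather than naively using group privacy, is the crux of the BNSSSU-style argument and requires careful bookkeeping.
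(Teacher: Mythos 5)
The paper does not prove Theorem~\ref{theo:DP-generalization}; it imports it verbatim from the cited works, so there is no internal proof to compare against. Judged on its own terms, your Stage~(i) one-element-swap lemma is correct and is indeed the standard starting point. The problem is in Stage~(ii). You assert that the monitor $\mathcal{M}$, which outputs $(j^*,h_{j^*})$ with $j^*:=\arg\max_j\bigl(e^{-2\xi}h_j(\calD)-\tfrac1d\sum_{X\in S_j}h_j(X)\bigr)$, is $(\xi,\eta)$-DP ``by parallel composition.'' This does not follow. Parallel composition gives DP for the tuple $(h_1,\dots,h_T)$, and post-processing preserves DP only for functions of that tuple alone. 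Your selection rule additionally reads the raw samples $S_j$ to form the empirical means, so $j^*$ is \emph{not} a post-processing of $(h_1,\dots,h_T)$. Concretely, fix the internal randomness of $\calA$ and compare $\mathbf{S}$ to the neighboring $\mathbf{S}^{(j,i)}$: the realized $h_j$ is distributionally $(\xi,\eta)$-close, but the empirical term $\tfrac1d\sum_{X\in S_j}h_j(X)$ shifts deterministically by $\tfrac1d\bigl(h_j(X_{j,i}')-h_j(X_{j,i})\bigr)$, which can flip the argmax even when all $h_{j'}$ are unchanged. This ``boundary mass'' is not controlled by the privacy of $\calA$, so the swap inequality you invoke for $\mathcal{M}$ is unjustified; worse, for the specific functionals in play the flip goes the wrong way (if $h_j(X_{j,i})=1$ the swap can only make $j$ \emph{more} likely to win the argmax), so the desired inequality need not hold.

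The second half of Stage~(ii) also does not cohere. The multiplicative Chernoff estimate you state is for a predicate $h$ that is \emph{independent} of $S$, which is exactly what fails here, and you never say how it is legitimately invoked for $\phi_1$ with $h_1=\calA(S_1)$. The sentence about the Chernoff tail ``combining with the monitor expectation bound'' and the $T\eta$ ``amplifying into a $2T\eta$ shift'' asserts the final numbers without a mechanism connecting them. These are precisely the points where the cited works (\cite{BassilyNSSSU:sicomp2021,FeldmanS17}) have to do real work -- for instance, clipping at zero with a dummy null predicate, or converting the fixed-predicate concentration into a statement about the monitor itself -- and you should supply that bridge rather than gesture at it. In short: the swap lemma and the $T$-copy-monitor template are the right ideas, but the DP claim for $\mathcal{M}$ is false as stated, and the Chernoff/expectation splice needs an actual argument.
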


\subsubsection*{The Fine-Grained Sparse Vector Technique} 
We will employ the (fine-grained) sparse vector of \cite{KaplanMS21-Sparse-Vector} described in Algorithm~\ref{algo:threshold} ($\ThresholdMonitor$).  
Algorithm $\ThresholdMonitor$ has the following utility and privacy guarantees:

\begin{theorem}[Utility guarantee \cite{KaplanMS21-Sparse-Vector}]\label{theo:monitor-utility}
Consider an execution of Algorithm {\rm $\ThresholdMonitor$} on a database $S$ and on a sequence $(f_i,s_i,\tau_i)_{i\in [r]}$ of adaptively chosen queries. 
Let $S_i$ denote the database $S$ as it is before answering the $i$-th query. With probability at least $1-\beta$, it holds that for all $i\in[r]$
\begin{itemize}
    \item If the output is $\top$ then $f_i(S_i)\cdot s_i \ge \tau_i \cdot s_i - O\left(\frac{1}{\eps}\log\frac{1}{\delta}\log(\frac{1}{\eps}\log \frac{1}{\delta})\log\frac{r}{\beta} \right)$.
    \item If the output is $\perp$ then $f_i(S_i) \cdot s_i \le \tau_i \cdot s_i + O\left(\frac{1}{\eps}\log\frac{1}{\delta} \log(\frac{1}{\eps}\log \frac{1}{\delta})\log\frac{r}{\beta} \right)$.
\end{itemize}
\end{theorem}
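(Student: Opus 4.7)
The plan is to establish correctness by combining three ingredients: the utility guarantee for $\TM$ (Theorem~\ref{theo:monitor-utility}), which controls the noise of each threshold test; the generalization property of differential privacy (Theorem~\ref{theo:DP-generalization}), which converts the adaptively chosen predicates $f^\sigma_{\boldsymbol{v}_q, i}$ into tight estimates of the true probabilities $p^\sigma(\boldsymbol{v}_q, i)$ on the $d$ i.i.d.\ buckets $(\boldsymbol{\mu}_t)_{t\in[d]}$; and an access-counter bookkeeping argument driven by the hypothesis $\lambda_Q \leq c_1 L$, which ensures that essentially no bucket is inactivated during the execution. Throughout, I would argue by induction on $q$, maintaining the invariant that all outputs through round $q$ are correct, so that $\texttt{K}_q \subseteq \mathrm{suspect}(\boldsymbol{v}_q)$ at every intermediate step; this invariant is what allows the $\lambda_Q$-based counting to go through.

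For step one, I apply Theorem~\ref{theo:monitor-utility} to the at most $r = 2mn$ adaptive queries posed to $\TM$: with probability $\geq 1-\beta/3$, every threshold comparison is accurate up to an additive noise $\eta_{\TM} = O\bigl(\tfrac{1}{\eps}\log\tfrac{1}{\delta}\log(\tfrac{1}{\eps}\log\tfrac{1}{\delta})\log\tfrac{mn}{\beta}\bigr)$, which with $\eps = C_1/\sqrt{L}$ and $\delta = C_2/(nmbL)$ becomes $O(\sqrt{L}\cdot\log(L\log(nmbL))\cdot\log(nmbL)\cdot\log(nm/\beta))$. Taking $C_3$ sufficiently large in the bound on $d$ ensures $\eta_{\TM} \leq \tau_\Delta\cdot d/b$. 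For step two, I observe that $\TM$ is $(\eps,\delta)$-DP over its dataset of buckets, and the analyst's choice of $(\boldsymbol{v}_q, i, \sigma)$---and hence of the predicate $f^\sigma_{\boldsymbol{v}_q, i}$---is a post-processing of $\TM$'s outputs. Thus every queried predicate is the output of an $(\eps,\delta)$-DP algorithm run on i.i.d.\ samples from $\mathcal{B}$, and applying Theorem~\ref{theo:DP-generalization} with $T = \Theta(mn/\beta)$ together with a union bound over queries yields
\[
\Bigl|\tfrac{1}{d}\sum_{t=1}^d f^\sigma_{\boldsymbol{v}_q, i}(\boldsymbol{\mu}_t) - \tfrac{1}{b}\,p^\sigma(\boldsymbol{v}_q, i)\Bigr| \leq \tau_\Delta/b
\]
uniformly with probability $\geq 1-\beta/3$ (using $\mathbb{E}_{\mu\sim\mathcal{B}}[f^\sigma_{\boldsymbol{v}_q, i}(\mu)] = p^\sigma(\boldsymbol{v}_q, i)/b$ from the definitions in Section~\ref{sec:signalign}).

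For step three, I bound the access counters. Under the inductive correctness invariant, the total number of $\top$-outputs produced by $\TM$ over the entire execution is at most $\sum_q 2|\texttt{K}_q| \leq 2\sum_i \lambda_{Q, i} \leq 2 C_a b\, \lambda_Q$, by the definition of $\lambda_Q$. Since the $\boldsymbol{\mu}_t$ are i.i.d.\ with $\Pr[\mu_t[i]\neq 0] = 1/b$, the expected number of charges to any fixed bucket is at most $2C_a\lambda_Q \leq 2C_a c_1 L$, which is $\leq L/2$ for $c_1$ chosen small enough. A concentration argument (conditioned on the history) plus a union bound over the $d$ buckets shows that with probability $\geq 1-\beta/3$, no bucket ever reaches the access limit, so the $\TM$ dataset stays equal to the full sample $(\boldsymbol{\mu}_t)_{t\in[d]}$ throughout. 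Intersecting the three events, the noisy threshold test correctly decides whether $p^\sigma(\boldsymbol{v}_q, i) > \tau_m$ up to a total slack of $O(\tau_\Delta)$, and since $\tau_\Delta = (\tau_b-\tau_a)/10 < (\tau_b-\tau_a)/2$, Lemma~\ref{smallerrthencorrect:lemma} delivers correctness of $\texttt{K}_q$, closing the induction.

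The main obstacle is a subtle \emph{circularity} among the three steps: the generalization bound requires the predicates to be outputs of a DP mechanism operating on the full i.i.d.\ sample, which holds only while no bucket has been inactivated; the counter bookkeeping needs previous rounds' outputs to be correct so that $|\texttt{K}_q| \leq |\mathrm{suspect}(\boldsymbol{v}_q)|$; and correctness of previous rounds rests in turn on the first two events. Breaking this circularity requires a round-by-round inductive coupling argument, where at each round $q$ one simultaneously establishes that (a) all prior outputs are correct, (b) no bucket has been inactivated, and (c) the DP generalization bound holds for the currently queried predicate---each conditioned on the intersection of the other events through earlier rounds. A secondary technicality is that the concentration bound used in step three involves a sum of counter increments that themselves depend adaptively on the hash randomness, so care is needed (e.g.\ via another DP-based concentration argument, or by explicitly conditioning on the history before applying a Chernoff-style bound) to legitimately invoke concentration.
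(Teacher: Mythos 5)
You have proved the wrong statement. Theorem~\ref{theo:monitor-utility} is the utility guarantee for the stand-alone $\ThresholdMonitor$ primitive (Algorithm~\ref{algo:threshold}), imported from \cite{KaplanMS21-Sparse-Vector}: it asserts only that, across $r$ adaptively chosen threshold queries, the noisy comparison performed by $\TM$ is correct up to an additive slack of $O\bigl(\frac{1}{\eps}\log\frac{1}{\delta}\log(\frac{1}{\eps}\log\frac{1}{\delta})\log\frac{r}{\beta}\bigr)$. What you have written is instead a (fairly faithful) outline of the proof of Theorem~\ref{Adaptive:thm}, the paper's main adaptivity result for the robust estimator. Worse, your outline \emph{invokes} Theorem~\ref{theo:monitor-utility} as one of its three ingredients (``For step one, I apply Theorem~\ref{theo:monitor-utility} to the at most $r=2mn$ adaptive queries posed to $\TM$\dots''), so as an argument for Theorem~\ref{theo:monitor-utility} itself it is circular.

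The correct proof of Theorem~\ref{theo:monitor-utility} has nothing to do with generalization of DP, $\lambda_Q$, useful buckets, or the suspect sets. It is a direct Laplace-tail union-bound argument about Algorithm~\ref{algo:threshold}. For round $i$, the algorithm computes $\hat{f}_i = f_i(S_i) + a + \mathrm{clip}(b)$, where $a\sim\Lap(10\Delta)$, $b\sim\Lap(\frac{1}{\eps}\log\frac{1}{\delta})$, $\Delta = \frac{1}{\eps}\log(\frac{1}{\delta})\log(\frac{1}{\eps}\log\frac{1}{\delta})$, and the clipping bounds $|\mathrm{clip}(b)|\le\Delta$ deterministically. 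A Laplace random variable with scale $\sigma$ satisfies $\Pr[|Z|>\sigma\log\frac{r}{\beta}]\le\beta/r$, so a union bound over the $r$ rounds gives $|a_i|\le 10\Delta\log\frac{r}{\beta}$ simultaneously with probability at least $1-\beta$. On that event the total perturbation satisfies $|\hat{f}_i - f_i(S_i)| \le 10\Delta\log\frac{r}{\beta} + \Delta = O(\Delta\log\frac{r}{\beta})$, and unpacking $\Delta$ yields exactly the claimed bound. Then, since the algorithm outputs $\top$ iff $\hat{f}_i\cdot s_i \ge \tau_i\cdot s_i$, a $\top$ output forces $f_i(S_i)\cdot s_i\ge\tau_i\cdot s_i - O(\Delta\log\frac{r}{\beta})$ and a $\perp$ output forces the opposite inequality. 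That is the entire proof; the adaptive-choice aspect costs nothing here because the Laplace noise in round $i$ is independent of the analyst's (adaptive) choice of $(f_i, s_i, \tau_i)$, so the per-round tail bound holds conditionally on the history and the union bound is legitimate.
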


\begin{theorem}[Privacy guarantee \cite{KaplanMS21-Sparse-Vector}]\label{theo:monitor-privacy}
Algorithm {\rm $\ThresholdMonitor$} is 
$(O(\sqrt{L} \varepsilon + L \varepsilon^2), O(L \delta))$-differentially private.
\end{theorem}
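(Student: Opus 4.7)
\textbf{Proof proposal for Theorem~\ref{theo:monitor-privacy}.}

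The plan is to reduce the privacy analysis of $\ThresholdMonitor$ to an advanced-composition argument over the (at most $L$) \emph{charge events} that any particular element of $S$ can experience before being removed. Fix neighboring databases $S,S'$ differing on one element $x_0$, and couple the two executions so that they draw the same noise variables $(a_i,b_i)_i$. I would first show that, on any round where the only difference between the $S$- and $S'$-executions is the value $f_i(x_0)\in\{0,1\}$, the contribution to the output distribution is controlled by a single $(\eps,\delta)$-bound; then invoke advanced composition \cite{DworkFHPRR15} over the $\le L$ such rounds to get $(O(\sqrt{L}\eps+L\eps^2),O(L\delta))$ as stated.

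The first ingredient is a fine-grained sparse-vector argument for rounds with output $\perp$. Rounds whose coupled executions both output $\perp$ are free in the usual SVT sense: no information is released beyond the below-threshold indicator, and the threshold-noise term $b_i\sim\Lap(\tfrac{1}{\eps}\log\tfrac{1}{\delta})$, truncated to $[-\Delta,\Delta]$, is what provides the ``privacy slack.'' The use of a \emph{truncated} Laplace is essential so that the distribution of $\hat f_i$ is supported on a shifted copy of the same distribution (up to an $O(\delta)$-probability truncation event), making the below-threshold indicator $(\eps,\delta)$-indistinguishable between $S$ and $S'$ even with sensitivity $1$ in $f_i$. I would handle the truncation failure event by a union bound, absorbing it into the $O(L\delta)$ term.

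The second ingredient is the per-charge analysis for rounds with output $\top$. On such a round, the released information is the $\top$ symbol together with the implicit update of the counters $c(x)$ for every $x$ with $f_i(x)=1$. The key observation is that the noise $a_i\sim\Lap(10\Delta)$ on $\hat f_i$, combined with the truncated $b_i$, gives an $(\eps',\delta')$-indistinguishability for the event that $\top$ is output, with $\eps'=O(\eps)$ per round that actually differs on $x_0$. Moreover, $x_0$ can only contribute to this difference on rounds where its counter is incremented, and by construction this happens at most $L$ times before the algorithm removes it from $S$. Thus the ``real'' composition length is $L$, not the total number of queries $r$, which is what allows the $\sqrt{L}$ dependence rather than $\sqrt{r}$.

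I would then combine these two ingredients via the standard advanced composition theorem: $L$ mechanisms each $(O(\eps),O(\delta))$-DP compose into $(O(\sqrt{L\log(1/\delta)}\,\eps+L\eps^2),\,O(L\delta))$-DP, which (after absorbing the $\log(1/\delta)$ into the constants as is done in the statement) matches the claimed bound. The main obstacle I anticipate is the careful bookkeeping around the coupling: one must argue that on any round where the coupled executions agree on $\hat f_i$ (i.e., where $x_0$ is either absent from $S\cap S'$ or has not yet differentiated the counters), the outputs are identically distributed, so that only the at-most-$L$ ``charging'' rounds incur privacy loss. A secondary subtlety is calibrating $\Delta=\tfrac{1}{\eps}\log(\tfrac{1}{\delta})\log(\tfrac{1}{\eps}\log\tfrac{1}{\delta})$ so that (i) the truncation of $b_i$ fails with probability $O(\delta)$, and (ii) the $\Lap(10\Delta)$ comparison noise dominates the sensitivity-$1$ change in $f_i$ with the correct constants; both of these are the classical SVT calibrations and would be discharged by standard Laplace-tail calculations.
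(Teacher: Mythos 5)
This theorem is cited verbatim from \cite{KaplanMS21-Sparse-Vector}; the present paper does not reprove it, so there is no in-paper proof to compare your sketch against. Your high-level plan does capture the key structural ideas of the cited proof: coupling neighboring executions, charging privacy loss only to the rounds that touch the differing element $x_0$, observing that $x_0$ is removed after at most $L$ charges, and closing with advanced composition over those $L$ charges.

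There is, however, a genuine gap in your handling of the $\perp$ rounds, and it is not a cosmetic one. You claim those rounds are ``free in the usual SVT sense.'' But the usual sparse-vector argument is free for $\perp$ rounds precisely because the threshold noise is drawn once and held fixed across an entire run of below-threshold answers, so a single $\eps$ budget covers all of them. In Algorithm~\ref{algo:threshold}, both $a$ and $b$ are drawn \emph{fresh on every round}. Consequently, on each $\perp$ round in which $f_i(x_0)=1$, the distributions of $\hat f_i$ under $S$ and $S'$ are genuinely shifted, and each such round incurs a nonzero privacy cost under naive per-round accounting. Crucially, these rounds are \emph{not} bounded by $L$: the counter $c(x_0)$ is incremented only on $\top$ rounds with $f_i(x_0)=1$, so $x_0$ can remain in $S$ and participate in arbitrarily many $\perp$ queries before ever being deleted. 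Your advanced-composition step therefore composes only over the (at most $L$) charging rounds and silently drops the unbounded supply of non-charging $\perp$ rounds in which the executions differ. You flag this as ``the main obstacle I anticipate,'' but the step you propose to discharge it --- arguing that rounds on which the coupled executions \emph{agree} on $\hat f_i$ are harmless --- is the easy direction and does not address the issue; what is missing is an argument that the $\perp$ rounds on which $\hat f_i$ \emph{disagrees} can still be made free. That is precisely the nontrivial technical core of \cite{KaplanMS21-Sparse-Vector}, which relies on the specific interplay between the one-sided truncation of $b$ (to within $\Delta$ in the direction that matters for the reported sign) and the much larger scale $10\Delta$ of $a$ to amortize the cost of all $\perp$ rounds rather than pay for them one by one. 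Without that ingredient, your argument gives a bound that grows with the total number of queries $r$ rather than with $L$, which is not what the theorem asserts.
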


\subsection{Proof overview of Theorem~\ref{Adaptive:thm}}
Algorithm~\ref{algo:robust-count-sketch} issues to $\TM$ $2n$ count queries for each input vector $\boldsymbol{v}_q$.  These queries correspond to  predicates $h$ of the form
$h^\sigma_{\boldsymbol{v}_q,i}$ (sign $\sigma\in \{+,-\}$, 
key $i\in[n]$ and query $\boldsymbol{v}_q$).

For each predicate $h$, the respective count of the $\TM$ is computed over buckets that are active at the time of query:
\begin{align*}
F &:= \sum_{t\in [d]} \mathbbm{1}_{(\mu_t[i]\cdot c_t\cdot \sigma > 0) \wedge \text{$t$ is active}}\\ &=  d\cdot \Pr_{t\sim [d]}[h(\boldsymbol{\mu}_t) \wedge \text{$t$ is active}]\ .
\end{align*}
The $\TM$ adds noise to $F$ to obtain $\hat{F}$. 
The $\TM$ outputs $\top$ for the query $h$ if and only if $\hat{F} \geq \frac{d}{b} \tau_m $.  A key $i$ is reported if and only if the $\TM$ output is $\top$ for at least one of its two queries. Equivalently, the inclusion of each key $i$ in the output set $\texttt{K}(\boldsymbol{v}_q)$ is determined as in
\eqref{basicest:eq} using the respective approximate values
$\hat{p}^\sigma(\boldsymbol{v}_q,i) := \frac{b}{d} \hat{F}$.

It follows from Lemma~\ref{smallerrthencorrect:lemma} that if all our noisy counts $\hat{F}$ over buckets are within $\frac{d}{b} (\tau_b-\tau_a)/2$ of their expectation over $\mathcal{B}$ then the output is correct.  We will show that this happens with probability $1-\beta$.

We will bound the "error" of $\hat{F}$ by separately bounding the contributions of different sources of error. We show that with probability $1-\beta$ the additive error is at most $\frac{5}{b} \tau_\Delta$ for all these estimates.
\begin{itemize}
    \item
    One source of error is due to $\TM$ not counting inactive buckets (those that reached the access limit $L$ by $\TM$). We introduce the notion of "useful" buckets (see Section~\ref{useful:Sec}), where usefulness is a deterministic property of the input sequence and has the property that all useful buckets remain active. We show that {\em in expectation}, for each key $i$, a $(1-\tau_\Delta)$ fraction of the buckets that a key $i$  participates in are useful.  Hence in expectation the contribution to the error is bounded by $\tau_\Delta/b$.
             \item 
    Another source of error is due to the noise added by $\TM$. We use the parameter settings and Theorem~\ref{theo:monitor-utility} to bound the maximum error over all queries by $\tau_\Delta/b$ with probability $1-\beta/2$.
    \item 
   We establish correctness under the following assumption (see Section~\ref{sec:withassumption}). We treat the query vectors $Q=(\boldsymbol{v}_q)$ as fixed and assume the buckets $(\boldsymbol{\mu}_t)$ satisfy the following:
We formulate a set $H$ of $O(mn)$ predicates over $\mathcal{B}$ that depend on the query vectors $(\boldsymbol{v}_q)$ so that all have expectation 
$\leq 1/b$ and the expected value of all these predicates is approximated by the sample $(\boldsymbol{\mu}_t)_{t\in[d]}$ to within an additive error of $\tau_\Delta/b$. The set $H$ includes all predicates $h^\sigma_{\boldsymbol{v}_q,i}$ ($\sigma\in\{-1,1\},i\in[n],q\in [|Q|]$) and also includes the usefulness predicates over buckets each key $i$ participates in.  With this assumption, the total error due to inactive buckets is bounded by $2\tau_\Delta/b$ (combining their expectation and the error). Using the assumption, the error due to estimation of $h^\sigma_{\boldsymbol{v}_q,i}$ is at most $\tau_\Delta/b$. Recall also that the error due to noise is $\tau_\Delta/b$. Combining, we get an error of $4\tau_\Delta/b$ when the assumption holds.
\item
We remove the assumption by relating (see Section~\ref{sec:obliv-to-adaptive}) the count of our predicates over buckets to its respective expectation over $\mathcal{B}$ using the generalization property of DP (Theorem~\ref{theo:DP-generalization}).  The property establishes that even though
our query vectors and hence predicates are generated in a dependent way on the sampled buckets, because they are generated in a way that preserves the privacy of the buckets,  their average over the sampled buckets still approximates well their expectation. This holds with probability $1-\beta/2$ for all predicates.  
\end{itemize}


\subsection{Useful buckets} \label{useful:Sec}

\ignore{
For an input sequence $Q=(\boldsymbol{v}_q)_{q=1}^m$ and a key $i$ we define 
\[ \lambda_{Q,i} := \sum_{\boldsymbol{v}\in Q} \mathbbm{1}_{i\in \mathrm{suspect}(\boldsymbol{v})}
\]
the number of vectors $\boldsymbol{v}\in Q$ for which $i$ is in 
$\mathrm{suspect}(\boldsymbol{v})$.
}
\begin{definition} (useful buckets)
We say that a bucket with measurement vector $\boldsymbol{\mu}$ is {\em useful} with respect to $Q$ and access limit $L$ if the total count, over vectors in $Q$, of $\mathrm{suspect}(\boldsymbol{v})$ keys that participate in the bucket is at most $L$:
\[
\textrm{useful}_{Q,L}(\boldsymbol{\mu})  := \sum_{i\in [n]} \mathbbm{1}_{\mu[i]\not=0} \cdot \lambda_{Q,i} \leq L\ .
\]
\end{definition}
The predicate $\textrm{useful}$ depends on the set $Q$ and applies to all $\boldsymbol{\mu}\in \Supp(\mathcal{B})$ whereas 
being active applies only to the $d$ buckets $(\boldsymbol{\mu}_t)$ and buckets may become inactive over time.  We can relate useful and active buckets as follows:
\begin{remark} \label{activeuseful:remark}
Consider an execution of Algorithm~\ref{algo:robust-count-sketch} with input sequence $Q$ where $\TM$ only reports keys from $\mathrm{suspect}(\boldsymbol{v}_q)$ for each query 
$\boldsymbol{v}_q\in Q$. Then all useful buckets $t\in [d]$ remain active throughout the execution.  Therefore, for any predicate $h:\Supp(\mathcal{B})$, at any point during the execution, it holds that: 
\begin{align*}
 \Pr_{t\sim [d]}[h(\boldsymbol{\mu}_t) \wedge \mathrm{useful}_{Q,L}(\boldsymbol{\mu}_t) ] &\le \Pr_{t\sim [d]}[h(\boldsymbol{\mu}_t) \wedge \text{$t$ is active}] \le \Pr_{t\sim [d]}[h(\boldsymbol{\mu}_t)]\ .
\end{align*}

(for notation convenience we use $\Pr_{t\in[d]}h(\boldsymbol{\mu}_t) = \frac{1}{d} \sum_{t\in [d]} \mathbbm{1}_{h(\boldsymbol{\mu}_t)}$.)
\end{remark}

We next characterize the set of input sequences $Q$ with which most buckets of each key remain useful.
\begin{definition} (Sequences with enough useful buckets) \label{sequsefule:def}
For $L\in\mathbb{N}$, let $\mathcal{Q}(L)$ be the set of all query sequences $Q=(\boldsymbol{v}_q)_{q\in [m]}$ that satisfy
\[\forall i\in [n],\ \Pr_{\boldsymbol{\mu}}[\neg \mathrm{useful}_{Q,L}(\boldsymbol{\mu}) \wedge \mu[i]\not=0] \leq \frac{\tau_\Delta}{b}\ .\]
\end{definition}
Recall that the probability that a key participates in a bucket is  $\Pr_{\boldsymbol{\mu}}[\mu[i]\not=0]= 1/b$.  Therefore the requirement is that for all keys $i$, a random bucket $\boldsymbol{\mu}$ in which $i$ participates is useful with probability $\geq (1-\tau_\Delta)$.

We state sufficient conditions for $Q\in \mathcal{Q}(L)$:
\begin{lemma} \label{constuseful:lemma}
There is a sufficiently small constant $c_1$ so that 
for $Q=(\boldsymbol{v}_q)_{q=1}^m$, 
$\lambda_Q \leq c_1 \cdot L \implies Q\in \mathcal{Q}(L)$.  Note that in particular this means that
$|Q| \leq c_1 \cdot L \implies Q\in \mathcal{Q}(L)$.
\ignore{
\begin{enumerate}[(i)]
    \item $\max_{i\in [n]} \lambda_{Q,i} \leq c_1 L$
    \item $\sum_{i \in [n]} \lambda_{Q,i} \leq c_1 \cdot b \cdot L$
\end{enumerate}
\ignore{
\begin{enumerate}[(i)]
    \item $\max_{i\in [n]} \sum_{\boldsymbol{v}_q \in Q} \mathbbm{1}_{i\in \mathrm{suspect}(\boldsymbol{v}_q)} \leq c_1 L$
    \item $\sum_{\boldsymbol{v}_q \in Q} |\mathrm{suspect}(\boldsymbol{v}_q)|\leq c_1 \cdot b \cdot L$
\end{enumerate}
}
then $Q\in \mathcal{Q}(L)$.  Note that $|Q|\leq \frac{c_1}{\max\{1,C_a\}} \cdot L$ implies these conditions and hence $\implies Q\in \mathcal{Q}(L)$.
}
\end{lemma}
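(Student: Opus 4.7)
Fix an arbitrary key $i\in[n]$. Unrolling the definition of $\mathcal{Q}(L)$, it suffices to show
\[ \Pr_{\boldsymbol{\mu}\sim\mathcal{B}}\left[\neg\mathrm{useful}_{Q,L}(\boldsymbol{\mu})\wedge \mu[i]\neq 0\right]\leq \frac{\tau_\Delta}{b}. \]
Since $\Pr[\mu[i]\neq 0]=1/b$, this is equivalent to the conditional tail bound
\[ \Pr_{\boldsymbol{\mu}}\left[ X > L \;\middle|\; \mu[i]\neq 0 \right] \leq \tau_\Delta, \qquad \text{where } X := \sum_{j\in[n]} \mathbbm{1}_{\mu[j]\neq 0}\cdot \lambda_{Q,j}. \]
So the plan reduces to bounding the conditional expectation of $X$ and invoking Markov's inequality.

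The key computation is the first moment under $\mathcal{B}$. Recall that in $\Bcountsketch$ each selector $h$ satisfies $\Pr[h(j)=1]=1/b$ and is at least $3$-wise (in particular pairwise) independent. Separating the term $j=i$ and using pairwise independence on the remaining terms,
\[ \E[X\mid \mu[i]\neq 0] \;=\; \lambda_{Q,i} \;+\; \sum_{j\neq i}\Pr[\mu[j]\neq 0\mid\mu[i]\neq 0]\cdot\lambda_{Q,j} \;=\; \lambda_{Q,i} + \frac{1}{b}\sum_{j\neq i}\lambda_{Q,j} \;\leq\; \max_{j}\lambda_{Q,j} + \frac{1}{b}\sum_{j}\lambda_{Q,j}. \]
I would now invoke the hypothesis $\lambda_Q\leq c_1 L$, understood in the sense of the ignored (i)/(ii) conditions that induced the definition of $\lambda_Q$, namely $\max_j \lambda_{Q,j}\leq c_1 L$ and $\sum_j \lambda_{Q,j}\leq c_1 C_a\, b\, L$. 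Plugging in yields $\E[X\mid \mu[i]\neq 0] \leq c_1(1+C_a)\,L$, and Markov's inequality gives
\[ \Pr[X>L \mid \mu[i]\neq 0] \;\leq\; c_1(1+C_a). \]
Choosing $c_1$ so that $c_1(1+C_a)\leq \tau_\Delta$ (any $c_1\leq \tau_\Delta/(1+C_a)$ works) completes the proof for this $i$, and since $i$ was arbitrary, we conclude $Q\in\mathcal{Q}(L)$.

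For the final ``in particular'' claim that $|Q|\leq c_1 L$ already implies $Q\in\mathcal{Q}(L)$, note the trivial bound $\lambda_{Q,i}\leq |Q|$ for every $i$, so $\max_j \lambda_{Q,j}\leq |Q|\leq c_1 L$. Also $\sum_j \lambda_{Q,j} = \sum_{\boldsymbol{v}_q\in Q}|\mathrm{suspect}(\boldsymbol{v}_q)|$, and by Corollary~\ref{pspecs:coro} each $|\mathrm{suspect}(\boldsymbol{v}_q)|\leq (C_a+1)b$, so $\sum_j \lambda_{Q,j}\leq |Q|(C_a+1)b\leq c_1(C_a+1)\,b\,L$. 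After adjusting $c_1$ by the constant $(C_a+1)/C_a$, the main inequality applies.

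The calculation is essentially a one-line first-moment argument; the only real subtlety is the interplay between the ``max'' and ``average'' terms in $\E[X\mid \mu[i]\neq 0]$, which is precisely what the two-part hypothesis on $\lambda_Q$ is designed to control—separately killing the $\lambda_{Q,i}$ summand (via the max bound) and the aggregate $\frac{1}{b}\sum_j \lambda_{Q,j}$ summand (via the sum bound). I do not expect to need higher moments or concentration beyond Markov here, since the slack $\tau_\Delta$ is a constant and we get to pick $c_1$ accordingly.
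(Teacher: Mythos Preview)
Your proof is correct and follows essentially the same route as the paper's: fix $i$, bound $\E[X\mid \mu[i]\neq 0]$ by splitting off the $j=i$ term and using pairwise independence on the rest, then apply Markov. Your accounting of the constants ($c_1(1+C_a)L$) is in fact cleaner than the paper's, and you are explicit about reading the hypothesis $\lambda_Q\le c_1 L$ as controlling both the max and the normalized sum---exactly what the paper's proof silently assumes as well.
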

\begin{proof}

Fixing a sequence $Q$ with $\lambda_Q \leq c_1 L$, for every $i\in [n]$, we argue as follows. Since $\Pr_{\boldsymbol{\mu}}[\mu[i]\ne 0] = 1/b$, it is sufficient to show that
\[
\Pr_{\boldsymbol{\mu}}[\lnot \mathrm{useful}_Q(\boldsymbol{\mu}) \mid \mu[i] \ne 0] \le \tau_\Delta.
\]
Consider randomly drawing a bucket $\boldsymbol{\mu}$. Define a random variable 
\[
X(\boldsymbol{\mu}) = \sum_{q\in [m]}  |\{  i\in [n] \mid \mu[i]\not=0 \wedge  i\in \mathrm{suspect}(\boldsymbol{v}_q) \}|.
\]
For every $j\in [n] \setminus \{i\}$, we have
\[
\Pr_{\boldsymbol{\mu}}[\mu[j]\ne 0\mid \mu[i]\ne 0] = \frac{1}{b}.
\]
Then, by Condition (i), (ii) and linearity of expectation, we have
\[
\begin{aligned}
\Ex_{\boldsymbol \mu}[ X(\boldsymbol{\mu}) \mid \mu[i] \ne 0 ] 
&\le \lambda_{Q,i} + \frac{1}{b} \sum_{j\in [n]\setminus \{i\}} \lambda_{Q,j}  \\
&\le c_1L + \frac{1}{C_a } c_1 L
&\le 2c_1 L.
\end{aligned}
\]
Finally, by Markov's inequality, we have
\begin{align*}
\Pr_{\boldsymbol{\mu}}[\lnot \mathrm{useful}_Q(\boldsymbol{\mu}) \mid \mu[i] \ne 0] &\le \Pr_{\boldsymbol{\mu}}[ X(\mu) > L \mid \mu[i] \ne 0 ] \le 2c_1.
\end{align*}
Hence, choosing $c_1 < \frac{\tau_{\Delta}}{2}$ suffices.\footnote{The constants can be improved using Chebyshev inequality and pairwise independence of two keys mapping into the same bucket.}
\end{proof}

\subsection{Correctness with assumption}\label{sec:withassumption}

Our choice of $\TM.(\varepsilon,\delta)$ and the sketch size $d$ allow us to provide a high $(1-\beta)$ probability bound on the maximum amount of noise in all $r=2\cdot m\cdot n$ queries issued to the $\TM$:
\begin{remark} (utility) \label{utilityremark}
There are values for $C_1,C_2$ in 
the algorithm and 
$C_3$ in \eqref{dset:eq} so that 
the utility guarantee of 
Theorem~\ref{theo:monitor-utility} provides
\begin{equation} \label{utilityacc:eq}
O\left(\frac{1}{\eps}\log\frac{1}{\delta} \log(\frac{1}{\eps}\log\frac{1}{\delta}) \log\frac{2 m n}{\beta} \right) \leq \tau_\Delta \frac{d}{b}\ .
\end{equation}
\end{remark}

We establish correctness under the simplifying assumption that sampled (sketch) buckets provide good estimates for the expected counts of our predicates.

\begin{lemma} \label{oblivcorrect:lemma}
Consider an execution as follows of Algorithm~\ref{algo:robust-count-sketch} where constants are set to provide the utility guarantee \eqref{utilityacc:eq}.
Fix the buckets $(\boldsymbol{\mu}_t)_{t\in [d]}$.
Set $\TM.L$ and fix the input sequence $Q=(\boldsymbol{v}_q)_{q\in[m]} \in \mathcal{Q}(L)$.

Consider the following set of $n(4m+1)$ predicates $h:\Supp(\mathcal{B})\to \{0,1\}$:
\begin{align*}
H =& \{ h^\pm_{\boldsymbol{v}_q,i}\}_{q\in [m], i\in [n]} 
\cup
\{h^\pm_{\boldsymbol{v}_q,i}\wedge \mathrm{useful}_Q \}_{q\in [m], i\in [n]} 
\cup \left\{ \neg\mathrm{useful}_Q(\boldsymbol{\mu})\wedge \mu[i]\not=0 \right\}_{i\in [n]} \ .
\end{align*}
If for all $h\in H$,  
it holds that the average of $\mathbbm{1}_h$ over the buckets $(\boldsymbol{\mu}_t)_{t\in[d]}$ approximate well its expectation over $\boldsymbol{\mu}\sim \mathcal{B}$:
\begin{equation} \label{assumption:eq}
     \left| \Pr_{t \sim [d]} h(\boldsymbol{\mu_t}) -  \Pr_{\boldsymbol{\mu} \sim B} h(\boldsymbol{\mu}) \right| \leq \frac{1}{b} \tau_\Delta\ .
\end{equation}
Then Algorithm~\ref{algo:robust-count-sketch} is correct with probability $1-\beta$, that is, \[\forall q\in [m],\ \mathrm{heavy}(\boldsymbol{v}_q) \subseteq \texttt{K}_q \subseteq 
\mathrm{suspect}(\boldsymbol{v}_q)\ .\]
\end{lemma}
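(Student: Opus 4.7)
The plan is to bound the estimation error $|\hat p^\sigma(\boldsymbol{v}_q,i) - p^\sigma(\boldsymbol{v}_q,i)|$ by $4\tau_\Delta = \tfrac{2}{5}(\tau_b-\tau_a) < \tfrac{1}{2}(\tau_b-\tau_a)$ for every $q\in[m]$, $i\in[n]$, $\sigma\in\{+,-\}$, and then invoke Lemma~\ref{smallerrthencorrect:lemma}. Here $\hat p^\sigma(\boldsymbol{v}_q,i) = \tfrac{b}{d}\hat F$, where $\hat F$ is the noisy count computed by $\TM$ when answering the query $(h^\sigma_{\boldsymbol{v}_q,i}, +1, \tfrac{d}{b}\tau_m)$. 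I will isolate three sources of error: (a) the noise injected by $\TM$, (b) the discrepancy between counting over active buckets and all $d$ buckets, and (c) the assumption \eqref{assumption:eq} which controls the gap between the empirical sketch and the underlying distribution.

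First I would invoke the utility guarantee Theorem~\ref{theo:monitor-utility} on the $r = 2mn$ queries issued to $\TM$. By Remark~\ref{utilityremark} and the choice of $C_1,C_2,C_3$, with probability at least $1-\beta$ the noise on every query is bounded by $\tau_\Delta\cdot \tfrac{d}{b}$. I condition on this event; everything that follows is deterministic under the hypotheses of the lemma.

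The main argument is an induction on the query index $q$, maintaining the invariant that (i) every bucket $t$ with $\mathrm{useful}_{Q,L}(\boldsymbol{\mu}_t)$ remains active when query $q$ is processed, and (ii) every previously reported key lies in $\mathrm{suspect}(\boldsymbol{v}_{q'})$ for its query $q'$. Both hold vacuously at $q=1$. Given (i) at step $q$, Remark~\ref{activeuseful:remark} applies; writing $F = d\cdot\Pr_{t\sim[d]}[h^\sigma_{\boldsymbol{v}_q,i}(\boldsymbol{\mu}_t)\wedge t\text{ active}]$ and $F' = d\cdot\Pr_{t\sim[d]}[h^\sigma_{\boldsymbol{v}_q,i}(\boldsymbol{\mu}_t)]$, I decompose
\[
\hat p^\sigma(\boldsymbol{v}_q,i) - p^\sigma(\boldsymbol{v}_q,i) = \tfrac{b}{d}(\hat F - F) + \tfrac{b}{d}(F - F') + \bigl(\tfrac{b}{d}F' - p^\sigma(\boldsymbol{v}_q,i)\bigr).
\]
The first term is bounded by $\tau_\Delta$ by the utility event. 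For the second term, since $h^\sigma_{\boldsymbol{v}_q,i}(\boldsymbol{\mu})$ forces $\mu[i]\neq 0$, and any inactive bucket is non-useful by invariant (i),
\[
\tfrac{1}{d}(F' - F) \le \Pr_{t\sim[d]}[\lnot\mathrm{useful}_{Q,L}(\boldsymbol{\mu}_t)\wedge \mu_t[i]\ne 0] \le \Pr_{\boldsymbol{\mu}\sim\mathcal{B}}[\lnot\mathrm{useful}_{Q,L}(\boldsymbol{\mu})\wedge \mu[i]\ne 0] + \tfrac{\tau_\Delta}{b} \le \tfrac{2\tau_\Delta}{b},
\]
where the middle inequality is \eqref{assumption:eq} applied to the third family of predicates in $H$, and the last uses $Q\in\mathcal{Q}(L)$ (Definition~\ref{sequsefule:def}); this contributes at most $2\tau_\Delta$. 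The third term is at most $\tau_\Delta$ directly by \eqref{assumption:eq} applied to $h^\sigma_{\boldsymbol{v}_q,i}\in H$. Summing yields the claimed $4\tau_\Delta$ bound, so Lemma~\ref{smallerrthencorrect:lemma} gives $\mathrm{heavy}(\boldsymbol{v}_q)\subseteq \texttt{K}_q \subseteq \mathrm{suspect}(\boldsymbol{v}_q)$, re-establishing (ii). Invariant (i) is preserved at step $q+1$ because $\TM$ only increments the counter of a useful bucket $\boldsymbol{\mu}_t$ when a reported key participates in it, and the total number of such increments across the whole sequence is at most $\sum_i \mathbbm{1}_{\mu_t[i]\ne 0}\cdot\lambda_{Q,i}\le L$ by the definition of useful, so $\mathrm{useful}_{Q,L}$ buckets never reach the access limit.

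The main obstacle is coordinating the induction with the fact that "active" depends on the adaptive execution history while "useful" is a deterministic function of $Q$: without first establishing that useful buckets survive, one cannot safely reduce $F-F'$ to the useful predicate, and without the $F$-vs-$F'$ bound one cannot argue that only suspect keys are reported. The proof structure above avoids circularity by proving both directions of this loop simultaneously at each step, which is precisely why the set $H$ is engineered to contain both $h^\pm_{\boldsymbol{v}_q,i}$ and the auxiliary predicates $\lnot\mathrm{useful}_{Q,L}(\boldsymbol{\mu})\wedge\mu[i]\ne 0$.
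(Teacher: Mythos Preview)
Your proof is correct and follows essentially the same approach as the paper: invoke the $\TM$ utility guarantee for the noise term, argue by induction on queries that useful buckets remain active (so the active-vs-all gap is controlled by the third family of predicates and $Q\in\mathcal{Q}(L)$), and use the first family for the empirical-vs-expectation gap, summing to $4\tau_\Delta$ and invoking Lemma~\ref{smallerrthencorrect:lemma}. Your explicit formulation of the two invariants and the remark on avoiding circularity between ``useful $\Rightarrow$ active'' and ``only suspects reported'' is if anything a bit cleaner than the paper's presentation; note also that, like the paper, you never actually need the second family $\{h^\pm_{\boldsymbol{v}_q,i}\wedge\mathrm{useful}_Q\}$ in $H$.
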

\begin{proof}

Algorithm~\ref{algo:robust-count-sketch} issues to $\TM$ $2n$ count queries for each input vector $\boldsymbol{v}_q$.  These queries correspond to  predicates $h$ of the form
$h^\sigma_{\boldsymbol{v}_q,i}$ (sign $\sigma\in \{+,-\}$, 
key $i\in[n]$ and query $\boldsymbol{v}_q$).

For each predicate $h$, the respective count of the $\TM$ is computed over buckets that are active at the time of query:
\[F := d\cdot \Pr_{t\sim [d]}[h(\boldsymbol{\mu}_t) \wedge \text{$t$ is active}]\ .\]
(recall that the set of active buckets may decrease over time).
The $\TM$ adds noise to $F$ to obtain $\hat{F}$.  It then outputs $\top$ for the query $h$ if and only if $\hat{F} \geq \frac{d}{b} \tau_m $.

From Remark~\ref{utilityremark}, with probability $1-\beta/2$, $\TM$ satisfies the utility guarantee for all $2mn$ queries, that is 
\begin{equation} \label{Fgap:eq}
    |\hat{F} -F| \leq \frac{d}{b}\tau_\Delta\ .
\end{equation}

We next bound the maximum over keys $i$ of the number of buckets amongst 
$(\boldsymbol{\mu}_t)_{t\in [d]}$ that $i$ participates in and are not useful.
\begin{align}
\Pr_{t\sim [d]} & [\neg \mathrm{useful}_Q(\boldsymbol{\mu}) \wedge \mu[i]\not=0]\nonumber\\ &\leq \frac{\tau_\Delta}{b} + \Pr_{\boldsymbol{\mu}}[\neg \mathrm{useful}_Q(\boldsymbol{\mu}) \wedge \mu[i]\not=0] 
\nonumber\\
&\leq 2\frac{\tau_\Delta}{b}, 
\label{useless:eq} 
\end{align}
where the first inequality is by our assumption on the predicate
$\neg \mathrm{useful}_Q(\boldsymbol{\mu}) \wedge \mu[i]\not=0$
and the second inequality follows from $Q\in \mathcal{Q}(L)$ (by Definition~\ref{sequsefule:def}).
Using logic, for $t\in [d]$:
\begin{align*}
  &(h(\boldsymbol{\mu}_t) \wedge \mathrm{useful}_Q(\boldsymbol{\mu}_t)) \implies h(\boldsymbol{\mu}_t) \\[1em]
&h(\boldsymbol{\mu}_t) \implies (h(\boldsymbol{\mu}_t) \wedge \mathrm{useful}_Q(\boldsymbol{\mu}_t)) 
\vee 
(\neg \mathrm{useful}_Q(\boldsymbol{\mu_t}) \wedge \mu[i]\not=0)
\end{align*}

Therefore using \eqref{useless:eq} we get
\begin{align} 
\Pr_{t\sim [d]}[h(\boldsymbol{\mu}_t)]-\frac{2}{b}\tau_\Delta &\leq
\Pr_{t\sim [d]}[h(\boldsymbol{\mu}_t) \wedge \mathrm{useful}_Q(\boldsymbol{\mu}_t)] 
\leq \Pr_{t\sim [d]}[h(\boldsymbol{\mu}_t)] \label{usefulexp:eq}
\end{align}

We now use a proof by induction over queries issued to $\TM$ to establish that the output for each query vector $\boldsymbol{v}$ and key $i$ is correct.

Suppose the output was correct until the current query. 
Then using Remark~\ref{activeuseful:remark}, the set of current active buckets is a superset of the set of useful buckets for $Q$ and a subset of all buckets.  Therefore,  
\begin{align} 
\Pr_{t\sim [d]}[h(\boldsymbol{\mu}_t) \wedge \mathrm{useful}_Q(\boldsymbol{\mu}_t)] &\leq 
\Pr_{t\sim [d]}[h(\boldsymbol{\mu}_t) \wedge \text{$t$ is active}] 
\leq
\Pr_{t\sim [d]}[h(\boldsymbol{\mu}_t)] \label{activeuseful:eq} 
\end{align}

Combining \eqref{usefulexp:eq} and \eqref{activeuseful:eq}  we obtain
\begin{align*}
 \Pr_{t\sim [d]}[h(\boldsymbol{\mu}_t)]-\frac{2}{b}\tau_\Delta &\leq    \Pr_{t\sim [d]}[h(\boldsymbol{\mu}_t) \wedge \text{$t$ is active}]
 \leq
 \Pr_{t\sim [d]}[h(\boldsymbol{\mu}_t)]
\end{align*}
\begin{align}
 \implies \left|\Pr_{t\sim [d]}[h(\boldsymbol{\mu}_t)]- \Pr_{t\sim [d]}[h(\boldsymbol{\mu}_t) \wedge \text{$t$ is active}] \right| \leq \frac{2}{b}\tau_\Delta \label{activetoall:eq}
\end{align}

From our assumption on the predicate $h$ we have
\begin{equation} \label{sampletogen:eq}
\left|\Pr_{t\sim [d]}[h(\boldsymbol{\mu}_t)- \Pr_{\boldsymbol{\mu}\sim \mathcal{B}}[h(\boldsymbol{\mu})]\right| \leq \frac{1}{b}\tau_\Delta
\end{equation}

Therefore using \eqref{Fgap:eq}, \eqref{activetoall:eq}, and \eqref{sampletogen:eq}:
\begin{align*}
&\left|\frac{1}{d}\hat{F}-
  \Pr_{\boldsymbol{\mu}\sim \mathcal{B}}[h(\boldsymbol{\mu})] \right| \\
  &\leq
 \left|\frac{1}{d}\hat{F}-  \Pr_{t\sim [d]}[h(\boldsymbol{\mu}_t) \wedge \text{$t$ is active}] \right| \\
 &+
    \left|\Pr_{t\sim [d]}[h(\boldsymbol{\mu}_t) \wedge \text{$t$ is active}]-
  \Pr_{t\sim [d]}[h(\boldsymbol{\mu}_t)] \right| \\
   &+ \left|\Pr_{t\sim [d]}[h(\boldsymbol{\mu}_t)- \Pr_{\boldsymbol{\mu}\sim \mathcal{B}}[h(\boldsymbol{\mu})]\right|\\ &\leq \frac{4}{b}\tau_\Delta
 \end{align*}
Therefore the conditions of Lemma~\ref{smallerrthencorrect:lemma} hold and the reporting is correct: Only keys in $\mathrm{suspect}$ can be reported and all keys in $\mathrm{heavy}$ are reported.
This finishes the induction step proof.
\end{proof}

In the oblivious setting, when the buckets are drawn i.i.d.\ and do not depend on $Q$, we can guarantee that the assumption holds with
probability $1-\beta$ by choosing $d=O(\log(n m/\beta))$ and applying Chernoff and union bound over predicates. In the following we analyze the adaptive setting.  

\subsection{Removing the assumption}\label{sec:obliv-to-adaptive}
We now establish that the assumption holds with probability at least $1-\beta/2$ using the generalization property of DP.
We show that with an appropriate choice of constants and $Q\in \mathcal{Q}(L)$, all predicates in $H$ are estimated well from the sample:
\begin{lemma}
Consider an execution of Algorithm~\ref{algo:robust-count-sketch} with adaptive inputs $Q$ and i.i.d\ $(\boldsymbol{\mu}_t)_{t\in [d]}$.
We show that with appropriate setting of constants and setting
the dimension as in \eqref{dset:eq}, then as long as the adaptive query sequence
satisfies
$Q\in\mathcal{Q}(L)$ and $|Q|\leq m$
then with probability $1-\beta/2$, for all 
the $n(4m+1)$ predicates $h\in H$, 
\eqref{assumption:eq} holds.
\end{lemma}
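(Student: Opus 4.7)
The plan is to invoke the generalization property of differential privacy (Theorem~\ref{theo:DP-generalization}). The key observation is that although the adaptive query sequence $Q=(\boldsymbol{v}_q)$ depends on the buckets $(\boldsymbol{\mu}_t)_{t\in[d]}$, this dependence is routed entirely through the $\ThresholdMonitor$, which is itself differentially private. Since Algorithm~\ref{algo:robust-count-sketch} accesses the database $S=(\boldsymbol{\mu}_t)$ only through $\TM$, Theorem~\ref{theo:monitor-privacy} applied over $r\le 2mn$ queries guarantees that the full transcript is $(\xi,\eta)$-DP with respect to $S$, where $\xi=O(\sqrt{L}\,\varepsilon+L\,\varepsilon^2)=O(C_1)$ and $\eta=O(L\delta)=O(C_2/(nmb))$. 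Shrinking $C_1$ makes $\xi$ an arbitrarily small constant (in particular, $\xi\le\tau_\Delta$ will be enforced). Each predicate $h\in H$ is a deterministic function of $Q$ together with a fixed label $(i,\sigma)$, and $Q$ is a function of the transcript, so by post-processing the mapping $S\mapsto h$ is itself $(\xi,\eta)$-DP for every $h\in H$.

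For each $h\in H$ I would then invoke Theorem~\ref{theo:DP-generalization} with $T=8|H|/\beta=O(nm/\beta)$, so that the per-predicate failure probability $1/T$ permits a union bound over all $|H|=n(4m+1)$ predicates (and over both directions of the inequality) with total failure at most $\beta/2$. The explicit error $\tfrac{4}{\xi d}\log(T+1)+2T\eta$ is driven below $\tau_\Delta/(4b)$ by the choice of $d$ in \eqref{dset:eq} (which carries the $\log(mn)$, $\log(mnL)$ and $\sqrt{L}$ factors needed to absorb $1/\xi$) together with a suitable constant $C_2$ (so that $2T\eta=O(C_2/(b\beta))$ is small). Every predicate in $H$ is supported on the event $\{\mu[i]\ne 0\}$ of probability $1/b$ for some $i$, hence has population mean $\Pr_{\mathcal{B}}[h]\le 1/b$, and the multiplicative slack in the theorem contributes at most $(1-e^{-2\xi})\Pr_{\mathcal{B}}[h]\le 2\xi/b\le \tau_\Delta/(4b)$. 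Combining these pieces gives the lower-bound direction
\[
\Pr_{\boldsymbol{\mu}\sim\mathcal{B}}[h(\boldsymbol{\mu})]-\Pr_{t\sim[d]}[h(\boldsymbol{\mu}_t)]\le \tau_\Delta/b.
\]

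For the opposite direction, applying Theorem~\ref{theo:DP-generalization} naively to $1-h$ would give only $\Pr_t[h]-\Pr_\mathcal{B}[h]\le O(\xi)$, which is too loose by a factor of $b$. Instead I would apply the theorem to the ``relevant complement'' $\bar h(\boldsymbol{\mu}):=\mathbbm{1}\{\mu[i]\ne 0\}\cdot\mathbbm{1}\{\lnot h(\boldsymbol{\mu})\}$, which is again a post-processing of the DP transcript and satisfies $\Pr_{\mathcal{B}}[\bar h]\le 1/b$, so the same accounting yields a sharp one-sided bound on $\Pr_{t}[\bar h(\boldsymbol{\mu}_t)]$. Combined with a standard non-adaptive Chernoff and union bound for the key-participation frequencies $\hat\rho_i:=\Pr_{t}[\mu_t[i]\ne 0]$ (which concentrate around $1/b$ within $\tau_\Delta/(4b)$ once $d/b\gtrsim \tau_\Delta^{-2}\log(n/\beta)$), using $h+\bar h=\mathbbm{1}\{\mu[i]\ne 0\}$ gives $\Pr_t[h(\boldsymbol{\mu}_t)]-\Pr_{\mathcal{B}}[h]\le\tau_\Delta/b$. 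The final union bound over $h\in H$ and the two directions yields the claimed probability $1-\beta/2$. The main obstacle is precisely this one-sidedness of Theorem~\ref{theo:DP-generalization}; the relevant-complement trick, which keeps the auxiliary predicate inside the $O(1/b)$-probability regime, is what recovers the desired additive accuracy $\tau_\Delta/b$ on both sides simultaneously.
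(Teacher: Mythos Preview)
Your argument follows the same architecture as the paper's: the entire interaction accesses the buckets only through the $\ThresholdMonitor$, so by Theorem~\ref{theo:monitor-privacy} and post-processing each predicate in $H$ is the output of an $(\xi,\eta)$-DP mechanism with $\xi=O(\sqrt{L}\varepsilon+L\varepsilon^2)$ and $\eta=O(L\delta)$; one then invokes Theorem~\ref{theo:DP-generalization} with $T=\Theta(nm/\beta)$ and union-bounds over the $n(4m+1)$ predicates. The choice of $C_1,C_2,d$ you describe matches the paper's.

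Where you diverge is that you take the one-sidedness of Theorem~\ref{theo:DP-generalization} seriously. The theorem as stated only yields $\Pr_{t\sim[d]}[h(\boldsymbol{\mu}_t)]\ge e^{-2\xi}\Pr_{\mathcal{B}}[h]-\text{err}$; applying it to $1-h$ gives an upper bound on the empirical mean whose multiplicative slack scales with $1-\Pr_{\mathcal{B}}[h]\approx 1$ rather than with $1/b$, which is too coarse. The paper's written proof simply asserts that ``the additive contribution due to the multiplicative error term is at most $(1-e^{-2\xi})/b$'' and does not explain the other direction. Your fix---apply the theorem instead to the ``relevant complement'' $\bar h=\mathbbm{1}\{\mu[i]\ne 0\}\wedge\lnot h$, which again has population mean $\le 1/b$, and combine with a non-adaptive Chernoff bound on the participation frequencies $\Pr_{t\sim[d]}[\mu_t[i]\ne 0]$---is correct and closes this gap cleanly. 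The Chernoff step is indeed non-adaptive (the events $\{\mu_t[i]\ne 0\}$ are fixed in advance for each $i\in[n]$) and its sample-complexity requirement is absorbed by~\eqref{dset:eq}. So your proposal is not merely an alternative route; it is a more complete version of the paper's own argument.
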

\begin{proof}
We apply Theorem~\ref{theo:DP-generalization}.
We choose the parameter in the statement of the theorem to be $T \geq 2n(4m+1)/\beta$.  With this setting and a union bound we get the bound with probability $1-\beta/2$ for all our $n(4m+1)$ predicates.

We treat Algorithm~\ref{algo:robust-count-sketch} and the "adversary" issuing the adaptive queries based on prior outputs as a single algorithm that operates on the measurement vectors $(\boldsymbol{\mu}_t)_{t\in[d]}$ and outputs queries $Q=(\boldsymbol{v}_q)$.  The predicates in $H$ are all computed from $Q$.  

The algorithm provides $(\xi,\eta)$-differential privacy. From post processing, all the predicates $H$ also preserve the privacy of $(\boldsymbol{\mu}_t)_{t\in[d]}$.  Using Theorem~\ref{theo:monitor-privacy}, we have
\begin{align*}
    \xi &= O(\sqrt{L}\varepsilon + L \varepsilon^2)\\
    \eta &= O(L \delta).
\end{align*}

Recall that $\varepsilon = C_1/\sqrt{L}$.
We choose the constant $C_1$ to be at most its value from Remark~\ref{utilityremark} and as needed to set $\xi \leq \tau_\Delta/6$.

Recall that $\delta = C_2/(m\cdot n\cdot b\cdot L)$.  We choose $C_2$ similarly to provide $\eta \leq \tau_\Delta/(6\cdot T \cdot b)$.

To summarize we have
\begin{align*}
    T &= \frac{2n(4m+1)}{\beta}\\
    \eta &\leq \frac{\tau_\Delta}{6\cdot T\cdot b}\\
    \xi &\leq \frac{\tau_\Delta}{6}\\
    d &\geq \frac{12 b \log(T+1)}{\xi \tau_\Delta}
\end{align*}

We now express the additive error on the estimate provided in Theorem~\ref{theo:DP-generalization}.
The expected value of all our predicates is at most $1/b$.  The additive contribution due to the multiplicative error term is therefore at most $(1-e^{-2\xi})/b \approx 2\xi/b$.    The additional additive terms 
are $\frac{4}{\xi d}\log(T+1)$ and $2\cdot T\cdot\eta$.  
We can substitute our choices of $T,d,\xi,\eta$ and obtain that each of these three terms is at most $\frac{1}{3b} \tau_\Delta$.  Therefore the total additive error is $\frac{1}{b} \tau_\Delta$.
\end{proof}

\section{Robust Weight Estimates} \label{estimates:sec}
Our robust estimator (Algorithm~\ref{algo:robust-count-sketch}) outputs a  set $\texttt{K}$ of $O(k)$ keys that includes all heavy hitters. In the following, we show that it is possible to also provide estimates for the weight of each key in $\texttt{K}$ within the same asymptotic robustness guarantees.

Here we make a mild assumption for the input query vector $\boldsymbol{v}\in \mathbb{R}^n$. That is, we assume each entry of $\boldsymbol{v}$ is an integer of magnitude at most polynomial in $n$ (say, $n^c$ for a constant $c$).

One approach to provide such weight estimates is to use a private median algorithm (based on the median/quantile estimator reviewed in the preliminaries Section~\ref{utilitymedian:sec}). 
There are suitable off-the-shelf private approximate median algorithms \cite{DBLP:journals/toc/BeimelNS16, DBLP:conf/focs/BunNSV15, DBLP:conf/stoc/BunDRS18, DBLP:conf/colt/KaplanLMNS20}. However, this approach constitutes a different component that does not fit seamlessly in the framework of  Algorithm~\ref{algo:robust-count-sketch}.
Instead, we design $\texttt{WeightEstimator}$ (Algorithm~\ref{algo:weight-esitmator}) that implements a private median computation by accessing the buckets only through calls to the same threshold monitor $\TM$ that is initialized in Algorithm~\ref{algo:robust-count-sketch}.

\begin{algorithm2e}
    \caption{\texttt{WeightEstimator}}
    \label{algo:weight-esitmator}
    \DontPrintSemicolon
    \KwIn{
        A query vector $\boldsymbol{v}\in \mathbb{R}^n$, an index $i\in [n]$. \;
        $(\boldsymbol{\mu}_t)_{t\in [d]} \gets$ initialized $\Bcountsketch$ with parameters $(n,d,b)$\;
        $\TM\gets$ a $\ThresholdMonitor$ based on $(\boldsymbol{\mu}_t)_{t\in [d]}$.
    }
    \SetKwProg{Init}{initialize}{:}{}
    \For{$w = -n^c, \dots, 0, 1, \dots, n^c$}{
        Define $f_{\le w} : [d] \to \{0, 1\} $ where $f_{\le w}(t) = \mathbbm{1}\{\mu_t[i] \ne 0 \land c_t\cdot \mu_t[i] \le {w}\}$ \;
        \If{TM.$query(f_{\le w}, +1, \frac{d}{b}\tau_{\rm tr}) = \top$}{
            \Return{$\tilde{v}_i = w$} \;
        }
    }
    \Return{$\tilde{v}_i = n^c$} \;
\end{algorithm2e}

Algorithm~\ref{algo:weight-esitmator}
uses the following parameters:  $\tau_{\rm down} = \frac{1}{10}$, $\tau_{\rm up} = \frac{9}{10}$, and $\tau_{\rm tr} =\frac{\tau_{\rm down} + \tau_{\rm up}}{2}$.
Note that $\TM$ reports at most one ``$\top$'' for each key in $\texttt{K}$. Therefore, we can still get the same DP guarantee (by incurring a constant multiplicative factor, because we may need to have the ``access limit'' threshold $L$ in the threshold monitor doubled\footnote{Check Algorithm~\ref{algo:threshold} for the definition of $L$.}).

\paragraph*{Efficiency.} Implementing Algorithm~\ref{algo:weight-esitmator} as described would be highly inefficient: it takes $\Theta(n^c)$ time to estimate a single key $v_i$. In the following, we describe an $\Otilde(d)$ time implementation which has identical output distribution as Algorithm~\ref{algo:weight-esitmator}.

Our efficient implementation is based on a simple observation: let $w,w+1 \in [-n^c, n^c]$ be two consecutive integers. If for every $\boldsymbol{\mu}_t$ it holds that $\mu_t[i]\ne w$, then we can conclude that $f_{\le w}$ and $f_{\le w+1}$ are the same predicate. Therefore, when querying the threshold monitor with either $f_{\le w}$ or $f_{\le w+1}$, we have the same probability of getting $\top$. The same can be said for a range of potential weights: let $w_l<w_r$ be two integers. If for every $\boldsymbol{\mu}_i$ it holds that $\boldsymbol{\mu}_i \notin [w_l, w_r - 1]$, then $(f_{\le w})$ for every $w\in [w_l, w_r - 1]$ refers to the same predicate, which means we have the same probability of getting $\perp$ when querying with $f_{\le w}$ for every $w\in [w_l, w_r-1]$. 

We are ready to describe the implementation. 

\begin{enumerate}
    \item By the observation above, we know that the buckets $(\boldsymbol{\mu}_t)_{t\in [d]}$ partition the answer space $[-n^c, n^c]$ into $d+1$ intervals ($d/b+1$ in expectation), where the same query predicate is used inside each interval. We scan these intervals from left to right.
    \item For an interval $[L, R]$, we count the number of buckets such that $\mu_i[t] \le L$ and calculate the probability that the threshold monitor reports $\top$ on any query in this range. Let $p$ denote the probability. Conditioning on Algorithm~\ref{algo:weight-esitmator} not returning before $w = L$, the probability that Algorithm~\ref{algo:weight-esitmator} does not return an estimation before moving $w$ to $R+1$ is $(1-p)^{R-L}$. With probability $(1-p)^{R-L}$, we proceed to the next interval.
    \item Otherwise, we simulate the case that the algorithm returns an integer in the range $[L, R]$. We first calculate a normalizer $P = \sum_{w=1}^{R-L+1} (1-p)^{i-1}\cdot p$. Conditioning on Algorithm~\ref{algo:weight-esitmator} returns an integer in $[L, R]$, it returns $w\in [L, R]$ with probability exactly $\frac{(1-p)^{w-L}p}{P}$. Therefore, we sample a real $r$ uniformly in $[0, P]$, and find the smallest $w\in [L, R]$ such that $r < \sum_{j=L}^{w} (1-p)^{j-L}\cdot (p)$. We output $w$.
\end{enumerate}

It is straightforward to verify that this implementation has the same output distribution as Algorithm~\ref{algo:weight-esitmator}.

\paragraph*{Correctness.} We analyze the correctness and privacy guarantee of Algorithm~\ref{algo:weight-esitmator} now. We observe that $\texttt{WeightEstimator}$ (Algorithm~\ref{algo:weight-esitmator}) fits well in the framework of Algorithm~\ref{algo:robust-count-sketch}. That is, the weight estimator only accesses the buckets through  TM (the threshold monitor). And TM reports at most one ``$\top$'' for each key in $\texttt{K}$. Therefore, we can still get the same DP guarantee (by incurring a constant multiplicative factor, because we may need to have the ``access limit'' threshold $L$ in the threshold monitor doubled).

Next, we will show that Algorithm~\ref{algo:weight-esitmator} returns an estimate of $v[i]$ within additive error $\pm \frac{1}{\sqrt{k}} \| \boldsymbol{v}_{\tail[k]} \|_2$ with high probability. 

Let $\Gamma = O\left( \frac{1}{\eps} \log\frac{1}{\delta} \log\left( \frac{1}{\eps} \log \frac{1}{\delta} \right) \log \frac{r}{\beta} \right)$ where the constant in the big-Oh is the constant in Theorem~\ref{theo:monitor-utility}. Then, by Theorem~\ref{theo:monitor-utility}, with probability at least $1-\beta$, Algorithm~\ref{algo:weight-esitmator} returns an estimation $\tilde{v}_i$ such that:
\begin{itemize}
    \item There are at least $\frac{d}{b}\tau_{\rm tr} - \Gamma$ alive buckets $\mu_t$ where $\mu_t[i] \ne 0, \mu_t[i]\cdot c_t \le \tilde{v}_t$.
    \item There are at most $\frac{d}{b}\tau_{\rm tr} + \Gamma$ alive buckets $\mu_t$ where $\mu_t[i] \ne 0, \mu_t[i]\cdot c_t \le \tilde{v}_{t} -1$.
\end{itemize}
In the following, we use $\calE_{1}$ to denote the event that both of the two conditions above hold.

Call a bucket $\boldsymbol{\mu}_t$ \emph{good} for the estimation of $v[i]$, if both of the following hold:
\begin{enumerate}
\item $\boldsymbol{\mu}_t$ is alive by the time we perform the estimation;
\item $\mu_t[i] \ne 0$ and $\mu_t[i]\cdot c_t \in v[i]\pm \frac{1}{\sqrt{k}} \| \boldsymbol{v}_{\tail[k]}\|_2$.
\end{enumerate}
Also, we call a bucket $\boldsymbol{\mu}_t$ \emph{bad} if $\boldsymbol{\mu}_t$ is alive, $\mu_t[i] \ne 0$ and $\mu_t[i]\cdot c_t \notin v[i] \pm \frac{1}{\sqrt{k}} \| \boldsymbol{v}_{\tail[k]}\|_2$. We prove the following lemma.

\begin{lemma}\label{lemma:good-estimation}
Consider the execution of Algorithm~\ref{algo:weight-esitmator} for the vector $\boldsymbol{v}$ and index $i$. Suppose all of the following hold:
\begin{itemize}
    \item The event $\calE_1$.
    \item The number of good buckets is more than $\frac{d}{b}\tau_{\rm tr} + \Gamma$.
    \item The number of bad buckets is less than $\frac{d}{b}\tau_{\rm tr} - \Gamma$.
\end{itemize}
Then, Algorithm~\ref{algo:weight-esitmator} returns an estimation of $v[i]$ within additive error $\frac{1}{\sqrt{k}} \| \boldsymbol{v}_{\tail[k]} \|_2$.
\end{lemma}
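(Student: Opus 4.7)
\medskip

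\noindent\textbf{Proof proposal for Lemma~\ref{lemma:good-estimation}.}
The plan is to interpret the output $\tilde v_i$ as a noisy quantile of the multiset $\{\mu_t[i]\cdot c_t : \boldsymbol{\mu}_t\text{ alive},\ \mu_t[i]\neq 0\}$, and then run a standard sandwich/pigeonhole argument that compares the guaranteed number of \emph{good} buckets on one side of $\tilde v_i$ against the guaranteed number of \emph{bad} buckets on the other side. Throughout, write $\tau := \frac{1}{\sqrt{k}} \|\boldsymbol{v}_{\tail[k]}\|_2$, and let $N_{\le w}$ denote the number of alive buckets with $\mu_t[i]\neq 0$ and $\mu_t[i]\cdot c_t \le w$. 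Under $\calE_1$, the algorithm's output $\tilde v_i$ satisfies $N_{\le \tilde v_i} \ge \tfrac{d}{b}\tau_{\rm tr} - \Gamma$ while $N_{\le \tilde v_i - 1} \le \tfrac{d}{b}\tau_{\rm tr} + \Gamma$.

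For the upper bound, I would assume for contradiction that $\tilde v_i > v[i] + \tau$. Since the query vector has integer entries and each $\mu_t[i]\in\{-1,0,1\}$, the products $\mu_t[i]\cdot c_t$ are integers, so we may assume $\tilde v_i - 1 \ge v[i] + \tau$ (rounding $\tau$ up to the next integer if needed; this loses only a constant factor in the error bound and can be absorbed in the final constant). Every good bucket then contributes to $N_{\le \tilde v_i - 1}$, because a good bucket satisfies $\mu_t[i]\cdot c_t \le v[i]+\tau \le \tilde v_i - 1$. By the hypothesis, the number of good buckets exceeds $\tfrac{d}{b}\tau_{\rm tr} + \Gamma$, so $N_{\le \tilde v_i - 1} > \tfrac{d}{b}\tau_{\rm tr} + \Gamma$, contradicting the second bullet of $\calE_1$.

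For the lower bound, I would symmetrically assume $\tilde v_i < v[i] - \tau$. Then every good bucket has $\mu_t[i]\cdot c_t \ge v[i] - \tau > \tilde v_i$, so no good bucket is counted in $N_{\le \tilde v_i}$; consequently every bucket counted there is either bad or has $\mu_t[i] = 0$ (excluded by the definition of $N_{\le \tilde v_i}$). Thus $N_{\le \tilde v_i}$ is at most the number of bad buckets, which by hypothesis is strictly less than $\tfrac{d}{b}\tau_{\rm tr} - \Gamma$, contradicting the first bullet of $\calE_1$. Combining the two cases yields $|\tilde v_i - v[i]| \le \tau$, as desired.

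I expect the argument itself to be essentially a one-line pigeonhole once the quantile interpretation is made explicit, so there is no deep obstacle; the only thing to be careful about is the rounding step (integer weights versus real $\tau$) and ensuring that the special terminal return value $\tilde v_i = n^c$ is handled, which is automatic because under the good-bucket hypothesis the algorithm must return inside the scan before reaching the final guard. I would also remark that the lemma is purely a deterministic statement conditional on $\calE_1$ and the good/bad counts, and that a separate (probabilistic) argument, which the authors give elsewhere, is needed to show that those three hypotheses hold with high probability once the sketch size $d$ is chosen as in \eqref{dset:eq}.
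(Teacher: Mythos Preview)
Your proposal is correct and matches the paper's proof: both are the same sandwich/pigeonhole argument showing that $\tilde v_i$ must lie between two good bucket values, just phrased as a direct sandwich in the paper versus two contradictions in your write-up. One small remark: your rounding caveat is unnecessary, since under the paper's integrality assumption both $\tilde v_i$ and every $\mu_t[i]\cdot c_t$ are integers, so $\tilde v_i > v[i]+\tau$ with $\tilde v_i$ integer already forces $\tilde v_i - 1 \ge \mu_t[i]\cdot c_t$ for every good bucket without any adjustment to $\tau$.
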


\begin{proof}
Define the multi-set $S_i = \{\boldsymbol{\mu}_t[i]\cdot c_t: \mu_t \text{ is alive} \land \mu_t[i] \ne 0 \}$. Call an element $x\in S_i$ good if $x\in v[i]\pm \frac{1}{\sqrt{k}} \| v_{\tail[k]} \|_2$. Let $n_{\rm good}$ denote the number of good elements in $S_i$. We call all other elements bad and let $n_{\rm bad}$ denote the number of such elements. We observe that $n_{\rm good}$ is equal to the number of good buckets and $n_{\rm bad}$ is equal to the number of bad buckets.

Let $\tilde{v}_i$ denote the output of the private median algorithm. Suppose $\tilde{v}_i$ is larger than or equal to $n_1$ elements in $S_i$. Since $\calE_1$ holds, we have $n_1 \ge \frac{d}{b}\tau_{\rm tr} - \Gamma$. Because $n_{\rm bad} < \frac{d}{b}\tau_{\rm tr} - \Gamma$, we know those $n_1$ elements cannot be all bad. Namely, $\tilde{v}_i$ is no less than at least one good element. 

Now, suppose $\tilde{v}_i - 1$ is larger than or equal to $n_2$ elements in $S_i$. Since $\calE_1$ holds, we have $n_2 \le \frac{d}{b}\tau_{\rm tr} + \Gamma$. Because $n_{\rm good} > \frac{d}{b}\tau_{\rm tr} + \Gamma$, we know those $n_2$ elements do not include all good elements. Namely, $\tilde{v}_i - 1$ is less than at least one good element.

Combining two observations, we conclude that $\tilde{v}_i$ is sandwiched between two good elements, which means $\tilde{v}_i \in v[i]\pm \frac{1}{\sqrt{k}} \| \boldsymbol{v}_{\tail[k]} \|$ as desired.
\end{proof}

Now let us consider the oblivious setting. In this setting, we can fix an input sequence $Q$ beforehand and analyze the behavior of our algorithm on it. In the next lemma, we will show that Algorithm~\ref{algo:weight-esitmator} is accurate, so long as the number of good/bad buckets is close to its expectation (assuming the buckets are independent to the input $Q$).

\begin{lemma} \label{obliv-estimate-correct:lemma}
Consider an execution Algorithm~\ref{algo:robust-count-sketch} where it also runs Algorithm~\ref{algo:weight-esitmator} to estimate the weight of $v[i]$ whenever it includes a key $i$ in $K$. Fix the input sequence $Q=(\boldsymbol{v}_q)_{q\in[m]} \in \mathcal{Q}(L)$ and consider one query vector $\boldsymbol{v}\in \mathcal{Q}(L)$.
Consider the following set of $2n$ predicates $h:\Supp(\mathcal{B})\to \{0, 1\}$:
\begin{align*}
H =& \{ \mathbbm{1}_{\mu[i]\ne 0 \land \mu[i]\cdot c\in v[i] \pm \frac{1}{\sqrt{k}} \|\boldsymbol{v}_{\tail[k]}\|_2 \land \mathrm{useful}_Q} \}_{i\in [n]}\\
&\cup \{ \mathbbm{1}_{\mu[i]\ne 0\land \mu[i]\cdot c\notin v[i] \pm \frac{1}{\sqrt{k}} \|\boldsymbol{v}_{\tail[k]}\|_2} \}_{i\in [n]} \ .
\end{align*}
If for all $h\in H$,  
it holds that the average of $h$ over the buckets $(\boldsymbol{\mu}_t)_{t\in[d]}$ approximate well its expectation over $\boldsymbol{\mu}\sim \mathcal{B}$:
\begin{equation} \label{assumption:close}
     \left| \Pr_{t \sim [d]} h(\boldsymbol{\mu_t}) -  \Pr_{\boldsymbol{\mu} \sim B} h(\boldsymbol{\mu}) \right| \leq \frac{1}{b} \tau_\Delta.
\end{equation}
Then Algorithm~\ref{algo:weight-esitmator} always returns estimation for $v[i]$ within error $\pm \frac{1}{\sqrt{k}} \| \boldsymbol{v}_{\tail[k]} \|_2$.
\end{lemma}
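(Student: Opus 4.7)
The plan is to mirror the proof of Lemma~\ref{oblivcorrect:lemma}, reducing correctness to Lemma~\ref{lemma:good-estimation} by using the assumption \eqref{assumption:close} (together with $Q\in\mathcal{Q}(L)$) to pin down the empirical counts of good and bad buckets around their population values, and then choosing constants so that the thresholds in Lemma~\ref{lemma:good-estimation} are comfortably separated. The event $\mathcal{E}_1$ comes from the utility guarantee of $\TM$ (Theorem~\ref{theo:monitor-utility}): taking the parameters in Remark~\ref{utilityremark}, we can make $\Gamma \le \tau_\Delta\cdot d/b$, and $\mathcal{E}_1$ then holds with probability $1-\beta$ over the Laplace noise, independently of the bucket randomness.

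First I would lower bound the number of good buckets $n_{\rm good}$. By the standard (non-adaptive) Chebyshev analysis of a single $\Bcountsketch$ bucket, for width $b = C_b^2\cdot k$,
\[
\Pr_{\boldsymbol{\mu}\sim\mathcal{B}}\bigl[\mu[i]\neq 0\,\wedge\,\mu[i]\cdot\langle\boldsymbol{\mu},\boldsymbol{v}\rangle \in v[i]\pm \tfrac{1}{\sqrt{k}}\|\boldsymbol{v}_{\tail[k]}\|_2\bigr] \;\ge\; \tfrac{1-\alpha}{b}
\]
for a small constant $\alpha$, and by $Q\in\mathcal{Q}(L)$, the additional conjunction with $\mathrm{useful}_Q$ costs at most $\tau_\Delta/b$ in expectation. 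Using \eqref{assumption:close} for the corresponding predicate in $H$, the empirical average over the $d$ buckets is within $\tau_\Delta/b$ of its expectation. Since useful buckets are alive throughout the execution (by Remark~\ref{activeuseful:remark}, applied inductively as in the proof of Lemma~\ref{oblivcorrect:lemma}), the count of useful buckets that are good-valued lower bounds $n_{\rm good}$, giving $n_{\rm good} \ge \tfrac{d}{b}(1 - \alpha - 2\tau_\Delta)$.

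Next I would upper bound $n_{\rm bad}$. A bad bucket is alive with $\mu_t[i]\ne 0$ and $\mu_t[i]\cdot c_t$ outside the good window, so $n_{\rm bad}$ is bounded above by the count of \emph{all} buckets satisfying the second bad-predicate in $H$ (dropping the aliveness restriction only enlarges the count). The expectation of that predicate is at most $\alpha/b$ by the same Chebyshev argument, and \eqref{assumption:close} upgrades this to the empirical statement $n_{\rm bad} \le \tfrac{d}{b}(\alpha + \tau_\Delta)$.

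Finally, with $\tau_{\rm tr}=1/2$ and $\Gamma\le \tau_\Delta\cdot d/b$, the hypotheses of Lemma~\ref{lemma:good-estimation} reduce to $1-\alpha-3\tau_\Delta > 1/2$ and $\alpha+2\tau_\Delta < 1/2$, both of which are satisfied once $C_b$ is taken large enough (to make $\alpha$ small) and $\tau_\Delta = (\tau_b-\tau_a)/10$ is the small constant already fixed. Invoking Lemma~\ref{lemma:good-estimation} then yields $\tilde v_i \in v[i]\pm \tfrac{1}{\sqrt{k}}\|\boldsymbol{v}_{\tail[k]}\|_2$. I expect the main subtlety to be the bookkeeping that keeps ``useful'' (fixed given $Q$) on the lower-bound side and ``alive'' (time-varying) on the upper-bound side; the induction used inside the proof of Lemma~\ref{oblivcorrect:lemma} carries over, provided we double $L$ in $\TM$ to absorb the extra access charged by the at most one $\top$ that $\texttt{WeightEstimator}$ produces per reported key, as noted after Algorithm~\ref{algo:weight-esitmator}.
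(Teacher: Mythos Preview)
Your proposal is correct and follows essentially the same approach as the paper: bound the population probabilities of the ``good'' and ``bad'' predicates via Chebyshev and $Q\in\mathcal{Q}(L)$, transfer these to empirical counts using assumption~\eqref{assumption:close}, and then invoke Lemma~\ref{lemma:good-estimation} with $\tau_{\rm tr}=\tfrac12$ and $\Gamma\le\tau_\Delta\,d/b$. The paper uses the concrete constants $\tau_{\rm down}=\tfrac{1}{10}$, $\tau_{\rm up}=\tfrac{9}{10}$ in place of your generic $\alpha$, and is somewhat less explicit than you are about the useful-versus-alive bookkeeping and about the role of $\mathcal{E}_1$; otherwise the arguments coincide.
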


Before showing the proof, we observe that one can combine Lemma~\ref{obliv-estimate-correct:lemma} with a Chernoff bound to show that our algorithm is accurate in the oblivious setting.

\begin{proof}
First, we consider predicates of the form $h_{i,\rm bad} = \mathbbm{1}_{\mu[i]\ne 0\land \mu[i]\cdot c\notin v[i]\pm  \frac{1}{\sqrt{k}} \|\boldsymbol{v}_{\tail[k]} \|_2}$. Fix an index $i$ and consider $h_{i,\rm bad}$. By Chebyshev's inequality, we have that
\[
\Pr_{\boldsymbol{\mu}\sim \mathcal{B}} \left[ \left| \mu[i]\cdot c - v[i] \right| \in\frac{1}{\sqrt{k}} \|\boldsymbol{v}_{\tail[k]} \| \biggm| \mu[i]\ne 0 \right] \ge \tau_{\rm up}
\]
provided $b\ge C\cdot k$ for some constant $C$. (The proof is similar to that of Lemma~\ref{lemma:heavy}). Recall $\tau_{\rm down} = \frac{1}{10}$ and $\tau_{\rm up} = \frac{9}{10}$. We have
\[
\Pr_{\boldsymbol{\mu}\sim \mathcal{B}}[ h_{i, \rm bad}(\boldsymbol{\mu}) ] \le \frac{\tau_{\rm down} }{b}.
\]

We also consider predicates of the form $h_{i,\rm good} = \mathbbm{1}_{\mu[i]\ne 0 \land \mu[i]\cdot c\in v[i] \pm \frac{1}{\sqrt{k}} \| \boldsymbol{v}_{\tail[k]} \|_2 \land \boldsymbol{\mu} \text{ is useful}}$. by Definition~\ref{sequsefule:def}, we know that
\[
\Pr_{\boldsymbol{\mu}\sim \mathcal{B}}[ \boldsymbol{\mu} \text{ is useful} \mid \mu[i]\ne 0 ] \ge 1 - \tau_{\Delta},
\]
By union bound, we know that
\[
\Pr_{\boldsymbol{\mu}\sim \mathcal{B}}  [ h_{i,\rm good}(\boldsymbol{\mu}) ] \ge (\tau_{\rm up}-\tau_\Delta) \frac{1}{b}.
\]
Now, if the condition in the lemma statement holds, for every $i\in [n]$ we have
\[
\Pr_{t\sim [d]}[h_{i,\rm good}(\boldsymbol{\mu}_t)] \ge \frac{\tau_{\rm up} - 2\tau_\Delta}{b}
\]
and
\[
\Pr_{t\sim [d]}[h_{i,\rm bad}(\boldsymbol{\mu}_t)] \le \frac{\tau_{\rm down} + \tau_{\Delta}}{b}.
\]
We choose $\tau_{\Delta}$ to be less than $\frac{1}{10}$. By doing so, the condition of Lemma~\ref{lemma:good-estimation} is met, and Algorithm~\ref{algo:weight-esitmator} can report estimations within the desired accuracy. This completes the proof.  
\end{proof}

Next, we consider the real execution of Algorithm~\ref{algo:weight-esitmator} in the adaptive setting. Knowing that $Q,\boldsymbol{v}$ is $(\xi, \eta)$-DP with respect to the buckets $(\boldsymbol{\mu}_t)_{t\in [d]}$, we might use the generalization property of DP (Theorem~\ref{theo:DP-generalization}) to argue that the number of good buckets is still close to its expectation in the independent case. So the condition of Lemma~\ref{obliv-estimate-correct:lemma} is still met with high probability. This step is analogous to the argument in Section~\ref{sec:obliv-to-adaptive} and we do not repeat it here.

\section{Query Efficiency} \label{faster:sec}

Our robust estimator as described in Algorithm~\ref{algo:robust-count-sketch} tests all $n$ keys using the $\ThresholdMonitor$ in order to determine the returned set $\texttt{K}$ of candidate heavy hitters. This is computationally inefficient. In the following, we show how we can combine our algorithm with an efficient non-robust heavy-hitters sketch to support heavy-hitter queries that are both efficient and robust.

First, let us recall the following fast oblivious heavy-hitter algorithm from \cite{heavy-hitter-fast-query}:
\begin{theorem}[\cite{heavy-hitter-fast-query}]\label{fast-query:thm}
There is an oblivious streaming algorithm \FastQueryHH~ satisfying the following. For every $n,k\ge 1$ and $\beta \in (0,1)$, $\FastQueryHH$ maintains a vector $\boldsymbol{v}\in \mathbb{R}^n$ using $O(k \log n/\beta)$ space, supports entry updates in $O(\log n/\beta)$ time. On a query, it produces in $O(k\mathrm{polylog}(n))$ time a list $L\subset [n]$ of indices such that:
\begin{itemize}
 \item $|L| \le O(k)$.
 \item For every $i\in [n]$ such that $v[i] \ge \frac{1}{\sqrt{k}} \| \boldsymbol{v}_{\tail[k]} \|_2$, it holds that $i\in K$.
\end{itemize}
\end{theorem}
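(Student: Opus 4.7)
The plan is to build a balanced dyadic tree over $[n]$ with $L = O(\log n)$ levels, and to maintain a small heavy-hitters sketch at each level. At level $\ell$, the $n$ coordinates are partitioned into $2^\ell$ contiguous blocks, each treated as a single ``virtual coordinate'' whose value is the signed sum of its members. At each level I keep an $\ell_2$-heavy-hitters sketch (a $\countsketch$-style structure) of width $O(k)$, with depth/failure parameters tuned so that the total space summed across levels is $O(k\log n/\beta)$. The structural observation driving correctness is that any $i$ with $v[i] \ge \frac{1}{\sqrt{k}}\|\boldsymbol{v}_{\tail[k]}\|_2$ remains $\ell_2$-heavy when viewed at every coarser level: each ancestor block of $i$ inherits its full mass, while the aggregation can only shrink the tail of the aggregated vector relative to the magnitude of the block containing $i$.

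Updates propagate any change to $v[i]$ into exactly one block per level, for a total of $O(\log n)$ sketch touches. Tuning each per-level sketch so that each touch is a constant number of hash evaluations per row and the depth is $O(1/\beta)$ yields the claimed $O(\log n/\beta)$ update time. For queries I would traverse the tree top-down, maintaining a candidate set $S$ of at most $O(k)$ live virtual coordinates. Starting with $S = \{\text{root}\}$, at each level I expand every element of $S$ into its two children (so $|S|$ at most doubles), estimate each child's magnitude via the level-$\ell$ sketch, and retain only the $O(k)$ children whose estimated magnitude exceeds the heavy threshold. After reaching the leaves, $S$ is the returned list. Correctness follows by a union bound over the $O(k\log n)$ estimations invoked during the query: every true heavy hitter survives at every ancestor (by the structural observation), while the top-$O(k)$ filter at each level keeps $|S|$ bounded, yielding the required list of size $O(k)$. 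Each level's work is $O(k)$ sketch queries, giving total query time $O(k\,\mathrm{polylog}(n))$.

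The main obstacle, I expect, is calibrating the per-level sketch parameters so that three tensions resolve simultaneously: (a) total space stays at $O(k\log n/\beta)$ and not $O(k\log^2 n/\beta)$, (b) the union bound over the $O(k\log n)$ estimates actually performed during a query fits within the $\beta$ failure budget (one exploits that only $O(k)$ estimation queries are issued per level, not a union over all $2^\ell$ virtual coordinates at that level), and (c) the top-$O(k)$ filter at each level never drops a true heavy hitter because of sketch noise, which requires maintaining a small constant slack between the survival threshold used during traversal and the final heavy-hitter threshold. A secondary subtlety is that the $\ell_2$-heaviness of a block is witnessed with respect to the \emph{aggregated} vector's tail rather than the original $\boldsymbol{v}_{\tail[k]}$; a short calculation shows the aggregated tail norm is no larger than $\|\boldsymbol{v}_{\tail[k]}\|_2$ up to constants, so the per-level invocations of the standard $\ell_\infty/\ell_2$ bound \eqref{CSketchAcc:eq} suffice. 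An alternative construction that replaces the dyadic tree by an error-correcting-code-based coordinate identification scheme (in the spirit of \cite{NelsonNW13}) would sidestep some of this bookkeeping while achieving the same asymptotic bounds.
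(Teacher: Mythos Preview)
The paper does not prove this theorem; it is quoted from \cite{heavy-hitter-fast-query} and used as a black box, so there is no in-paper argument to compare against.

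That said, your dyadic-tree sketch has a real gap at the structural step. The claim that ``each ancestor block of $i$ inherits its full mass'' fails in the turnstile setting the theorem addresses: if the block containing the heavy coordinate $i$ also contains a coordinate of comparable magnitude and opposite sign, the block's (signed) sum can be arbitrarily small. Concretely, with $\boldsymbol{v}=(M,-M,1,1,\dots,1)$, coordinate $1$ is $\ell_2$-heavy but the first dyadic block $\{1,2\}$ has sum $0$ and is pruned immediately in your top-down traversal. Replacing the plain sum by a randomly-signed combination helps only in expectation: with constant probability per level the block containing $i$ still has small magnitude, and a union bound over $\Theta(\log n)$ levels does not close. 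Patching this with independent repetitions per level costs an extra logarithmic factor you cannot absorb into the stated $O(k\log(n/\beta))$ space. The accompanying claim that aggregation can only shrink the $\ell_2$ tail is likewise unjustified for signed vectors.

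This cancellation obstacle is precisely why fast-query $\ell_2$ heavy hitters is harder than the $\ell_1$ or strict-turnstile version, where monotonicity of block sums makes the hierarchical argument go through. The constructions that achieve the stated bounds circumvent cancellation by a different mechanism than naive dyadic aggregation; your closing remark about an error-correcting-code or expander-based identification scheme is in fact the right direction here, not merely an alternative.
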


With $\FastQueryHH$ in hand, we can slightly modify Algorithm~\ref{algo:robust-count-sketch} as follows: Concurrently to an execusion of Algorithm~\ref{algo:robust-count-sketch}, we run a copy of $\FastQueryHH$ with the parameter ``$k$'' in Theorem~\ref{fast-query:thm} set to $(C_a+1)\cdot b$. Then on each heavy-hitter query, we first invoke $\FastQueryHH$ to produce a list $L$ of candidate sets. Then we test each key $i\in L$ by querying the TM (just like what we have done in Algorithm~\ref{algo:robust-count-sketch}). We call the new algorithm $\FastQueryVariant$.

We analyze the query time and correctness of $\FastQueryVariant$. On a heavy-hitter query, $\FastQueryVariant$ first takes $O(b\log n/\beta)$ time to run $\FastQueryHH$ and produces a list $L$. Then it tests each key from $L$ by querying the $\ThresholdMonitor$, which requires $O(b\cdot \ell)$ time for each key. Recall that $b = O(k)$. Therefore, the query time is bounded by $O(\mathrm{poly}(k, \ell, \log n/\beta))$. $\FastQueryVariant$ is also correct with high probability: the final list has size bounded by $O(k)$ and every heavy hitter will appear in the list $L$ and then get selected by the $\ThresholdMonitor$ test with high probability.

Finally, we argue that $\FastQueryVariant$ can robustly answer at least $\Omegatilde(\ell^2)$ adaptively-chosen queries. This is because for every $i\in [n]$ with $|v_i| \le \frac{1}{\sqrt{k}} \| \boldsymbol{v}_{\tail[C_a\cdot b]}\|$, it does not really matter if $i$ is included in the list $L$ or not, for such $i$ will always be rejected by the $\ThresholdMonitor$ test with high probability. Therefore, the \emph{statistical distance} between the output distribution of $\FastQueryVariant$ and that of Algorithm~\ref{algo:robust-count-sketch} is negligible. Hence, $\FastQueryVariant$ basically enjoys the same robustness guarantee as Algorithm~\ref{algo:robust-count-sketch}, and the analysis of Algorithm~\ref{algo:robust-count-sketch} implies $\FastQueryVariant$ is also robust.
 
 \section{Robust Stable Estimator for Continuous Reporting}\label{streaming:sec}
 
 In this section we describe Algorithm~\ref{algo:streaming-robust}, which is a robust version of the stable sign-alignment estimator presented in Section~\ref{sec:basic-stable-sa-estimator}. 
 
 One important class of applications of $\countsketch$ is on distributed or streaming data, which has the form of updates to weights of keys.
 In a streaming model the input is presented as a sequence of updates
 $(\boldsymbol{u}_j)_{j=1}^m$, where each update has the form
 $\boldsymbol{u}=(\boldsymbol{u}.{\mathrm{key}}, \boldsymbol{u}.{\mathrm{value}})$ such that $\boldsymbol{u}.{\mathrm{key}}\in [n]$ and $\boldsymbol{u}.{\mathrm{value}}\in \mathbb{R}$.  
   The stream corresponds to an input vector $\boldsymbol{v}$ that is updated in each step. We can denote that as a sequence $(\boldsymbol{v}_q)$ of input vectors where $\boldsymbol{v}_q$ is the vector at step $q$. Initially, $\boldsymbol{v}_0 = 0^n$ and for step $q\geq 1$
 \[\boldsymbol{v}_q := \boldsymbol{v}_{q-1} + u_q.\mathrm{value} \cdot \boldsymbol{e}_{u_q.\mathrm{key}}\ .\]
  The sketch of streaming data can be efficiently maintained:
 Consider an initialized $\Bcountsketch$ with measurement vectors
 $(\boldsymbol{\mu}_t)$.  The sketch is initialized by setting $(c_t \gets 0)_{t\in [d]}$ and updating
 $(c_t \gets c_t + \mu_t[u.{\mathrm{key}}]\cdot u.{\mathrm{value}})_{t\in [d]}$.

 Algorithm~\ref{algo:streaming-robust} is similar to the robust threshold
 estimator, but it only accounts for {\em changes} in the reporting (in terms of when access counts on the $\TM$ are increased). A change occurs when a key that is not currently reported enters the reported set or when a key that is currently reported exits the reported set. 
  
  \paragraph{Analysis:} The analysis mimics that of Algorithm~\ref{algo:robust-count-sketch}.  The only differences is that we use $\tau_\Delta := (\tau_b-\tau_a)/25$, apply Lemma~\ref{stablesmallerrthencorrect:lemma} for the quality of approximations needed, and define $\lambda_{Q,i}$ as the flip number (Definition~\ref{flipnumber:def}) and use Remark~\ref{flipbound:rem} to establish that it bounds the number of $\TM$ queries returning $\top$ for the respective key. The definition of useful buckets, sequences $\mathcal{Q}(L)$, Lemma~\ref{constuseful:lemma}, and Theorem~\ref{Adaptive:thm} all carry over.
 
 \paragraph{Reporting estimated weights:}
The robust estimator in Algorithm~\ref{algo:streaming-robust} can optionally report estimated weights.
When a key $i$ enters the reported set, we obtain a private estimate $\hat{v}[i]$ of its current value, as described in Section~\ref{estimates:sec}.  While keys are in the reported set, we apply exact updates to their estimated values.  When a key exits the reported set we delete its estimated value. As for the analysis, one $\TM$ query returning $\top$ is used upon an entry of a key to the reported set.  Following that we perform exact updates, which do not affect the privacy and robustness analysis.  The estimates we maintain however need to  be validated as we seek accuracy  with respect to the tail of the {\em current} $\boldsymbol{v}_q$ and the permitted "tail error" might decrease.
This can make the estimate obtained earlier when  key $i$ entered the reported set not sufficiently accurate. Therefore, we
validate our estimates before reporting them and perform an update (a fresh weight estimates) if the validation fails. The validation tests (using predicate queries to $\TM$) if our maintained estimate is still within appropriate noisy quantiles of the bucket estimates.  Each validation failure that triggers an update corresponds to a significant decrease of the tail error.  This number contributes to the robustness budget when weights are reported.

\begin{algorithm2e}
    \caption{Robust $\Bcountsketch$: Streaming}
    \label{algo:streaming-robust}
    \DontPrintSemicolon
    \KwIn{Sketch parameters $(n,d,b)$, Access limit $L$, upper bound $m \geq  L$ on the number of updates.}
    $(\boldsymbol{\mu}_t)_{t\in [d]} \gets$ initialized $\Bcountsketch$ with parameters $(n,d,b)$\;
    $\tau_{m_1} \gets \tau_a + \frac{1}{5} (\tau_b-\tau_a)$\tcp{lower threshold value for $\TM$}
    $\tau_{m_2} \gets \tau_b - \frac{1}{5} (\tau_b-\tau_a)$\tcp{higher threshold value for $\TM$}
    $\texttt{K}^+,\texttt{K}^- \gets \emptyset$\tcp{Initialize continuously reported set $\texttt{K}^+\cup \texttt{K}^-$ of candidate heavy hitter keys}
    
    \SetKwProg{Init}{initialize}{:}{}
    \Init{{\rm $\ThresholdMonitor$} $\TM$}{
    $\TM.S \gets (\boldsymbol{\mu}_t)_{t\in [d]}$ \tcp{Measurement vectors} 
   $\TM.(\eps,\delta) \gets  \left(\frac{C_1}{\sqrt{L}}, \frac{C_2}{n\cdot m \cdot b \cdot L} \right)$\tcp{privacy parameters}
   $\TM.L \gets L$ \tcp{access limit}
    }
     \ForEach(\tcp*[h]{process update})
     {update $\boldsymbol{u}=(u.\mathrm{key},u.\mathrm{value})$ }
     {
     \tcp{optional: If $u.\mathrm{key}\in \texttt{K}^+\cup \texttt{K}^-$ then $\hat{v}[u.\mathrm{key}] += u.\mathrm{value}$}
     \ForEach(\tcp*[h]{update sketch}){$t\in [d]$}{$c_t \gets c_t + \mu_t[u.\mathrm{key}] \cdot u.\mathrm{value}$}
     
    \ForEach(\tcp*[h]{loop over keys not currently reported}){ $i\in [n]\setminus (\texttt{K}^+\cup \texttt{K}^-)$}
    {
    Define $f^+:[d] \to\{0, 1\}$, where $f^+(t) = \mathbbm{1}_{\mu_t[i] \cdot c_t > 0}$\;
    \eIf{$\TM.query(f^+ ,\geq \frac{d}{b} \tau_{m2} ) = \top$}{$\texttt{K}^+ \gets \texttt{K}^+ \cup \{i\}$\tcp{Optional: $\hat{v}[i] \gets$ estimate using Algorithm~\ref{algo:weight-esitmator}}}
    {
    Define $f^-:[d] \to\{0, 1\}$, where $f^-(t)=\mathbbm{1}_{\mu_t[i]\cdot c_t < 0}$\;
    \If{$\TM.query(f^-, \geq \frac{d}{b} \tau_{m2}) = \top$}{$\texttt{K}^-\gets \texttt{K}^-\cup \{i\}$ \tcp{Optional: $\hat{v}[i] \gets$ estimate using Algorithm~\ref{algo:weight-esitmator}}}
    }
        } 
    \ForEach(\tcp*[h]{loop over keys currently reported}){ $\sigma\in \{+,-\}$}
    {
    \ForEach{$i\in \texttt{K}^\sigma$}
    {
    Define $f^\sigma:[d] \to\{0, 1\}$, where $f^\sigma(t)=\mathbbm{1}_{\mu_t[i]\cdot c_t \cdot (\sigma 1) > 0}$\;
    \If{$\TM.query(f^\sigma, \leq \frac{d}{b} \tau_{m1}) = \top$}{$\texttt{K}^\sigma \gets \texttt{K}^\sigma \setminus \{i\}$ \tcp{Optional: delete $\hat{v}[i]$ }}
    }
    }
    \Return{$\{(i,\hat{v}[i]) \mid i\in \texttt{K}^+\cup \texttt{K}^-\}$}\tcp{Optional: test whether $\hat{v}[i]$ is still valid.  If not replace $\hat{v}[i]$ using Algorithm~\ref{algo:weight-esitmator}}
    } 
\end{algorithm2e}

\section{Attack Strategy Overview}\label{sec:attackstrategy}
We describe our attack strategy which applies with
both $\countsketch$ and $\Bcountsketch$.
The attacker knows the parameters of the sketch $(n, d, b)$ but does not know the internal randomness $\boldsymbol{\rho}$ and thus
the measurement vectors $(\boldsymbol{\mu}_t)_{t\in [d]}$.
The attacker has an oracle access to the sketch realization by adaptively producing query vectors $(\boldsymbol{v}_q)$ and obtaining a set of keys:  
\[\texttt{K}(\boldsymbol{v}_q) := M(\sketch_\rho(\boldsymbol{v}_q)) .\]
We tailor our attack strategy to the following estimators $M$:
\begin{trivlist}
\item
{\bf The median estimator:}
Compute for each key $i$ the estimate $\hat{v}[i] \gets \mathrm{median}\{\mu_t[i]\cdot c_t \mid t\in T_i \}$ and return the $k'$ keys with largest estimated magnitudes.
\item
{\bf Basic threshold sign-alignment estimator:}
The estimator in Equation ~\eqref{basicest:eq}, with arbitrary $\tau$, and with the basic estimate as in Equation~\eqref{countagree:eq}.
\item
{\bf Robust threshold sign-alignment estimator:}
As in Algorithm~\ref{algo:robust-count-sketch}.
\end{trivlist}
We establish lower bounds on robustness that nearly match the respective upper bounds:
\begin{theorem}
The median estimator and basic threshold sign-alignment estimator admit an attack of size $O(\ell)$.  The robust estimator admits an attack of size $O(\ell^2)$, where $\ell = d/b$.
\end{theorem}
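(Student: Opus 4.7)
The plan is to use a single attack template and tune its two phases separately for each of the three estimators. For a suitable block $S\subseteq[n]$ of $k'-1$ ``super heavy'' keys and parameters $A\gg B\gg 1$, I would issue in Phase 1 a sequence of $r$ query vectors
\[
\boldsymbol{v}_q \;=\; B(\boldsymbol{e}_1+\boldsymbol{e}_2) \;+\; A\!\sum_{j\in S}\boldsymbol{e}_j \;+\; \boldsymbol{t}_q,
\]
where $\boldsymbol{t}_q\in\{-1,0,+1\}^n$ is a random $\pm 1$ vector supported on a fresh disjoint block of $T$ keys. The parameters are tuned so that every key in $S$ is reported in every round, and of the ``borderline'' pair $\{1,2\}$ exactly one of them is reported per round. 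Let $\collected\subseteq[r]$ be the set of rounds in which key $2$ was returned in place of key $1$, and set $\bar{\boldsymbol{t}}:=\sum_{q\in\collected}\boldsymbol{t}_q$. In Phase 2 the attacker issues a single vector $\boldsymbol{v}_{\mathrm{final}}$ that makes key $1$ the unique ``super heavy'' coordinate and keys $2,\dots,k'+1$ merely ``largish'', perturbed by $\bar{\boldsymbol{t}}$ with an appropriate scaling; we argue that with constant probability key $1$ is not reported, so the output is incorrect.

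The central calculation is a conditional bias bound: for every bucket $t$ with $\mu_t[1]\ne 0$ and $\mu_t[2]\ne 0$,
\[
\Ex\bigl[\mu_t[1]\,\langle\boldsymbol{\mu}_t,\boldsymbol{t}_q\rangle \,\big|\, q\in\collected\bigr] \;=\; -\Theta\!\bigl(1/\sqrt{\ell}\bigr)\cdot\|\boldsymbol{t}_q\|_\infty,
\]
with the opposite sign for key $2$. The idea is that conditioning on ``$2$ reported instead of $1$'' forces a majority of the $\ell=d/b$ bucket comparisons to favor key $2$ over key $1$, which in turn forces the random tail contribution to systematically push in one direction. Summing over the $r$ collected rounds and invoking concentration gives an accumulated bias of magnitude $\Theta(\sqrt{r/\ell})$ per bucket of key $1$, competing against the $\Theta(1)$ per-bucket ``tail noise'' from the injected $\bar{\boldsymbol{t}}$. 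Setting $r=\Theta(\ell)$ makes the bias-to-noise ratio of order $1$, which I would then show is sufficient to flip the median (resp.\ sign-alignment count) for key $1$ below the reporting threshold. This yields the $O(\ell)$ bound for the median estimator and for the basic threshold sign-alignment estimator, which differ only in which one-dimensional statistic of the buckets is thresholded.

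For the robust estimator of Algorithm~\ref{algo:robust-count-sketch}, the same two-phase template applies, but Phase 1 must be lengthened because $\ThresholdMonitor$ releases each decision only after adding Laplace noise of scale $\Theta(1/\varepsilon)=\Theta(\sqrt{L})$ relative to the per-bucket counts, and because inactive buckets further obscure the true sign-alignment statistic. The effect is that the bias the attacker can inject \emph{per round} on any given bucket of key $1$ is attenuated by a factor $\Theta(1/\sqrt{L})$ compared to the non-robust case: each individual round's feedback is a noisy binary signal with bias of order $1/\sqrt{L}$ about the bucket signs. Consequently, achieving a $\Theta(1)$ bias-to-noise ratio requires $r=\Theta(L)=\Theta(\ell^2)$ rounds (up to logarithmic factors), matching the upper bound in Theorem~\ref{Adaptive:thm}. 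For the Phase 2 failure, I would use the same sign-flipping argument as in the non-robust case, now combined with a union bound over the Laplace noise in the final query's $\TM$ calls.

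The main obstacle is the conditional bias computation, both in the non-robust and robust cases. One must rule out the scenario where the event $\{q\in\collected\}$ is predominantly caused by one or two ``outlier'' buckets of keys $1,2$ colliding with heavy coordinates, rather than by a coherent bias across most buckets; otherwise $\bar{\boldsymbol{t}}$ would not accumulate structured bias across rounds. I would handle this by a careful decomposition of the event into ``typical'' versus ``atypical'' bucket patterns and showing that typical rounds dominate, using the pairwise-independence-like properties of $\Bcountsketch$ and $\countsketch$. For the robust estimator, the additional subtlety is that the attacker observes only privatized outputs; I would argue that despite the Laplace noise, the reported set still leaks $\Omega(1/L)$ bits per round about the relevant bucket sign counts (since $\varepsilon=\Theta(1/\sqrt{L})$), which is exactly enough for $\Theta(\ell^2)$ rounds to suffice.
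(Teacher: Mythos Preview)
Your template for the median estimator is essentially the paper's attack and the $O(\ell)$ conclusion is correct (though your bookkeeping is loose: the accumulated bias over $r$ collected tails is $\Theta(r/\sqrt{\ell})$ against noise $\Theta(\sqrt{r})$, so the \emph{ratio} is $\Theta(\sqrt{r/\ell})$, not the bias itself). The substantive gaps are in how you handle the two sign-alignment estimators.

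First, the two-borderline-key device relies on the estimator returning exactly one of keys $\{1,2\}$ per round. That is forced by the median estimator because it outputs the top-$k'$ keys, but it fails for a threshold estimator: whether key $1$ crosses $\tau_m$ and whether key $2$ does are nearly independent events (the two keys share essentially no buckets), so in about half the rounds both or neither is reported, and your collection rule is undefined there. The paper's attack on the sign-alignment estimators uses a different setup: a single designated key $h$, queried with both $v\cdot\boldsymbol{e}_h+\boldsymbol{z}$ and $v\cdot\boldsymbol{e}_h-\boldsymbol{z}$, collecting $\pm\boldsymbol{z}$ only when $h$ is reported for exactly one of the two signs. This directly forces the collected tail to have strictly more ``type-$N$'' than ``type-$P$'' buckets for $h$, which is what drives the bias.

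Second, your robust-estimator analysis is internally inconsistent. You assert a per-round attenuation factor of $\Theta(1/\sqrt{L})$ and conclude $r=\Theta(L)$, but plugging your own numbers in---bias $\Theta(1/(\sqrt{\ell}\,\sqrt{L}))$ per round versus noise $\Theta(\sqrt{r})$---gives $r=\Theta(\ell L)=\Theta(\ell^3)$, not $\Theta(\ell^2)$. The paper's route is different and avoids this arithmetic entirely: it does \emph{not} collect based on a single noisy query. It observes that the count gap between $+\boldsymbol{z}$ and $-\boldsymbol{z}$ is $\Theta(\sqrt{\ell})$ while the Laplace noise spreads the reporting probability over a range of $\Theta(\ell)$ counts, so the reporting-probability gap between the two signs is only $\Theta(1/\sqrt{\ell})$. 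The attacker therefore issues $O(\ell)$ queries with each of $\pm\boldsymbol{z}$, estimates the two reporting probabilities to additive accuracy $O(1/\sqrt{\ell})$, and collects the sign with the larger estimate. This costs $O(\ell)$ queries per collection but recovers the full $\Theta(1/\sqrt{\ell})$ per-bucket bias of the non-robust case, so $O(\ell)$ collections and $O(\ell^2)$ queries total suffice.
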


For $i\in [n]$, we denote by $\boldsymbol{e}_i$ the unit vector of the $i$'th entry (that is, $e_i[j]=\mathbbm{1}_{i=j}$). 
For a vector $\boldsymbol{z}\in\mathbb{R}^n$, the support of $\boldsymbol{z}$, which we denote by $\Supp(\boldsymbol{z}) \subset [n]$, is the set of its nonzero entries. For a key $i\in [n]$, we denote its buckets by $T_i = \{t\in [d] \mid \mu_t[i] \neq 0 \}$.

\paragraph{Attack Description} 
Our attacks will be with respect to one designated key $h\in[n]$ and a specified bias-to-noise ratio parameter $\texttt{BNR} > 0$. We construct an attack tail $\boldsymbol{a}$, which is a vector with entries in $\{-1,0,1\}$ so that $h\not\in\Supp(\boldsymbol{a})$ and support size
$|\Supp(\boldsymbol{a})|=r\cdot m$ (where $r,m$ are attack parameters) that behaves very differently on the buckets $T_h$ than on random buckets $\boldsymbol{\mu}\sim\mathcal{B}$.
On a random bucket we have that
$\E_{\boldsymbol{\mu}\sim \mathcal{B}}\left[ \langle \boldsymbol{\mu},\boldsymbol{a}\rangle \right]= 0$ and 
$\Var_{\boldsymbol{\mu}\sim \mathcal{B}}\left[ \langle \boldsymbol{\mu},\boldsymbol{a}\rangle \right] = \frac{r\cdot m}{b}$ and
standard deviation $\sqrt{\frac{1}{b}\cdot r\cdot m} = \sqrt{\frac{1}{b}}\| \boldsymbol{a} \|_2$.
On most of the sketch buckets $(\boldsymbol{\mu}_t)_{t\in T_h}$, however,  we will have large bias with
 \begin{equation} \label{finalattacktailprop:eq}
 \langle \boldsymbol{\mu}_t, \boldsymbol{a} \rangle \cdot \mu_t[h] \geq   \texttt{BNR} \cdot \frac{1}{\sqrt{b}}\sqrt{r\cdot m}\ . 
 \end{equation} 
We then use $\boldsymbol{a}$ to design a vector on which the sketch fails.  The details depend on the estimator.  When considering
inputs of the form
\[w \cdot \mathbf{e}_h \pm \boldsymbol{a}\ ,\]
we can expect a randomly initialized sketch to provide a good estimate of $w$ when $w\gg \sqrt{\frac{1}{b}\cdot r\cdot m} = \frac{1}{\sqrt{b}} \|\boldsymbol{a}\|_2$ but our attacked sketch masks $w$ and the estimator might fail to
return $h$  even when 
$w \gg \frac{1}{\sqrt{k}}\|\tail_k(\boldsymbol{a})\|_2 \approx \sqrt{\frac{r\cdot m}{k}}$ or
can return it as heavy even when $w=0$.

\paragraph{Construction of $\boldsymbol{a}$}
We construct our attack tail $\boldsymbol{a}$ by identifying and summing up $r=O(\ell)$ randomly-generated tails that are lightly biased in the same direction.   When we sum $r$ lightly-biased tails, the bias grows linearly with $r$  whereas the standard deviation grows proportionally to $\sqrt{r}$.  Therefore the ratio of the bias to the standard deviation increases. 

We identify these lightly biased tails using queries in which $h$ is a borderline heavy hitter and selecting tails on which $h$ is reported (or selecting negated tails where it is not reported). 
We designate a special set $H$ of keys that includes 
$h$ and use
query vectors of the form
\begin{equation} \label{queryvec:eq}
    \sum_{i\in H} v[i] \cdot \mathbf{e}_i \pm \boldsymbol{z}_q\ .
\end{equation}
The values
 $v[i]\in\mathbb{R}_{\geq 0}$ for $i\in H$ are fixed throughout the attack. 

The vectors $\boldsymbol{z}_q \in \{-1,0,1\}^n$ are chosen so that they have 
disjoint supports $\Supp(\boldsymbol{z}_q)$ of size
$|\Supp(\boldsymbol{z})| = m$ and so that
these supports are disjoint from $H$. For each key
$i\in \Supp(\boldsymbol{z}_q)$, the value $z_q[i]$ is drawn independently and uniformly at random from $\{-1,1\}$.   
 Our attack will process the sequence $(\boldsymbol{z}_q)$ and for each, we will either {\em collect} $\boldsymbol{z}_q$ (that is, 
$\boldsymbol{a} \gets \boldsymbol{a} + \boldsymbol{z}_q$)
collect its negation
(that is, 
$\boldsymbol{a} \gets \boldsymbol{a} - \boldsymbol{z}_q$)
or collect neither.  
We stop after we collect a specified number $r$ of tails.
 
 Our attack on the median estimator uses
$|H|=k'+1$ with $k'-1$ very heavy keys and two equal-weight borderline keys, one of which is $h$. 
The estimator outputs all very heavy keys and only one of the two borderline keys. If $h$ is reported we collect
$\boldsymbol{z}$ and if it is not reported we collect $-\boldsymbol{z}$.\footnote{We assume here that it is unlikely that the median estimates of the two borderline keys are equal. This can be assured with $\sqrt{m/b} \gg d/b$ or by using Gaussian values instead of $-1,1$.}  For the sign-alignment 
estimators we use $H= \{h\}$.
We query with both 
$\boldsymbol{z}$ and $-\boldsymbol{z}$.  If $h$ is reported exactly once then either $\boldsymbol{z}$ or its negation are collected accordingly. The querying with the robust estimator, which is probabilistic, requires $O(\ell)$ queries and is discussed below.

\paragraph{Analysis}
The contribution of each $\boldsymbol{z}$ to a bucket in $T_h$ is a symmetric random variable that is the sum of independent random variables that are $0$ with probability $(1-1/b)$ and $1$ or $-1$ each with probability $1/2b$.  The distribution has expectation $0$ and standard deviation $\sqrt{m/b}$. 
We now consider the signs of the tail contributions 
$\mu_t[h] \langle \boldsymbol{\mu}_t,\boldsymbol{z}\rangle$
on all $\ell$ buckets (we use $m$ that is large enough so a contribution of $0$ is unlikely). Then each bucket is positive or negative with probability $0.5$.  Using an anti-concentration bound, $\boldsymbol{z}$ with constant probability would have $\Omega(\sqrt{\ell})$ difference between the number of positive or negative  contributions.
Note that when we condition on the sign of $\mu_t[h] \langle \boldsymbol{\mu}_t,\boldsymbol{z}\rangle$,
the distribution of magnitudes is the same as with an unconditioned bucket.

We now assume we can identify which one of $\pm\boldsymbol{z}$ has more positive contributions and collect it. Then with constant probability its contribution to a given bucket in $T_h$ is positive with probability
$1/2 + \Omega(1/\sqrt{\ell}$ and negative otherwise 
(the distribution of magnitudes conditioned on the sign is the same as without conditioning on the selection).
The process is equivalent as drawing a positive value with probability $\Omega(1/\sqrt{\ell}$ and an unbiased value (distributed as with random $\boldsymbol{z}$) otherwise. 
So with constant probability the collected vector adds in expectation bias of $\sqrt{m/(b\cdot \ell)}$ for each bucket of $T_h$ and variance that is roughly that of an unconditioned bucket $\frac{m}{b}$.  
The expected bias added to a bucket 
after $r$ collection events is then
$\Omega( r \cdot \sqrt{m/b} /\sqrt{\ell})$
(which is also well concentrated when $r\gg \sqrt{\ell}$, since in each round there is $\Omega(1/\sqrt{\ell})$ probability of bias that is $\Omega(\sqrt{m/b})$).
The variance of the unbiased part is bounded from above by that of adding tails $\boldsymbol{z}$ unconditionally, and is  $\leq r \cdot m/b$.

In terms of the "tail norm." Each collection event contributes $m$ entries of magnitude $1$ to $\boldsymbol{a}$. Hence, after $r$ collection events we have 
$\|\boldsymbol{a}\|_2 = \sqrt{r\cdot m}$.  Therefore, the standard deviation
of the per-bucket noise is $\approx \sqrt{r\cdot m/b} = \frac{1}{\sqrt{b}}\|\boldsymbol{a}\|_2$).  

The number $r$ of collection events needed for the bias-to-noise ratio that exceed a specified $\texttt{BNR}$ is (up to constants) the solution of 
\[ 
r \cdot \sqrt{\frac{m}{b\cdot\ell}} = \texttt{BNR}\cdot \sqrt{\frac{r\cdot m}{b}}\ ,
\]
which yields $r = c\cdot \texttt{BNR}^2 \ell$ (for some constant $c$).

What remains is to show how we set up the query vectors with tail $\pm\boldsymbol{z}$ so that we can identify lightly-biased tail.

\paragraph{Choosing the weight of key $h$}
We set the value of $v[h]$ so that collection happens with constant probability for each $\boldsymbol{z}_q$ and that the collected tail is biased in the "right" direction.
To do so, we set  $v[h]$ so that key $h$ 
is "borderline" reported.   For the median and basic threshold estimators, we set it so that over the distribution of $\boldsymbol{z}$ there is constant probability to report $h$ and constant probability not to report $h$.

The robust estimator adds random noise to the count and has a probabilistic output. There is also differences in the count 
due to buckets getting inactivated during the execution. 
So for each count (minus the contribution of inactive buckets) there is a reporting probability.  Recall
that we can expect a gap of $\Omega(\sqrt{\ell})$ between the counts of $\boldsymbol{z}$ and
$-\boldsymbol{z}$.   The reporting probability is $0$ for low counts and $1$ for high counts (close to $\ell$).  Therefore there is a value where the reporting probability increases by at least $\Theta(1/\sqrt{\ell})$ when the count increases by $\Theta(\sqrt{\ell})$.  
Note that both 
the number of inactivated buckets and the DP noise added by $\TM$ are of a lower order that the count gap ($\Theta(1/\ell)$ fraction versus $\Theta(1/\sqrt{\ell})$ gap). Therefore the probability gap holds even when accounting to the DP noise and inactive buckets.

We set $v[h]$ to be in this regime.  
We can then use $O(\ell)$ queries with each of $\boldsymbol{z}$ and
$-\boldsymbol{z}$ to estimate the probabilities to the needed accuracy (additive error of $O(1/\sqrt{\ell})$)  to identify the larger one.

The attacks on the classic and basic estimators use $O(1)$ queries for each collection event so overall the attack uses $O(\ell)$ queries.
The attack on the robust estimator requires $O(\ell)$ queries for each collection event and therefore overall uses
$O(\ell^2)$ queries.

\paragraph{Extensions}
The attacks can be implemented in a streaming model in which case the size of the attack is replaced by the flip number $\lambda_Q$.  We emulate our attacks using updates, noting that $2m$ updates are needed to change between a tail and its negation or between different tails.  This while maintaining stability of the reporting of keys in $H$ (by increasing their values during the updates of the tail).  

The attacks can be generalized to the case when the parameters of the estimators are not known to the attacker.  In this case, we include a "search" for the parameters as part of the attack. 

\section*{Acknowledgements}

Edith Cohen is partially supported by Israel Science Foundation (grant no. 1595/19).
Moshe Shechner is partially supported by Israel Science Foundation (grants no. 1595/19 and 1871/19).
Uri Stemmer is partially supported by the Israel Science Foundation (grant 1871/19) and by Len Blavatnik and the Blavatnik Family foundation.


\bibliographystyle{plain}  
\bibliography{references,robustHH}  

\newpage
\onecolumn
\appendix


\section{Proof of Lemma~\ref{lemma:heavyornot}} \label{proofheavynotheavy:sec}

In this section we provide the proof of Lemma~\ref{lemma:heavyornot}.  We start with some preliminaries.

\subsection{Helpful Inequalities}

\begin{theorem}\label{theo:holder}
(H\"older's inequality)
Let $X, Y$ be two non-negative random variables. Also let $p, q\ge 1$ be two reals such that $\frac{1}{p}+\frac{1}{q}=1$. Then it holds that $\mathbb{E}[X\cdot Y]\le \| X \|_p \cdot \| Y \|_q:= \Ex[X^p]^{1/p}\cdot \Ex[Y^q]^{1/q}$.
\end{theorem}


\begin{lemma}\label{lemma:main-tech}
(technical lemma) Let $A,B\ge 0$ be two reals. Let $X$ be a random variable such that:
\[
\begin{aligned}
\mathbb{E}[X] = 0, \quad & \mathbb{E}[X^2] \ge A, \\
\mathbb{E}[X^3] = 0, \quad  & \mathbb{E}[X^4] \le B.
\end{aligned}
\]
Then, for any $C\ge 0$ one has
\[
\Pr[X \le -C] \ge \frac{1}{\left(16 \cdot (6AC^2+B+C^4)\right)^{1/3}}\left( \left(\frac{(A+C^2)^3}{6AC^2+B+C^4} \right)^{1/2} - C \right)^{4/3}.
\]
\end{lemma}


\begin{proof}
Define a new random variable $Y = X + C$. It suffices to bound $\Pr[Y\le 0]$. First of all, one can verify the following useful moment bounds of $Y$.
\[
\begin{aligned}
& \mathbb{E}[Y] = C, \\
& \mathbb{E}[Y^2] \ge A+C^2, \\
& \Ex[Y^4] \le 6 AC^2 + B + C^4. \\
\end{aligned}
\]
Then, we have
\begin{align}\label{eq:bound-dif}
    \Ex[|Y|\cdot \mathbbm{1}_{Y > 0}] - \Ex[|Y|\cdot \mathbbm{1}_{Y \le 0}] = C.
\end{align}
We also have
\[
\begin{aligned}
\Ex[Y^2]
&= \Ex[ |Y|^{2/3}\cdot |Y|^{4/3}] \\
&\le \left\| |Y|^{2/3} \right\|_{3/2} \cdot \left\| |Y|^{4/3} \right\|_3 & \text{(Theorem~\ref{theo:holder})} \\
&= \Ex[Y]^{2/3} \cdot \Ex[Y^4]^{1/3}, \\
\end{aligned}
\]
which implies that
\begin{align}\label{eq:bound-sum}
 \Ex[|Y|] = \Ex[|Y|\cdot \mathbbm{1}_{Y > 0}] + \Ex[|Y|\cdot \mathbbm{1}_{Y \le 0}] \ge  \left(\frac{\Ex[Y^2]^3}{\Ex[Y^4]}\right)^{1/2}.
\end{align}
Subtracting \eqref{eq:bound-dif} from \eqref{eq:bound-sum}, one gets
\begin{align}\label{eq:lower-bound-Yle0}
    \mathbb{E}[|Y|\cdot \mathbbm{1}_{Y\le 0}] \ge \frac{1}{2}\left( \left(\frac{\Ex[Y^2]^3}{\Ex[Y^4]}\right)^{1/2} - C\right) .
\end{align}
On the other hand, we also have (by Theorem~\ref{theo:holder})
\begin{align}\label{eq:upper-bound-Yle0}
\Ex[|Y|\cdot \mathbbm{1}_{Y\ge 0}] \le \|Y\|_4 \cdot\| \mathbbm{1}_{Y\le 0}\|_{4/3} \le  \Ex[Y^4]^{1/4} \cdot \Pr[Y\le 0]^{3/4}.
\end{align}

Combining \eqref{eq:lower-bound-Yle0} and \eqref{eq:upper-bound-Yle0}, one concludes
\[
\Pr[Y\le 0] \ge \frac{1}{2\cdot (2\cdot \Ex[Y^4])^{1/3}}\left( \left(\frac{\Ex[Y^2]^3}{\Ex[Y^4]}\right)^{1/2} - C\right)^{4/3}
\]
as desired.
\end{proof}

In our analysis, we will instantiate Lemma~\ref{lemma:main-tech} with the following parameter setting.

\begin{corollary}\label{coro:main-tech}
Let $A > 0$ be a real. Let $X$ be a random variable such that:
\[
\begin{aligned}
\mathbb{E}[X] = 0, \quad & \mathbb{E}[X^2] = A, \\
\mathbb{E}[X^3] = 0, \quad  & \mathbb{E}[X^4] \le 3A^2.
\end{aligned}
\]
Then, it holds that $\Pr\left[X \le -\frac{\sqrt{A}}{5}\right]\ge 0.02$.
\end{corollary}

\begin{proof}
By Lemma \ref{lemma:main-tech} and direct calculation. 
\end{proof}

\subsection{Lemma proof details} 
We first recall how a linear measurement of a bucket is sampled. A bucket $\mu\colon [n]\to \{-1,0,1\}$ is constructed as follows: first, sample a $3$-wise independent function $h\colon [n]\to \{0, 1\}$ such that for every $i\in [n]$, $h(i)=1$ with probability $1/b$. Then, sample a $5$-wise independent function $\sigma\colon [n]\to \{-1,1\}$ such that $\sigma(i)$ equals $1$ or $-1$ with equal probability. Finally define $\mu[i]=h(i)\cdot \sigma(i)$. Fix an input vector $\boldsymbol{v}\in \mathbb{R}^n$ and compute $c = \langle \boldsymbol{\mu},\mathbf{v}\rangle$.
The measurement yields the weak estimates $\hat{v}[i] := \mu[i] c$  of $v[i]$ for all $i\in [n]$ that participate in the bucket (that is, $\mu[i] \not= 0$).
\footnote{The Lemma also holds for $\countsketch$.  Note that while buckets of $\countsketch$ can be dependent, the 
$d/b$  buckets that a key $i$ participates in are independent and follow the same distribution as $\Bcountsketch$ buckets.  The only difference is that we have exactly $d/b$ buckets instead of in expectation.}

The first part is established by Lemma~\ref{lemma:heavy} and the second part by Lemma~\ref{lemma:not-heavy}.

\begin{lemma}\label{lemma:heavy}
If $|v[i]| \ge \frac{30}{\sqrt{b}}\|\boldsymbol{v}_{\tail[b/900]}\|_2$, then $\Pr[c\cdot \mu[i] \cdot \mathrm{sign}(v[i]) > 0\mid \mu[i] \ne 0] \ge \frac{399}{400}$.
\end{lemma}

\begin{proof}
We prove for the case that $\sgn(v[i])=1$. The case for $\sgn(v[i]) = -1$ can be handled similarly. Let $S = \{i\in [n]: |v[i]| \text{ is } \frac{b}{900} \text{ largest among all } |v[j]| \text{'s} \}$. Conditioning on $\mu[i]$, we argue as follows. Let $c_{-i} = \sum_{i'\ne i, i'\notin S} \mu[i']\cdot v[i']$. Then one has
\[
\begin{aligned}
\Ex_{\boldsymbol{\mu}}[c_{-i} \mid \mu[i]\ne 0] &=0 , \\
\Ex_{\boldsymbol{\mu}}[c_{-i}^2 \mid \mu[i]\ne 0] &\le \frac{1}{b}\|\boldsymbol{v}_{\tail[b/900]}\|_2^2 .
\end{aligned}
\]
Let $\calE_1$ denote the event that $\mu[i]\cdot c_{-i}<-\frac{30}{\sqrt{b}}\cdot \|\boldsymbol{v}_{\tail[b/900]}\|_2$. By Chebyshev's inequality one has
\[
\Pr_{\boldsymbol{\mu}}[\lnot \calE_1 \mid \mu[i]\ne 0] = \Pr_{\boldsymbol{\mu}}\left[\mu[i] \cdot c_{-i} < -\frac{30}{\sqrt{b}}\cdot \|\boldsymbol{v}_{\tail[b/900]}\|_2 \biggm| \mu[i] \ne 0 \right]\le \frac{1}{900}.
\]
Let $\calE_2$ denote the event that $\mu[i'] = 0$ for all $i'\in S\setminus \{i\}$. We first observe that
\[
\Ex_{\boldsymbol{\mu}}\left[ \sum_{i'\in S\setminus \{i\}} \mathbbm{1}_{\mu[i']\ne 0} \biggm| \mu[i]\ne 0 \right] \le \frac{|S|}{b} \le \frac{1}{900}.
\]
By Markov's inequality, we have
\[
\Pr_{\boldsymbol{\mu}}\left[ \lnot \calE_2\mid \mu[i] \ne 0 \right] = \Pr_{\boldsymbol{\mu}}\left[ \sum_{i'\in S\setminus \{i\}} \mathbbm{1}_{\mu[i']\ne 0} \ge 1 \right] \le \frac{1}{900}.
\]
Note that when both $\calE_1$ and $\calE_2$ hold, we must have $c \cdot \mu[i] = \mu[i]\cdot (c_{-i} + v[i]) > 0$. Therefore, we conclude that
\[
\begin{aligned}
\Pr_{\boldsymbol{\mu}}\left[ c\cdot \mu[i]> 0 \mid \mu[i]\ne 0\right]
&\ge 1 - \Pr_{\boldsymbol{\mu}} \left[\lnot \calE_1 \lor \lnot \calE_2 \mid \mu[i]\ne 0 \right]\ge \frac{399}{400}.
\end{aligned}
\]
\end{proof}

Then, we show if a key is far away from being heavy, then there is a decent chance that the sign of a key does not align with the sign of buckets it participates in. Let $i\in [n]$ be a key. Suppose there are at least $50b$ keys $j$ such that $|v[j]| \ge |v[i]|$. Draw a random bucket $\boldsymbol{\mu}$ conditioning on $\mu[i]\ne 0$. For intuition, let us first consider the case that each coordinate of $\boldsymbol{\mu}$ is \emph{independently} sampled. Then with probability $50\%$ there exists $j\in [n]$ such that $|v[j]| \ge |v[i]|$ and $\mu(j) \ne 0$. Conditioning on this, with probability $\frac{1}{4}$ we have both $\sign(\mu(j)\cdot v[j]) \ne \sign(\mu(i)\cdot v[i])$ and $\sign\left(\sum_{j'\in [n]\setminus \{j,i\}} \mu(j')\cdot v[j'] \right)\ne \sign(\mu(i)\cdot v[i])$. In this case, the sign of the bucket is ``pushed away'' from the direction of $v[i]$. Exploiting moment methods and Lemma~\ref{lemma:main-tech}, we can show a similar conclusion even when $\boldsymbol{\mu}$ is sampled from a limited-independence source. See the lemma below. 

\begin{lemma}\label{lemma:not-heavy}
For every $i\in [n]$, if there are at least $50b$ keys $j$ such that $|v[j]| \ge |v[i]|$, then it holds $\Pr_{\boldsymbol{\mu}}[c\cdot \mu[i] > 0\mid \mu[i] \ne 0] \le 59/60$ and $\Pr_{\boldsymbol{\mu}}[c\cdot \mu[i] < 0\mid \mu[i] \ne 0] \le 59/60$.
\end{lemma}

\begin{proof}
We assume the vector $\boldsymbol{v}$ is sorted in decreasing order of  absolute values. Namely, $|v[1]|\ge |v[2]| \ge \dots \ge |v[n]|$, and we are proving for an item with index $i > 50 k$. This assumption is only for the ease of presentation.

W.L.O.G.\ we condition on $\mu[i]=1$. A similar argument gives the same bound for the case of $\mu[i]= -1$, and one can use the tower rule over two cases and conclude the proof. From now on, we will always condition on $\mu[i] = 1$.

Let $c_{-i}=\sum_{i'\ne i} \mu[i'] \cdot v[i]$. We want to show lower bounds for $\Pr[c_{-i} \le -|v[i]|]$ and $\Pr[c_{-i} \ge |v[i]|]$. In the following we only lower bound $\Pr[c_{-i} \le -|v[i]|]$. The other case is analogous.

Let us define a random variable $X = \sum_{j=1}^{50\cdot b} \mathbbm{1}_{\mu[j] \ne 0}$. That is, we consider the $50\cdot k$ keys of largest magnitude and let $X$ specify how many among them are hashed into the bucket. Then one has
\[
\mathbb{E}[X]  =  50, \quad \Var[X]\le 50.
\]
Let $\calE$ denote the event that $X\ge 25$. Applying a Chebyshev tells us that $\Pr[\calE] \ge 1-\frac{50}{2500} > \frac{9}{10}$.

Note that $X$ and $\calE$ depend only on $h$. Hence, conditioning on $\calE$, $h$ and $\mu[i]=1$, the function $\sigma$ is still $4$-wise independent. Now let us consider $c_{-i}$ (which is now a random variable depending on $\sigma$). In the following, we consider $\Ex_{\sigma}[c_{-i}^p\mid \calE, h]$ for $p\in \{1,2,3,4\}$. First, for $p = 1$ we have
\[
\begin{aligned}
& \Ex_{\sigma}[c_{-i}\mid \calE, h] = \sum_{i'\ne i} v[i] \cdot \Ex_{\sigma} [\mu[i']] = 0. \\
\end{aligned}
\]
For $p = 2$ we have
\[
\begin{aligned}
\Ex_{\sigma}[c_{-i}^2\mid \calE, h] 
&= \sum_{i_1,i_2\in [n]\setminus \{i\} } v[i_1]\cdot v[i_2] \cdot \Ex_{\sigma} [\mu[i_1] \mu[i_2]]  \\
&= \sum_{i_1\in [n]\setminus \{i\} } v[i_1]^2 \cdot \mathbbm{1}_{h(i_1)\ne 0} \\
&\ge 25\cdot v[i]^2.
\end{aligned}
\]
The second line is due to $\sigma(i_1)$ and $\sigma(i_2)$ are independent for $i_1\ne i_2$. The last line is due to our assumption that $X\ge 25$. Then, for $p = 3$ we have
\[
\begin{aligned}
\Ex_{\sigma}[c_{-i}^3\mid \calE, h] 
&= \sum_{i_1,i_2,i_3\in [n]\setminus \{i\} } v[i_1]\cdot v[i_2] \cdot v[i_3] \cdot \Ex_{\sigma} [\mu[i_1] \mu[i_2] \mu[i_3]]  = 0.
\end{aligned}
\]
The second equality holds because for every tuple $(i_1,i_2,i_3)$, there must be an index $i\in [n]$ that appears for \emph{odd} times among $i_1, i_2, i_3$. Suppose $i$ appears $t\in \{1,3\}$ times. We observe that $\Ex_{\sigma}[\mu[i]^t] = 0$. Since $\sigma$ is $3$-wise independent, it follows that $\Ex_{\sigma}[\mu[i_1]\cdot \mu[i_2]\cdot \mu[i_3]] = \Ex[\mu[i]^t] =  0$. Finally, we consider $p=4$, we have:
\[
\begin{aligned}
\Ex_{\sigma}[c_{-i}^4\mid \calE, h] 
&= \sum_{i_1,i_2,i_3,i_4\in [n]\setminus \{i\} } v[i_1]\cdot v[i_2] \cdot v[i_3] \cdot v[i_4] \cdot \Ex_{\sigma} [\mu[i_1] \mu[i_2] \mu[i_3] \mu[i_4]].
\end{aligned}
\]
Following a similar reasoning as we have done above, we observe that a tuple $(i_1, i_2, i_3, i_4)$ may have non-zero contribution, only if there are no indices that appear odd times among $(i_1,i_2,i_3,i_4)$. Then, at least one of the following three conditions must hold: (1) $i_1=i_2,i_3=i_4$, (2) $i_1=i_3,i_2=i_4$ or (3) $i_1=i_4,i_2=i_3$. Therefore, we have:
\begin{align*}
\Ex_{\sigma}[c_{-i}^4\mid \calE, h] 
&\le 3\cdot  \sum_{i_1,i_2\in [n]\setminus \{i\} } v[i_1]^2\cdot v[i_2]^2 \cdot \Ex_{\sigma} [\mu[i_1]^2 \mu[i_2]^2] \\
&\le 3 \sum_{i_1,i_2\in [n]\setminus \{i\} } v[i_1]^2\cdot v[i_2]^2 \mathbbm{1}_{h(i_1)\ne 0} \mathbbm{1}_{h(i_2)\ne 0} \\
&\le 3\left( \sum_{i_1\in [n]\setminus \{i\} } v[i_1]^2 \cdot \mathbbm{1}_{h(i_1)\ne 0} \right)^2 \\
&\le 3\Ex_{\sigma}[c_{-i}^2\mid \calE, h]^2.
\end{align*}

In summary, we have the following.
\[
\begin{aligned}
& \Ex_{\boldsymbol{\mu}}[c_{-i}\mid \calE, h] = \Ex_{\boldsymbol{\mu}}[{c_{-i}}^3] =  0, \\
& \Ex_{\boldsymbol{\mu}}[{c_{-i}}^2 \mid \calE, h] \ge 25 {v[i]}^2,\\
& \Ex_{\boldsymbol{\mu}}[{c_{-i}}^4 \mid \calE, h] \le 3\cdot \Ex[{c_{-i}}^2 \mid \calE, h]^2. \\
\end{aligned}
\]
By Corollary \ref{coro:main-tech}, it follows that
\[
\Pr[c_{-i}\le -v[i]\mid \calE, h]\ge \frac{2}{100}.
\]
Therefore, we have
\[
\Pr_{h,\sigma}[c_{-i}\le -v[i]] \ge \Pr_{h}[\calE]\cdot\Pr[c_{-i}\le -v[i]\mid \calE,h] \ge \frac{2}{100}\cdot \frac{9}{10}\ge \frac{1}{60},
\]
as desired.
\end{proof}

\begin{corollary}\label{coro:not-heavy}
Let $\boldsymbol{v}\in \mathbb{R}^n$ be a vector. For every $i\in [n]$, if $|v[i]| \le \| \boldsymbol{v}_{\tail[50b]}\|$, then $\Pr_{\boldsymbol{\mu}}[c\cdot \mu[i]>0 \mid \mu[i]\ne 0] \le 59/60$ and $\Pr_{\boldsymbol{\mu}}[c\cdot \mu[i] < 0 \mid \mu[i] \ne 0] \le 59/60$.
\end{corollary}

\begin{proof}
$|v[i]| \le \| \boldsymbol{v}_{\tail[50b]} \|$ implies $|v[i]|$ is less than or equal to at least $50b$ other $v[j]$'s. The conclusion follows from Lemma~\ref{lemma:not-heavy} immediately.
\end{proof}

To wrap-up, we combine Lemma~\ref{lemma:heavy} and Corollary~\ref{coro:not-heavy} to justify Lemma~\ref{lemma:heavyornot} with constants $C_a = 50, C_b = 30, \tau_a = \frac{59}{60}$ and $\tau_b = \frac{399}{400}$.

\end{document}